\newcommand{\subsubsubsection}[1]{\paragraph{#1}\mbox{}\\}
\def\arxiv{1}
\definecolor{lightlightgray}{rgb}{.9,.9,.9}
\newcolumntype{g}{>{\columncolor{lightgray}}c}
\newcommand{\vect}[1]{\mathbf{\overrightharp{#1}}}
    \newcommand{\href}[2]{#2}
\newcommand{\carpet}{{\boldsymbol S}}
\newcommand{\grout}{\texttt{gro\-u\-t}}
\newcommand{\initializer}{\texttt{ini\-t\-i\-a\-l\-i\-z\-e\-r}}
\newcommand{\indicating}{\texttt{ind\-ica\-tor}}
\newcommand{\starter}{\texttt{sta\-rt\--ga\-d\-get}}
\newcommand{\crawler}{\texttt{c\-ra\-w\-l\-e\-r}}
\newcommand{\stagebinding}{\texttt{st\-a\-g\-e-bin\-ding}}
\newcommand{\fractal}[2]{\ensuremath{#1^{#2}}}
\newcommand{\fracasm}[3]{\ensuremath{#1^{#2}_{#3}}}
\newcommand{\Z}{\mathbb{Z}}
\newcommand{\N}{\mathbb{N}}
\newcommand{\termasm}[1]{\mathcal{A}_{\Box}[\mathcal{#1}]}
\newcommand{\prodasm}[1]{\mathcal{A}[\mathcal{#1}]}
\newcommand{\calT}{\mathcal{T}}
\newcommand{\bX}{{\boldsymbol X}}
\begin{document}

\title{Self-Assembly of $4$-sided Fractals in the Two-handed Tile Assembly Model}

\author{
 Jacob Hendricks%
    \thanks{Department of Computer Science and Information Systems, University of Wisconsin - River Falls,
    \protect\url{jacob.hendricks@uwrf.edu}}
\and
 Joseph Opseth%
    \thanks{Department of Mathematics, University of Wisconsin - River Falls,
    \protect\url{joseph.opseth@my.uwrf.edu}}
}

\institute{}

\date{}

\maketitle

\begin{abstract}
We consider the self-assembly of fractals in one of the most well-studied models of tile based self-assembling systems known as the Two-handed Tile Assembly Model (2HAM). In particular, we focus our attention on a class of fractals called discrete self-similar fractals (a class of fractals that includes the discrete Sierpi\'{n}ski carpet). We present a 2HAM system that finitely self-assembles the discrete Sierpi\'{n}ski carpet with scale factor $1$. 
Moreover, the 2HAM system that we give lends itself to being generalized and we describe how this system can be modified to obtain a 2HAM system that finitely self-assembles one of any fractal from an infinite set of fractals which we call \emph{$4$-sided fractals}. 
The 2HAM systems we give in this paper are the first examples of systems that finitely self-assemble discrete self-similar fractals at scale factor $1$ in a purely growth model of self-assembly.
Finally, we show that there exists a \emph{$3$-sided fractal} (which is not a tree fractal) that cannot be finitely self-assembled by any 2HAM system.
\end{abstract}

\section{Introduction}

The study of fractals has both a mathematical and a practical basis, as patterns similar to these recursively self-similar patterns occur in nature in the form of circulatory systems and branch patterns.
Evidently many fractals found in nature are the result of a process where a simple set of rules dictating how individual basic components (such as individual molecules) interact to yield larger complexes with recursive self-similar structure. One approach to understanding this process is to model such a process with artificial self-assembling systems. 

One of the first and also one of the most studied mathematical models of self-assembling systems is Winfree's abstract Tile Assembly Model (aTAM)~\cite{Winf98} where individual autonomous components are represented as tiles with glues on their edges. The aTAM was intended to model DNA tile self-assembly, where tiles are implemented using DNA molecules.
In the context of DNA tile self-assembly, there have been two main reasons for considering the self-assembly of fractals. First, in~\cite{FujHarParWinMur07} and~\cite{RothTriangles}, DNA-based tiles are used to self-assemble the Sierpi\'{n}ski triangle, showing the potential for DNA tile self-assembly to be used for the controlled formation of complex nanoscale structures. Second, there are many proposed theoretical models (and generalizations of these models) of DNA tile self-assembly (see~\cite{Winf98, AGKS05g, jSignals, JonoskaKarpenkoSignals, OneTile, RNaseSODA2010, DotKarMas10, SFTSAFT, KS07} for some examples). While mathematical notions of simulation relations between systems in such models continue to further elucidate how these various models relate~\cite{Versus, j2HAMIU, IUNeedsCoop, IUSA, DirectedNotIU, WoodsMeunierSTOC}, many ``benchmark'' problems have also been introduced. These benchmarks include the efficient self-assembly of squares and/or general shapes~\cite{RNAPods, SummersTemp, RotWin00, SingleNegative}, the capacity to perform universal computation~\cite{Polyominoes, Polygons, jSignals, SingleNegative, RTAM, CookFuSch11, LSAT1}, and the self-assembly of fractals~\cite{jSADSSF, KL09, TreeFractals, jKautzShutters, RoPaWi04, LutzShutters12, STAM-fractals}. In addition to providing a way of benchmarking models of self-assembly, studying the self-assembly of fractals has the potential to lead to new techniques for the design self-assembling systems.

When considering the self-assembly of discrete self-similar fractals (dssf's) such as the Sierpi\'{n}ski triangle one can consider either ``strict'' self-assembly, wherein a shape is made by placing tiles only within the domain of the shape, or ``weak'' self-assembly where a pattern representing the shape forms as part of a complex of tiles that contains specially labeled tiles corresponding to points in the shape and possibly additional tiles not corresponding to points of the shape. 
Previous work (including~\cite{jSADSSF, KL09, TreeFractals, jKautzShutters, RoPaWi04, LutzShutters12}) has shown the difficulty of strict self-assembly of dssf's in the aTAM as no nontrivial dssf has been shown to self-assemble in the strict sense.
In fact, the Sierpi\'{n}ski triangle is known to be impossible to self-assemble in the aTAM~\cite{jSSADST}; though it is possible to design systems which ``approximate'' the strict self-assembly of fractals~\cite{jSSADST,LutzShutters12,jSADSSF}. Interestingly, it is unknown whether there exists a dssf which strictly self-assembles in the aTAM. This includes the Sierpi\'{n}ski carpet dssf.
In this paper, we consider 2HAM systems which ``finitely'' self-assemble dssf's. Finite self-assembly was defined in~\cite{Versus} to study 2HAM systems that self-assemble infinite shapes (e.g. dssf's). Intuitively, a shape $S$, finitely self-assembles in a tile assembly system if any finite producible assembly of the system can always continue to self-assemble into the shape $S$ and the shape of any finite producible assembly is a subshape of $S$. See~\cite{MHAM,Versus,HFractals} for results which use the definition of finite self-assembly.

While the aTAM models single tile attachment at a time (or step in the self-assembly process), a more generalized model and another of the most studied models of self-assembly called the 2-Handed Assembly Model \cite{AGKS05g} (2HAM, a.k.a. Hierarchical Assembly Model) allows pairs of large assemblies to bind together. 
Given the hierarchical nature of the self-assembly process modeled by the 2HAM, we consider employing this process to finitely self-assemble dssf's. In~\cite{MHAM} it is shown that the Sierpi\'{n}ski carpet finitely self-assembles in the 2HAM at temperature $2$, but with scale factor $3$. That is, instead of finitely self-assembling a structure with tiles corresponding to the points of the Sierpi\'{n}ski carpet, the structure that self-assembles contains a $3$ by $3$ block of tiles that corresponds to a single point of the Sierpi\'{n}ski carpet. Here we show that not only does the Sierpi\'{n}ski carpet finitely self-assemble with scale factor $1$, but an infinite class of fractals, which we call the $4$-sided fractals, finitely self-assembles at temperature $2$ in the 2HAM with scale factor $1$. Intuitively, $4$-sided fractals are fractals that have a generator (the set of points in the first stage of the fractal) such that the generator is connected and consists of a rectangle of points ``on the boundary'' of the generator as well as points ``inside'' this rectangle. In other words, a $4$-sided fractal is a fractal with a generator that contains all $4$ sides and one can define $0$, $1$, $2$, and $3$-sided fractals analogously. (Definitions are given in Section~\ref{sec:prelims}.) Moreover, we show that there exists a $3$-sided fractal that cannot be finitely self-assembled by any 2HAM system at any temperature. %

Theorem~\ref{thm:four-sided} implies that one of the most well-known dssf's (the Sierpi\'{n}ski carpet) finitely self-assembles in one of the simplest and most studied models of self-assembly, the 2HAM.  It should be noted that in~\cite{STAM-fractals} it is shown that \emph{any} dssf can finitely self-assemble in the Signal-passing Tile Assembly Model (STAM) where tiles can change state and even disassociate from an existing assembly, ``breaking'' an assembly into two disconnected assemblies. That is, given any dssf, there is a STAM system that finitely self-assembles this fractal. Additionally, in~\cite{STAM-fractals} it is shown that a large class of fractals finitely self-assembles in the STAM even with temperature restricted to $1$. In a model similar to the STAM, the Active Tile Assembly Model~\cite{JonoskaSignals1}, infinite, self-similar substitution tiling patterns which fill the plane have been shown to assemble~\cite{JonoskaSignals2}. This may be considered a testament to the power of active tiles. Here we show that it is still possible to finitely self-assemble an infinite class of fractals in the 2HAM even though tiles are not active and disassociation is not allowed. 

\section{Preliminaries}\label{sec:prelims}

Here we provide informal descriptions of the 2HAM. For more details see~\cite{AGKS05g,PatitzSurvey, Versus}. 
Definitions and notation in Section~\ref{sec:2ham-def} are based on definitions from~\cite{Versus, UniversalShapeReplicators, MHAM}. Similar definitions and notation for the 2HAM can also be found in~\cite{j2HAMIU, j2HAMSim, NegativeGluesShapes}. We restate the definitions in the context of this paper for the sake of completeness and convenience. Likewise, in Section~\ref{sec:fractal-defs}, we also give the definition of discrete self-similar fractals similar to the definitions found in~\cite{TreeFractals} and~\cite{STAM-fractals}.

\subsection{Informal description of the 2HAM}\label{sec:2ham-def}

Let $U_2 = \{ (0,1), (0,-1), (1,0), (-1,0) \}$ be the set of all unit vectors in $\Z^2$. A \emph{grid graph} is a graph $G = (V, E)$ such that $V\subseteq \Z^2$, and for any edge $\{\vect{a}, \vect{b}\}\in E$, $\vect{a} - \vect{b} \in U_2$.

\subsubsection{Tile types, tiles, and supertiles}

A \emph{tile type} is a unit square with $4$ well defined sides that each correspond to a vector in $U_2$ such that each side of the square has an associated \emph{glue}. A glue is defined by a \emph{label} and a \emph{strength}. A glue label is a string of symbols over some fixed alphabet, and a glue strength is a non-negative integer. Moreover, a tile type has an associated string of symbols over some fixed alphabet called a \emph{label}. A \emph{positioned tile} is a pair consisting of a tile type and a point in $\Z^2$ called a \emph{tile location}. A \emph{tile} is the set of all translations in $\Z^2$ of a positioned tile. We refer to the side of a tile type (or tile) corresponding to $(0,1)$, $(0,-1)$, $(1,0)$, or $(-1,0)$ as the north, south, east, or west edge of the tile type (or tile) respectively.

Let $T$ be a finite set of tile types. A \emph{positioned supertile over $T$} is a set of positioned tiles with tile types in $T$ such that the positioned tiles have distinct tile locations. For a positioned supertile $A$ over $T$, we let $|A|$ denote the cardinality of $A$. A \emph{supertile over $T$} is the set of all translations of a positioned supertile over $T$. For a positioned supertile $A$, note that cardinality is invariant under translation. Therefore, for a supertile $\alpha$ over $T$, we let $|\alpha|$ denote the cardinality of any positioned supertile in $\alpha$ and note that this is well-defined. When $T$ is clear from context, we will shorten the phrase ``supertile over $T$'' to simply ``supertile''. For two adjacent tiles $t_1$ and $t_2$ in a positioned supertile over $T$ and $s$ in $\N$, we say that $t_1$ and $t_2$ \emph{interact with strength $s$} if the glues on their abutting sides are equal\footnote{glue labels are equal and glue strengths are equal} and these glues have non-zero strengths equal to $s$. 

Let $A$ be a positioned supertile over $T$. The \emph{binding graph of $A$} is the weighted undirected grid graph $G = (V,E)$ such that 1) $V$ is the set of all tile locations of tiles in $A$, and 2) $E$ is the set of all unordered pairs of vertices $v_1$ and $v_2$ in $V\times V$ with weight $w\in \N$ such that the two tiles in $A$ with tile locations equal to $v_1$ and $v_2$ interact with strength $w$. Note that a binding graph is a grid graph. For a non-negative integer $\tau$, $A$ is \emph{$\tau$-stable} if for every cut $C$ of the binding graph of $A$, the sum of the weights of the edges in the cut-set of $C$ is greater than or equal to $\tau$. A supertile $\alpha$ over $T$ is $\tau$-stable if it contains a positioned supertile over $T$ that is $\tau$-stable. Note that if $A$ is $\tau$-stable, then any translation of $A$ is $\tau$-stable. Therefore, the notion of $\tau$-stable for supertiles is well-defined. 

Let $A$, $B$, and $C$ be positioned supertiles over $T$ such that $A$ and $B$ are $\tau$-stable. We say that $A$ and $B$ are \emph{$\tau$-combinable into $C$} if $C = A\cup B$ and $C$ is $\tau$-stable. Moreover, let $\alpha$, $\beta$, and $\gamma$ be supertiles over $T$ such that $\alpha$ and $\beta$ are $\tau$-stable. $\alpha$ and $\beta$ are \emph{$\tau$-combinable into $\gamma$} if there exists $A$ in $\alpha$, $B$ in $\beta$, and $C$ in $\gamma$ such that $A$ and $B$ are $\tau$-combinable into $C$. Note that if $\alpha$ and $\beta$ are $\tau$-combinable into $\gamma$, then $\gamma$ is $\tau$-stable. We also define the \emph{subassembly} relation between two supertiles as follows. For supertiles $\alpha$ and $\beta$, $\alpha$ is a subassembly of $\beta$ provided that there exist positioned supertiles $A$ in $\alpha$ and $B$ in $\beta$ such that $A\subseteq B$.

\subsubsection{Tile assembly systems and assembly sequences}

A \emph{tile assembly system} (TAS) in the 2HAM is defined to be an ordered pair $\mathcal{T} = (T, \tau)$ such that $T$ is a finite set of tile types, and $\tau$ is a positive integer which we call the \emph{temperature} of $\mathcal{T}$. Let $\mathcal{T}=(T,\tau)$ be a TAS. A \emph{state} $S$ is a multiset of $\tau$-stable supertiles over $T$ such that the multiplicity of any supertile in $S$ is in $\N\cup \{\infty \}$. Let $S_0$ and $S_1$ be states. $S_0$ \emph{transitions} to $S_1$ at temperature $\tau$ if 1) there exists a supertile $\gamma$ such that $S_1 = S_0 \cup \{\gamma\}$, and 2) there exists $\alpha$ and $\beta$ in $S_0$ such that $\alpha$ and $\beta$ are $\tau$-combinable into $\gamma$. 

Let $k$ be in $\N\cup \{\infty\}$. A \emph{state sequence} of $\mathcal{T}$ is a sequence of states $\vec{S} = \langle S_i \rangle_{i = 0}^k$ such that for all $i$, $S_i$ transitions to $S_{i+1}$. A state sequence is called \emph{nascent} if $S_0$ is the multiset consisting of infinitely copies of tiles, one tile for each tile type in $T$. For a producible supertile $\alpha$, an \emph{assembly sequence} for $\alpha$ is a sequence of supertiles $\vec{\alpha} = \langle \alpha_i \rangle_{i = 0}^k$ such that there exists a state sequence $\vec{S} = \langle S_i \rangle_{i = 0}^k$ such that for all $i < k$, $\alpha_i \in S_i$ and there exists a supertile $\beta_i \in S_i$ such that $\alpha_i$ and $\beta_i$ are $\tau$-combinable into $\alpha_{i+1}$. Such an assembly sequence is called \emph{nascent} if $\vec{S}$ is nascent. The \emph{result} of an assembly sequence $\vec{\alpha} = \langle \alpha_i \rangle_{i = 0}^k$ is the unique supertile $\rho$ such that there exists $R\in \rho$ and $A_i \in \alpha_i$ such that $R=\cup_{0\leq i<k} A_i$, and for each $i$ such that $0\leq i < k$, $\alpha_i$ is a subassembly of $\rho$.  

\subsubsection{Producible supertiles and shapes}

Given a TAS $\mathcal{T}=(T,\tau)$, a supertile is \emph{producible} if it is the result of a nascent assembly sequence. A producible supertile $\alpha$ is \emph{terminal} if for any producible supertile $\beta$ there does not exist a $\tau$-stable supertile $\gamma$ such that $\alpha$ and $\beta$ are $\tau$-combinable into $\gamma$. We refer to the set of producible supertiles for $\mathcal{T}$ as $\prodasm{T}$ and the set of terminal supertiles for $\mathcal{T}$ as $\termasm{T}$. 

We refer to a set of points in $\Z^2$ as a \emph{shape}. For a shape $X$, a supertile $\alpha$ \emph{has shape $X$} if there exists a positioned supertile $A$ in $\alpha$ such that the set of tile locations of positioned tiles in $A$ is equal to $X$. Given a TAS $\mathcal{T}=(T,\tau)$, for an infinite shape $X\subseteq \Z^2$, we say that $\mathcal{T}$ \emph{finitely self-assembles} $X$ if for every finite producible supertile $\alpha$ of $\mathcal{T}$, $\alpha$ has the shape of a subset of points in $X$, and there exist an assembly sequence $\vec{\alpha} = \langle \alpha_i \rangle_{i = 0}^{\infty}$ such that $\alpha_0 = \alpha$ and the result of $\vec{\alpha}$ has shape $X$. In this paper we consider finite self-assembly of dssf's. 

\subsection{Discrete Self-Similar Fractals}\label{sec:fractal-defs}

In order to state the main theorem, we need to provide a few definitions. The definition of a discrete self-similar fractals and some of the notation used here also appears in~\cite{jSADSSF, TreeFractals, STAM-fractals}. First we introduce some notation.

Given $V\subseteq \Z^2$, the \emph{full grid graph} of $V$ is the undirected graph $G_V^f = (V, E)$, such that for all $\vec{x}, \vec{y} \in V$, $\{\vec{x}, \vec{y}\}\in E$ iff $||\vec{x} - \vec{y}|| = 1$.

Let $\N_g$ denote the subset $\{ 0, \dots, g-1\}$ of $\N$, and let $\N^2_g = \N_g \times \N_g$. For $g\in \N$ and $G\subseteq \N_g^2$, let $l_G$, $r_G$, $b_G$, and $t_G$ denote the integers: $l_G = \min_{(x,y)\in G} x$, $r_G = \max_{(x,y)\in G} x$, $b_G = \min_{(x,y)\in G} y$, and $t_G = \max_{(x,y)\in G} y$.
Moreover, let $w_G = r_G-l_G + 1$ and $h_G = t_G-b_G + 1$ denote the \emph{width} and \emph{height} of $G$ respectively.
Finally, let $L_G = \{(l_G, y) \mid b_G \leq y \leq t_G \}$, $R_G = \{(r_G, y) \mid b_G \leq y \leq t_G \}$, $T_G = \{(x, t_G) \mid l_G \leq x \leq r_G \}$, and $B_G = \{(x, b_G) \mid l_G \leq x \leq r_G \}$. In other words, $L_G$, $R_G$, $T_G$, and $B_G$ are the sets of points belonging to left, right, top, and bottom line segments of a ``bounding box'' of $G$.
Finally, if $A$ and $B$ are subsets of $\N^2$ and $(x,y)\in \N^2$, then $A+(x,y)B = \{(x_a,y_a) + (x\cdot x_b, y\cdot y_b) \mid (x_a, y_a)\in A\text{ and } (x_b,y_b)\in B\}$.
First we give the definition of a discrete self-similar fractal. 

\begin{definition}\label{def:dssf}
Let $\bX \subset \N^2$. We say that $\bX$ is a \emph{discrete self-similar fractal} (or \emph{dssf} for short), if there is a set $\{ (0,0) \} \subset G \subset \N^{2}_{g}$ with at least one point in every row and column, such that 
\begin{enumerate}
\item the full grid-graph of $G$ is connected, 
\item $w_G > 1$ and $h_G > 1$,
\item $G \subsetneq \N_{w_G}\times \N_{h_G}$, and
\item ${\boldsymbol X} = \cup^{\infty}_{i=1} X_i$, where \fractal{X}{i}, the \emph{$i^{th}$ stage} of ${\boldsymbol X}$, is defined by $\fractal{X}{1} = G$ and $\fractal{X}{i+1} = \fractal{X}{i} +(w_G^i, h_G^i)G$. 
\end{enumerate} 
Moreover, we say that $G$ is the \emph{generator} of ${\boldsymbol X}$. 
\end{definition}

A connected discrete self-similar fractal is one in which every component is connected in every stage, i.e. there is only one connected component in the grid graph formed by the points of the shape.

\begin{definition}\label{def:n-sided}
Let $n\in \{0,1,2,3,4\}$, $1 < g \in \N$ and $\bX \subset \N^2$. 
We say that $\bX$ is a \emph{$n$-sided fractal} if $\bX$ is a discrete self-similar fractal with generator $G$ such that: 
\begin{enumerate}
\item the full grid graph of $G$ is connected,
\item $S\cap G = S$ for at least $n$ distinct sets $S$ in \\ $\{L_G,R_G, T_G,B_G\}$.
\end{enumerate}

\end{definition}

Intuitively, the second condition in Definition~\ref{def:n-sided} is saying that the fractal generator contains all points of at least $n$ of the left, right, top, and bottom line segments of a ``bounding box'' of $G$. In particular, the generator of a $4$-sided fractal contains all of the points along the left, right, top, and bottom ``sides'' of the fractal generator.   
Finally, for a fractal $\bX$ with generator $G$, an enumeration of the points in a generator $G = \{\vec{v}_i\}_{i=1}^{|G|}$, and $j\in \N$, the stages of $\bX$ are $\fractal{X}{1} = G$ and $\fractal{X}{j+1} = \fractal{X}{j} + (w_G^j, h_G^j)G$. For $i\in \N$ such that $1\leq i \leq |G|$, we call the points of the $j+1$ stage given by $X_j + (w_G^j, h_G^j)\vec{v}_i$ the \emph{$j^{th}$ stage at position $i$}. For dssf $\bX$ and $i\in\N$ such that $i\geq 1$, we let $\fractal{X}{i}$ denote the $i^{th}$ stage of $\bX$.

\section{Self-assembly of Four Sided Fractals}\label{sec:four-sided-fractals}

In this section we show how to finitely self-assemble the class of $4$-sided discrete self-similar fractals in the 2HAM with scale factor of $1$ (i.e. no scaling is required). The most well-known example of a $4$-sided fractal is the Sierpi\'{n}ski carpet.

\begin{theorem}\label{thm:four-sided}
Let $\bX$ be a $4$-sided fractal. Then, there exists a 2HAM TAS $\calT_{\bX} = (T, 2)$ that finitely self-assembles $\bX$. Moreover, if $G$ is the generator for $\bX$ and $|G| = g$, $|T|$ is $O(g^3)$.
\end{theorem}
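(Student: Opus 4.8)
The plan is to fix the $4$-sided fractal $\bX$ with generator $G$, write $g = |G|$ and $w = w_G \le g$, $h = h_G \le g$, and to build a temperature-$2$ tile set $T$ whose assembly proceeds strictly hierarchically: for each $j$ it produces a family of \emph{stage-$j$ supertiles}, each of shape \fractal{X}{j}, and then combines $g$ of these --- positioned exactly as the points of $G$, scaled by $(w^j,h^j)$ --- into a stage-$(j+1)$ supertile. The base case is to self-assemble the generator $G$ itself: since the full grid graph of $G$ is connected, a fixed assembly order for its points --- carried out by \starters{}, an \initializer{}, and \grout{} tiles --- lets $G$ self-assemble at temperature $2$ from $O(g)$ tile types. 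The heart of the construction, and the reason it is not routine, is to arrange that a stage-$(j+1)$ supertile, once assembled, is \emph{externally indistinguishable} from a stage-$j$ supertile --- it presents the same ``interface'' of exposed glues along its outer boundary --- so that one finite tile set can iterate the combination step without bound.

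To maintain that invariant I would keep all of a stage-$j$ supertile's distinguishing data on its outermost rectangular ring, which by $4$-sidedness is a genuine $w^j \times h^j$ rectangle (an easy induction shows every stage of a $4$-sided fractal has a complete bounding-box boundary, so that adjacent copies always share full edges). The body --- \fractal{X}{j} with this ring removed --- is assembled from the stage-$(j-1)$ pieces via \stagebinding{} glues present only where two adjacent generator positions meet; because the full grid graph of $G$ is connected there are only $O(g)$ such generator-adjacencies, and laying mutually distinct glue sequences along each shared edge, of length at most $\max(w,h) = O(g)$, forces $g$ pieces to combine in exactly the generator pattern and in no other way. This is simultaneously what rules out spurious over-growth, since the glues an ``extra'' copy would need to attach do not occur anywhere on a finished supertile's boundary. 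Once a body is complete, a \crawler{} launched at a distinguished corner walks around the ring, turning via a \corner{} at each of the exactly four corners and depositing \presenting{} and \indicating{} tiles along the four sides, thereby re-exposing the standard stage interface on the larger supertile in one of $g$ variants --- one per generator position --- so that all $g$ siblings needed for the next stage become producible. Iterating without bound gives an infinite assembly sequence whose result has shape $\bX$. Temperature $2$ is essential throughout: the forced combinations use weak glues that must cooperate, with strengths chosen so that the generator pattern is the only $2$-stable way to fit pieces together.

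Proving that $\calT_{\bX}$ \emph{finitely} self-assembles $\bX$ then splits into two parts. First, by induction on the structure of producible supertiles, every finite producible supertile has the shape of a subset of $\bX$: the generator phase tiles only $G$; the \stagebinding{} and \crawler{} phases add only points interior to \fractal{X}{j}'s bounding box or on edges shared with the forced neighbouring copies, all of which are points of $\bX$; and, because every binding glue is position-specific, no out-of-pattern attachment is ever $2$-stable. Second, every finite producible supertile can be completed to one of shape $\bX$: any partial assembly can be finished within its current phase (complete the generator, complete the generator-pattern, or complete the crawler), have its interface installed, and be combined with its $g-1$ sibling supertiles --- always producible from fresh copies of the tiles --- to build the enclosing stage, after which one climbs the hierarchy; the process is confluent enough that no producible configuration is ever stuck. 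For the tile count, the generator body and its ring cost $O(g)$, while the dominant term is the position- and variant-indexed \crawler{}/\presenting{}/\stagebinding{} machinery, which is indexed by roughly a triple --- a generator position, an offset up to $O(g)$ along the incident side, and a bounded local phase --- together with the $O(g)$ interface variants, giving $|T| = O(g^3)$.

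I expect the main obstacle to be precisely this scale-$1$ re-standardization of the boundary. Because each generator point is a single tile, there is no interior room in which to route the signals that coordinate the hierarchy at larger scale factors (e.g. the scale-$3$ construction of~\cite{MHAM}); all of that bookkeeping must be done along one-tile-thick boundary rings by the crawlers. Verifying that every tile a crawler places lands on a point of $\bX$, that the interface it re-exposes is literally identical from one level to the next so that the hierarchical induction closes, and that the glue strengths are chosen so that at temperature $2$ the generator pattern is forced while every competing combination is forbidden, is where essentially all of the difficulty lies.
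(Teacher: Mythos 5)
Your overall blueprint (hierarchical combination of position-indexed pieces, with the position data carried on a one-tile-thick boundary ring deposited by crawler/\grout-like supertiles under cooperative temperature-$2$ binding) is the same as the paper's, but your central invariant is where the proposal breaks, and it is the opposite of what the paper actually does. You require that a finished stage-$j$ supertile have full shape \fractal{X}{j} and present an interface on its outermost ring that is \emph{identical} across stages. At scale factor $1$ this cannot be iterated: when $g$ full-\fractal{X}{j} supertiles combine in the generator pattern, the outermost ring of the resulting \fractal{X}{j+1} region is already completely tiled by the outward-facing portions of the constituents' rings, and those tiles carry the constituents' (level-$j$) position variants. Tiles in the 2HAM are passive, so your \crawler\ has no empty locations in which to ``re-expose the standard stage interface on the larger supertile'' and cannot overwrite the stale glues; your own description is internally inconsistent on this point, since you also say the body of a stage-$j$ supertile is \fractal{X}{j} minus its ring yet is assembled from pieces ``each of shape \fractal{X}{j-1}.'' The paper's construction resolves exactly this: each piece (e.g.\ \fracasm{C}{2}{i}, \fracasm{\Gamma}{s}{i}) is its stage shape \emph{minus} the outer perimeter ring --- the paper explicitly ``leaves room for \grout\ tiles'' --- and the ring is filled only when the next level's \grout\ attaches, at which moment the \grout\ class (one of $g$, chosen nondeterministically) writes the position-in-the-next-stage information into the ring.

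The second, related gap is the interface-identity claim itself. The paper deliberately makes the interfaces stage-\emph{dependent}: the pair of \stagebinding\ glues on a stage-$s$ piece sits at distance $3^{s-2}-1$ apart for the carpet, and in general the spacing of \indicating\ glues is a strictly increasing function of $s$, precisely so that two supertiles can bind iff they belong to the same stage (and the same \grout\ class). Finiteness of the tile set is obtained not by making stages look alike but by making the stage information \emph{positional}: the same finite set of \grout\ tile types is reused at every level because the previous level's \grout\ determines \emph{where} the indicator glues sit, not what their labels are. Under your identical-interface invariant, a stage-$j$ piece could bind into a slot intended for a stage-$(j+1)$ piece; you assert that ``no out-of-pattern attachment is ever $2$-stable'' because glues are position-specific, but that assertion contradicts the invariant and is given no argument, and you would then owe a proof that such cross-stage attachments preserve both shape-containment and completability --- the two halves of finite self-assembly --- which is nontrivial. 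Finally, your base case hard-codes only the generator with $O(g)$ tile types; the paper hard-codes stage $2$ minus its perimeter in $g$ position variants (this is the $O(g\cdot g^2)=O(g^3)$ term, and the Hamiltonian-path/hard-coding argument is what guarantees a core completes before anything can attach to it), and it is not clear that single-tile-scale stage-$1$ pieces can carry the required interface or be forced to complete first at temperature $2$.
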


We build intuition for a construction showing Theorem~\ref{thm:four-sided} by showing that the Sierpi\'{n}ski carpet finitely self-assembles in the 2HAM at scale factor $1$. We then describe the modifications needed to extend the construction for the Sierpi\'{n}ski carpet to all $4$-sided fractals. 

\subsection{The Sierpi\'{n}ski carpet construction overview}\label{sec:carpet}

The Sierpi\'{n}ski carpet dssf is the dssf with generator $G = \{(0,2),$ $(1,2),$ $(2,2),$ $(0,1),$ $(1,2),$ $(0,0),$ $(1,0),$ $(2,0)\}.$ Figure~\ref{fig:carpet1} depicts this generator, while Figures~\ref{fig:carpet2} and~\ref{fig:carpet3} depict the second and third stages of the dssf respectively. We denote the carpet dssf by $\carpet$ and for $i\in \N$, we denote the $i^{th}$ stage of $\carpet$ as \fractal{S}{i}. We enumerate the points of \fractal{S}{1} as depicted in Figure~\ref{fig:carpet1} and use this enumeration to reference the positions of some substage within a subsequent stage of the carpet. 

\begin{figure}[htp]
    \centering
    \begin{subfigure}[b]{0.14\textwidth}
    		\centering
        \includegraphics[width=.9in]{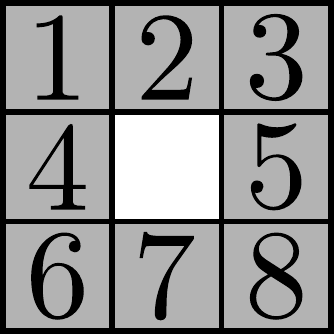}
        \caption{Stage $1$}
        \label{fig:carpet1}
    \end{subfigure}
    \quad\quad\quad\quad
    \begin{subfigure}[b]{0.14\textwidth}
		\centering        
        \includegraphics[width=1.1in]{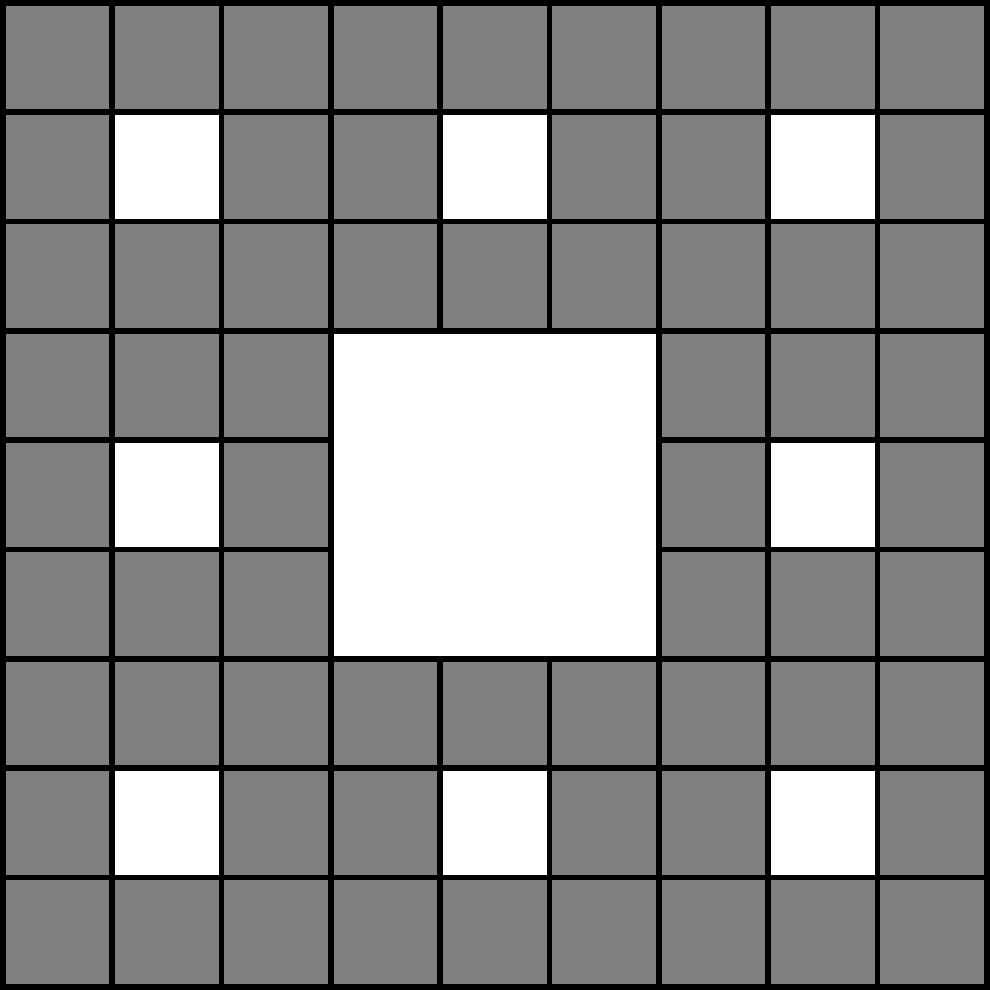}
        \caption{Stage $2$}
        \label{fig:carpet2}
    \end{subfigure}
      \begin{subfigure}[b]{0.27\textwidth}
		\centering        
        \includegraphics[width=1.7in]{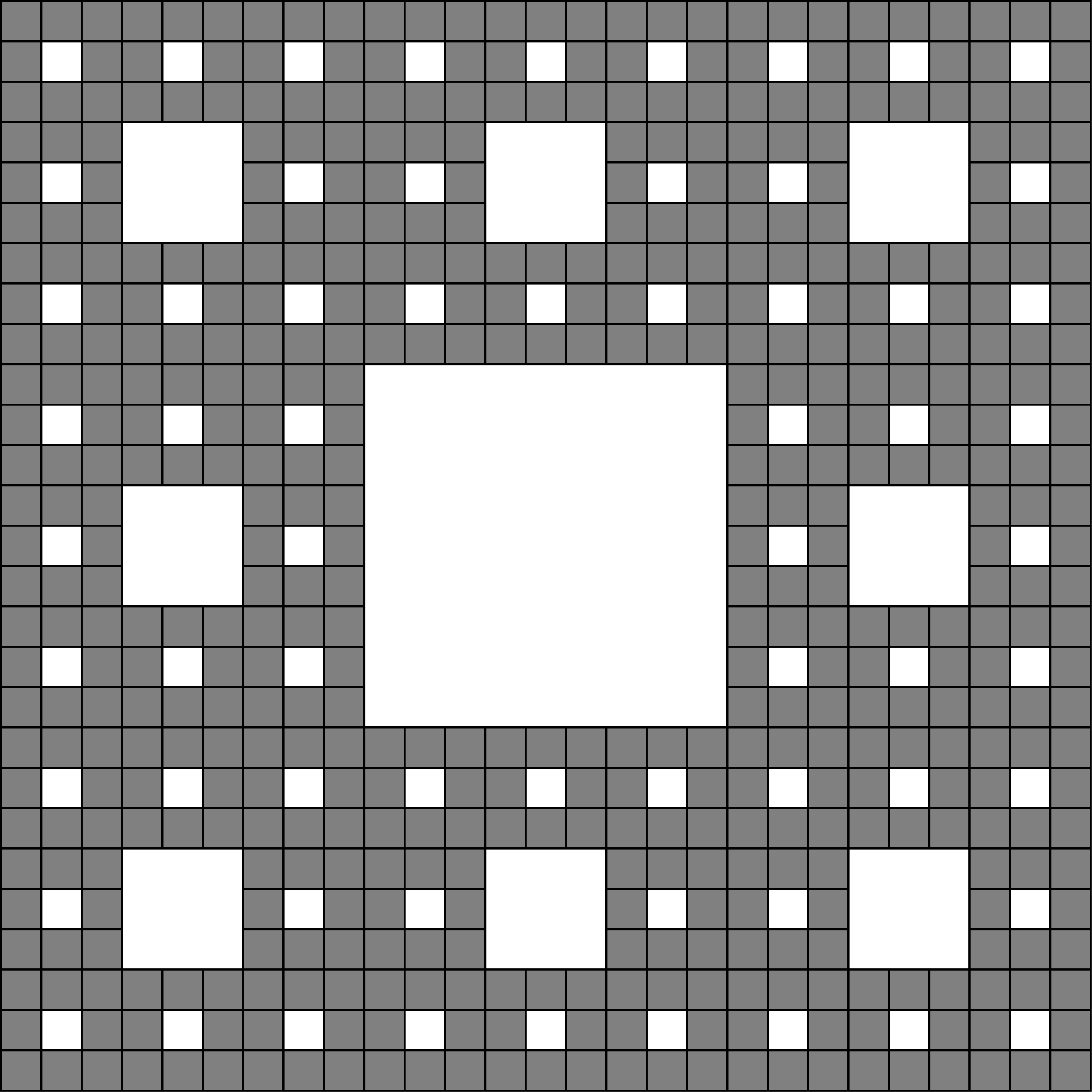}
        \caption{Stage $3$}
        \label{fig:carpet3}
    \end{subfigure}
    
    \caption{Three stages of the Sierpi\'{n}ski carpet}\label{fig:carpet-stages}
\end{figure}

\begin{figure*}[!htp]
    \centering
        \includegraphics[width=6in]{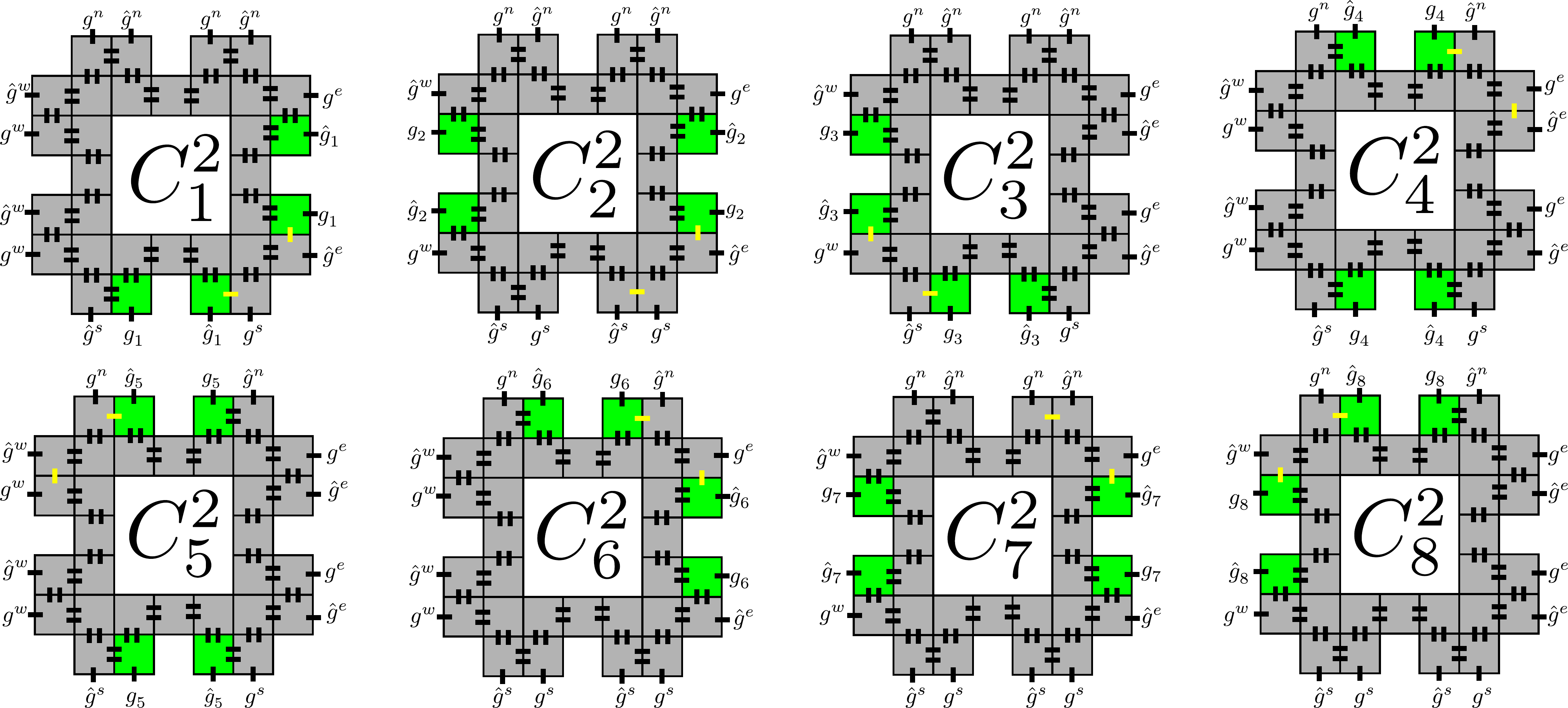}
    \caption{The tiles that self-assemble a stage $2$ supertile \fracasm{C}{2}{i}. The unlabelled strength $1$ and $2$ black and yellow glues shown on edges of two adjacent tiles in each of the $8$ supertiles are defined to have matching type. Moreover, these glues do not match any other glues of other tile types in $T$. In other words, tiles types have been ``hard-coded'' so that each of these $8$ supertiles self-assembles.} \label{fig:carpet2-glues}
\end{figure*}

We now describe the tile set, $T$, that is used to finitely self-assemble $\carpet$ in the 2HAM at temperature $\tau = 2$ at scale factor $1$. 

\subsubsection{The Sierpi\'{n}ski carpet tile set}\label{sec:carpet-details}

To define the tile set $T$, we begin by distinguishing between two classes of tile types called \grout\ tile types and \initializer\ tile types. We say \grout\ 
(respectively \initializer ) tiles or supertiles when referring to tiles or supertiles consisting only of tiles with \grout\ (respectively \initializer ) tile types.  At a high-level, \initializer\ tiles self-assemble into supertiles corresponding to $\carpet$ at stage $2$, and \grout\ tile types self-assemble into supertiles which facilitate the self-assembly of each consecutive stage of $\carpet$ starting from the stage $2$ supertiles self-assembled by \initializer\ tile types. We first describe \initializer\ tile types.

\subsubsubsection{Self-assembly of stage $2$ by \initializer\ tile types}\label{sec:carpet2}

The \initializer\ tiles self-assemble to form $8$ different supertiles, the domains of which are contained in a portion of \fractal{S}{2}. See Figure~\ref{fig:carpet2-glues} for a depiction of these $8$ supertiles. We denote these $8$ supertiles by \fracasm{C}{2}{i} for $1\leq i \leq 8$.  For each $i$, we define $32$ unique tile types of $T$ that self-assemble the supertile \fracasm{C}{2}{i} corresponding to a portion of \fractal{S}{2} that will be in the $i^{th}$ position of a supertile corresponding to a portion of \fractal{S}{3} (this portion is depicted in Figure~\ref{fig:Cj3}).  The main idea is that tiles that self-assemble \fracasm{C}{2}{i} have been ``hard-coded'' (i.e. for any glue on the edge of some tile, there exists a single matching glue on another tile) to ensure that for each $i$, \fracasm{C}{2}{i} self-assembles. Moreover, tile types are defined so that all tiles of \fracasm{C}{2}{i} self-assemble before \fracasm{C}{2}{i} can be contained in a strictly larger supertile. 
In other words, referring to Figure~\ref{fig:carpet2-glues-with-grout}, the gray and green tiles self-assemble supertiles consisting \fracasm{C}{2}{i} before any of the the aqua tiles can attach. To see this, note the presence of the yellow glues in the supertiles shown in Figure~\ref{fig:carpet2-glues}. These yellow glues restrict the assembly sequences for each supertile at temperature $2$. In particular, the final step in the assembly sequence of $C_1^2$ is the binding event between a supertile of size $3$ and a supertile of size $29$ via two yellow glues. Therefore, $C_1^2$ is completely self-assembled exactly when glues $g_1$ and $g^s$ are exposed by edges of tiles of $C_1^2$, and only after these glues are present can a supertile (called a \starter\ and described in more detail in Section~\ref{sec:carpet-grout}) shown in Figure~\ref{fig:carpet-grout1-init} bind, leading to a supertile strictly containing \fracasm{C}{2}{1} as a subassembly.

Referring to Figure~\ref{fig:carpet2-glues}, note that for each $i$, \fracasm{C}{2}{i} supertiles may expose glues of type $g^d$ or $\hat{g}^d$ for $d$ either $n$, $s$, $e$, or $w$, as well as possibly $g_k$ or $\hat{g}_k$ for $1\leq k \leq 8$. These glues allow \grout\ supertiles to cooperatively bind and the glues labeled $g_k$ and $\hat{g}_k$ indicate where special \grout\ supertiles should bind, hence they are called \indicating\ glues. Tiles containing an edge with an \indicating\ glue are depicted in green in Figure~\ref{fig:carpet2-glues}. 

The self-assembly of supertiles corresponding to stage $3$ of the Sierpi\'{n}ski carpet will require \grout\ tile types. These tile types are described in the next section. We first describe how \grout\ tile types facilitate the self-assembly of supertiles corresponding to stage $3$ of the carpet and then describe how these same \grout\ tile types facilitate the self-assembly of supertiles corresponding to any stage, $s$ say, by binding to supertiles corresponding to stage $s-1$.

\subsubsubsection{\grout\ tile types and stage $3$ carpet assembly}\label{sec:carpet-grout}

Figures~\ref{fig:inits126}-\ref{fig:carpet-grout2-west} describe \grout\ supertiles that bind to \fracasm{C}{2}{1} or \fracasm{C}{2}{2}. For a depiction of the \grout\ supertiles that bind to \fracasm{C}{2}{i} for $3\leq i \leq 8$, 
\ifx\arxiv\undefined
see the supplementary material for this paper.
\else
see Section~\ref{sec:appendix-carpet-tiles}.
\fi
We describe the \grout\ supertiles that attach to \fracasm{C}{2}{1} and \fracasm{C}{2}{2}, and note that the \grout\ supertiles that attach to \fracasm{C}{2}{i} for $3\leq i \leq 8$ are similar. 

\begin{figure*}[htp]
    \centering
\begin{subfigure}{0.25\textwidth}
            \centering
            \includegraphics[width=\textwidth]{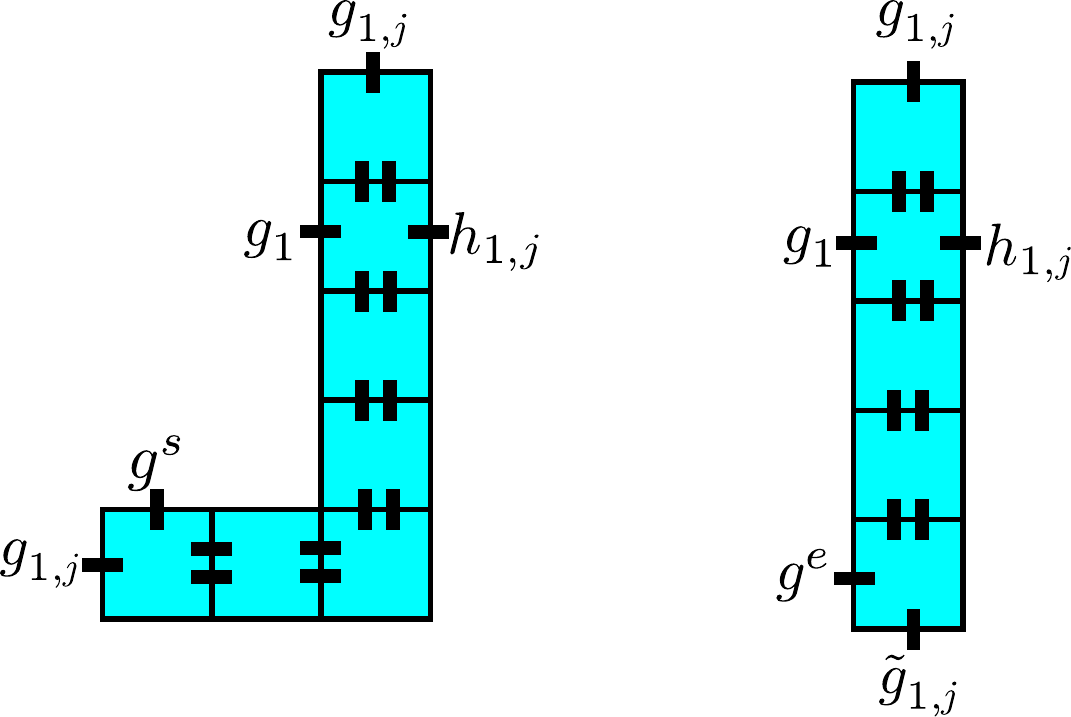}
            \caption{}\label{fig:carpet-grout1-init}
          \end{subfigure}
                \qquad\qquad
\begin{subfigure}{0.25\textwidth}
            \centering
            \includegraphics[width=\textwidth]{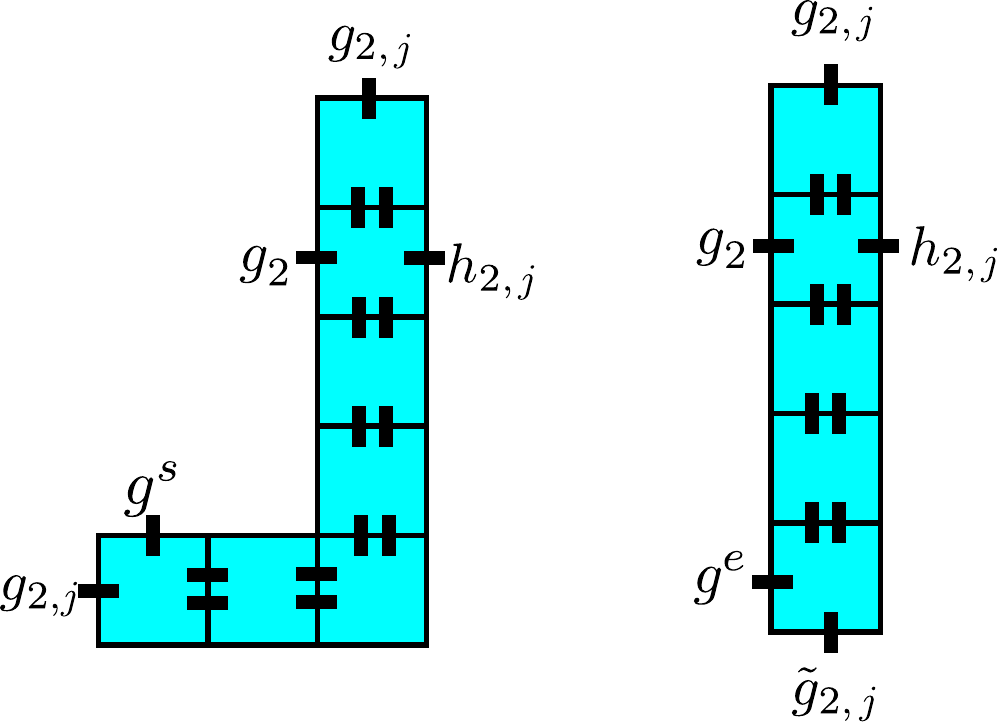}
    \caption{} \label{fig:carpet-grout2-init}
          \end{subfigure}    
          
    \caption{\starter\ supertiles. Tiles depicted in this figure have \grout\ class $j$ for some $j$ between $1$ and $8$ (inclusive). (a) Left: The supertile that starts the growth of \grout\ for \fracasm{C}{2}{1}. Right: The supertile that starts the growth of \grout\ for \fracasm{C}{s}{1}\ for $s>2$. Note that for each $s\geq 2$, only one of these supertiles can bind to tiles of \fracasm{C}{s}{1}.  Moreover, the supertile depicted on the left can bind to some \fracasm{C}{s}{1}\ iff $s=2$, and the supertile depicted on the right can bind to some \fracasm{C}{s}{1}\ iff $s>2$. (b) The supertiles that start the growth of \grout\ for $C^s_2$ for $s\geq 2$.}\label{fig:inits126}
\end{figure*}

\begin{figure}[htp]
    \centering
        \includegraphics[width=3in]{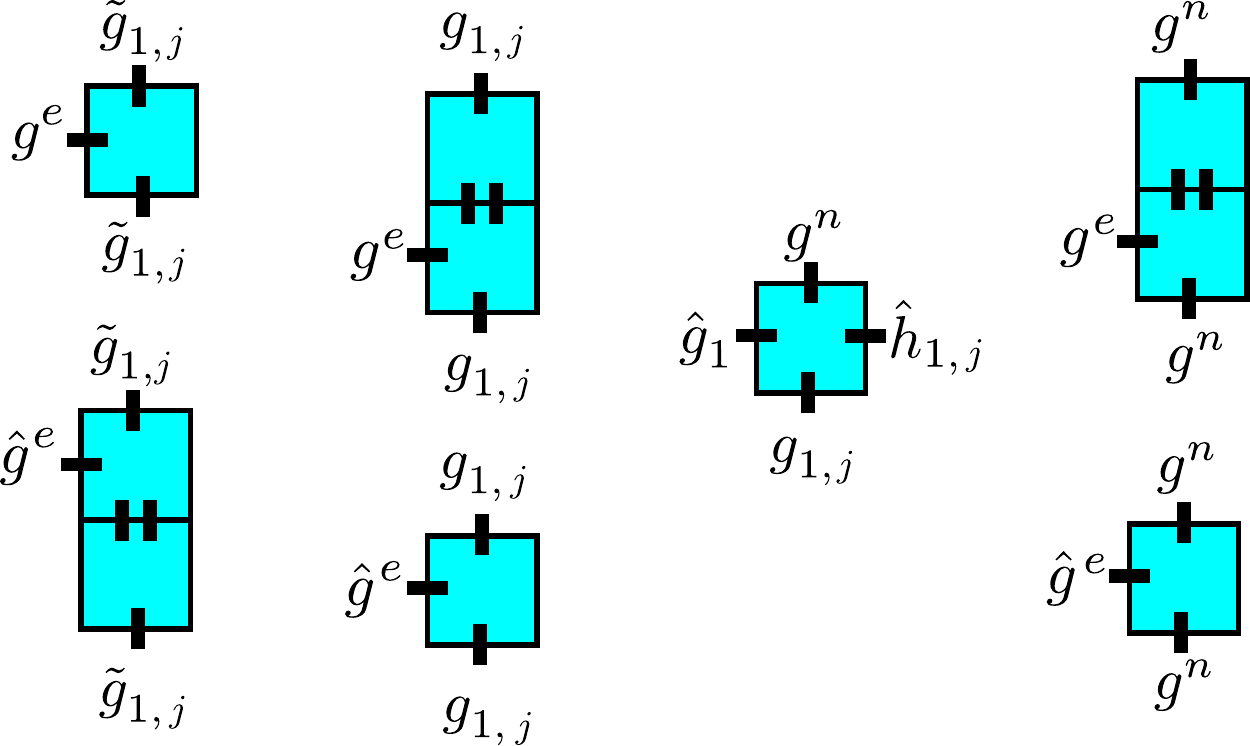}
    \caption{A depiction of \grout\ tiles that bind to the easternmost tiles of a \fracasm{C}{s}{1}\ supertile. Labels for unlabelled glues are ``hard-coded'' to enforce the self-assembly of each supertile shown here.} \label{fig:carpet-grout1-west}
\end{figure}

\begin{figure}[htp]
    \centering
        \includegraphics[width=3.2in]{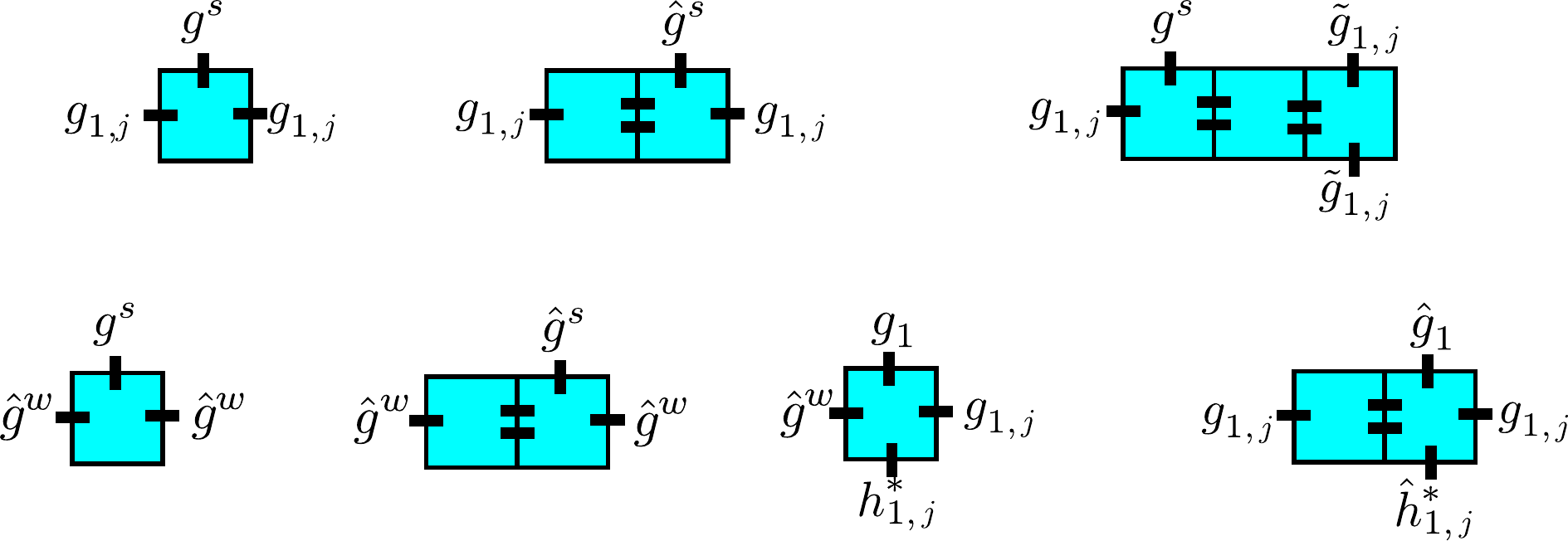}
    \caption{These \grout\ supertiles will self-assemble a row of tiles that bind to the southernmost tiles of \fracasm{C}{s}{1}\ for some stage $s\geq 1$. Labels for unlabelled glues are ``hard-coded'' to enforce the self-assembly of each supertile shown here.} \label{fig:carpet-grout1-south}
\end{figure}

\begin{figure}[htp]
    \centering
        \includegraphics[width=3.2in]{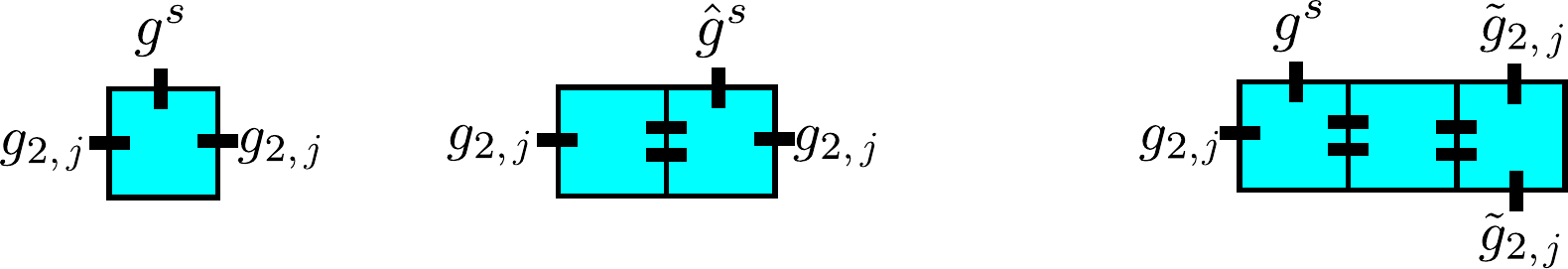}
    \caption{These \grout\ supertiles will self-assemble a row of tiles that bind to the southernmost tiles of \fracasm{C}{s}{2}\ for some stage $s\geq 1$.} \label{fig:carpet-grout2-south}
\end{figure}

\begin{figure}[htp]
    \centering
        \includegraphics[width=3in]{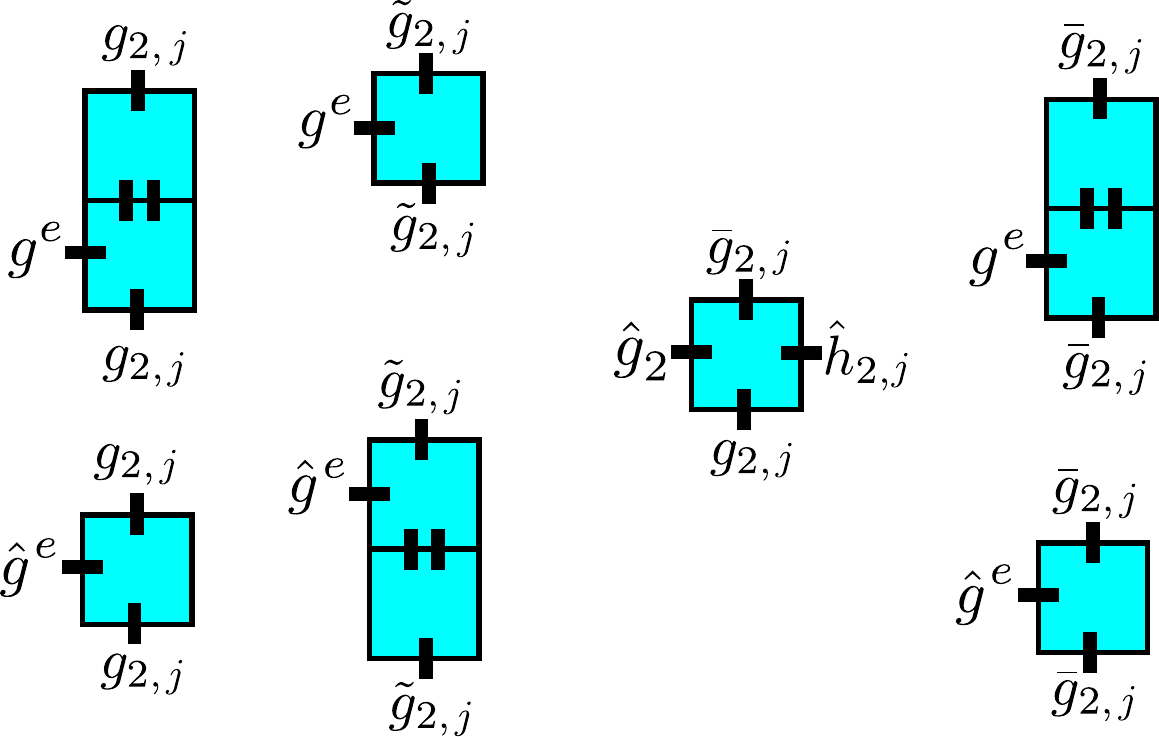}
    \caption{These \grout\ supertiles bind to the easternmost tiles of \fracasm{C}{s}{2}. Note the presence of the glue $\bar{g}_{2,i}$. This glue will either be $g^n$ or $g_i$ depending on $i$.} \label{fig:carpet-grout2-east}
\end{figure}

\begin{figure}[htp]
    \centering
        \includegraphics[width=3.2in]{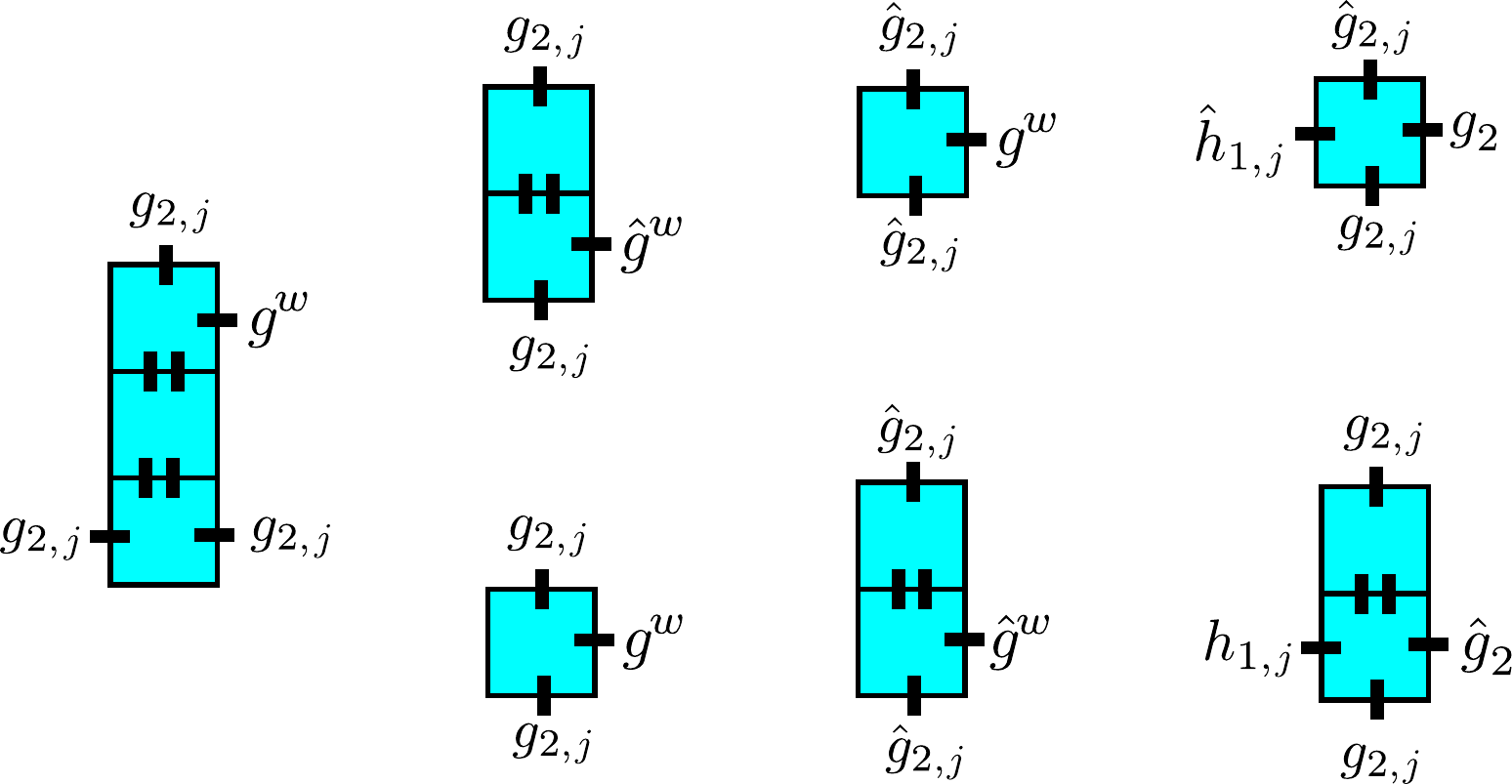}
    \caption{These tiles and supertiles are analogous to those in Figure~\ref{fig:carpet-grout2-east} only they bind to the westernmost tiles of some \fracasm{C}{s}{2}\ for $s\geq 2$. Note the presence of the glue $\hat{g}_{2,i}$. This glue will either be $\hat{g}^n$ or $\hat{g}_i$ depending on $i$.} \label{fig:carpet-grout2-west}
\end{figure}

For each $i$, there are $8$ different classes of \grout\ tile types which we enumerate with $1$ through $8$ that can bind to supertile $C^2_i$. In other words, for each supertile Figures~\ref{fig:inits126}-\ref{fig:carpet-grout2-west}, tile types for \grout\ tiles are defined so that eight different versions of each of \grout\ supertiles, corresponding to eight \grout\ \emph{classes}, self-assemble. In each figure, $j\in \N$ is such that $1\leq j \leq 8$, and tiles of supertiles belong to \grout\ class $j$. Depending on the value of $j$, for $k \in \N$ such that $1\leq k \leq 8$, the glues $h_{k,j}$, $\hat{h}_{k,j}$, $h_{1,j}^*$, and $\hat{h}_{1,j}^*$ are defined to either have strength $1$ or $0$. Table~\ref{tbl:binding-glues} describes glue strengths for these glues for each $j$. In addition, for $p \in \{2, 4, 5, 7\}$, glues with labels $\hat{g}_{p,j}$ and $\bar{g}_{p,j}$ are defined in Table~\ref{tbl:indicator-glues}. 

The \grout\ tiles are hard-coded to self-assemble \grout\ supertiles such that only \grout\ tiles belonging to the same class can bind. Moreover, two distinct \grout\ supertiles have matching glues iff the tiles of these supertiles have types belonging to the same \grout\ class. That is, for each $i$, \grout\ supertiles with tiles of any one, and only one, of the $8$ \grout\ classes can bind to some \fracasm{C}{2}{i}. For example, the \grout\ supertiles that bind to some \fracasm{C}{2}{i} before any other \grout\ supertiles are called \starter\ supertiles. See Figure~\ref{fig:inits126} for examples of \starter\ supertiles.  

\begin{table}[!htp]
\centering
  \begin{tabular}{ | g | c | }
    \hline\rowcolor{lightgray}
    $j$ & glues with strength $0$ \\ \hline               
    $1$ & $h_{5,j}$, $\hat{h}_{7,j}$  \\ \hline 
    $2$ & $h_{5,j}$, $\hat{h}_{7,j}$ \\ \hline
    $3$ & $\hat{h}_{4,j}$, $\hat{h}_{6,j}$ \\ \hline
    $4$ & $h_{2,j}$, $\hat{h}_{3,j}$ \\ \hline
    $5$ & $h_{1,j}$, $\hat{h}_{1,j}^*$ \\ \hline
    $6$ & $h_{2,j}$, $\hat{h}_{3,j}$ \\ \hline
    $7$ & $h_{2,j}$, $\hat{h}_{3,j}$ \\ \hline
    $8$ & $h_{1,j}$, $\hat{h}_{1,j}^*$ \\ \hline 
  \end{tabular}
\caption{For $j\in \N$ such that $1\leq j \leq 8$, this table lists those glues defined to have strength $0$. For all $k \in \N$ such that $1\leq k \leq 8$, $h_{k,j}$, $\hat{h}_{k,j}$, $h_{1,j}^*$, and $\hat{h}_{1,j}^*$ not listed in a row for a fixed value $j$ are defined to have strength $1$.}\label{tbl:binding-glues}
\end{table}

\begin{table}[!htp]
\centering
  \begin{tabular}{ | g | c | c | c | c | c | c | c | c | }
    \hline\rowcolor{lightgray}
    $j$ & $\hat{g}_{2,j}$ & $\bar{g}_{2,j}$ & $\hat{g}_{4,j}$ & $\bar{g}_{4,j}$ & $\hat{g}_{5,j}$ & $\bar{g}_{5,j}$ & $\hat{g}_{7,j}$ & $\bar{g}_{7,j}$ \\ \hline
    $1$ & $\hat{g}^n$ & $g^n$ & $\hat{g}^w$ & $g^w$ & $\hat{g}_1$ & $g_1$ & $g_1$ & $\hat{g}_1$ \\\hline
    $2$ & $\hat{g}^n$ & $g^n$ & $\hat{g}_2$ & $g_2$ & $g_2$ & $\hat{g}_2$ & $g^s$ & $\hat{g}^s$ \\\hline 
    $3$ & $\hat{g}^n$ & $g^n$ & $\hat{g}_3$ & $g_3$ & $g^e$ & $\hat{g}^e$ & $g_3$ & $\hat{g}_3$ \\\hline 
    $4$ & $\hat{g}_4$ & $g_4$ & $\hat{g}^w$ & $g^w$ & $g^e$ & $\hat{g}^e$ & $g_4$ & $\hat{g}_4$ \\\hline 
    $5$ & $\hat{g}_5$ & $g_5$ & $\hat{g}^w$ & $g^w$ & $g^e$ & $\hat{g}^e$ & $g_5$ & $\hat{g}_5$ \\\hline 
    $6$ & $\hat{g}_6$ & $g_6$ & $\hat{g}^w$ & $g^w$ & $g_6$ & $\hat{g}_6$ & $g^s$ & $\hat{g}^s$ \\\hline 
    $7$ & $\hat{g}^n$ & $g^n$ & $\hat{g}_7$ & $g_7$ & $g_7$ & $\hat{g}_7$ & $g^s$ & $\hat{g}^s$ \\\hline 
    $8$ & $\hat{g}_8$ & $g_8$ & $\hat{g}_8$ & $g_8$ & $g^e$ & $\hat{g}^e$ & $g^s$ & $\hat{g}^s$ \\\hline               
  \end{tabular}
\caption{For $j\in \N$ such that $1\leq j \leq 8$, this table gives glue definitions. For example, when $j=1$, $\hat{g}_{2,j} = \hat{g}^n$. All glues in this table are also defined to have strength $1$.}\label{tbl:indicator-glues}
\end{table}

\begin{figure*}[!htp]
    \centering
        \includegraphics[width=6.5in]{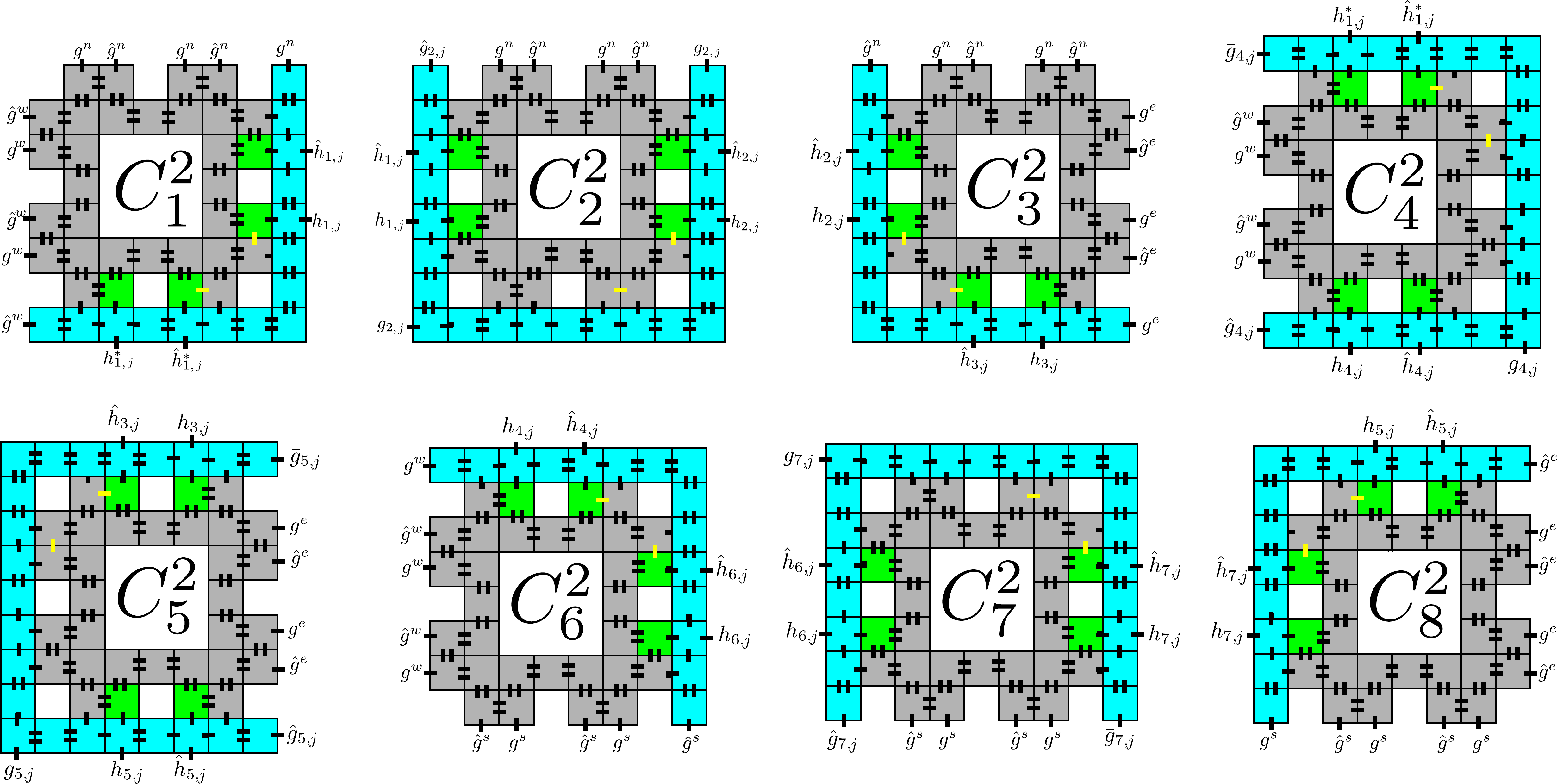}
    \caption{Supertiles \fracasm{C}{2}{(i, j)} for $1\leq i \leq 8$ and some $j$ such that $1\leq j \leq 8$. Depending on $j$, certain glues will have strength of $0$ as described in Table~\ref{tbl:binding-glues} though they are shown here as strength-$1$ glues.} \label{fig:carpet2-glues-with-grout}
\end{figure*}

\begin{figure}[htp]
    \centering
        \includegraphics[width=3in]{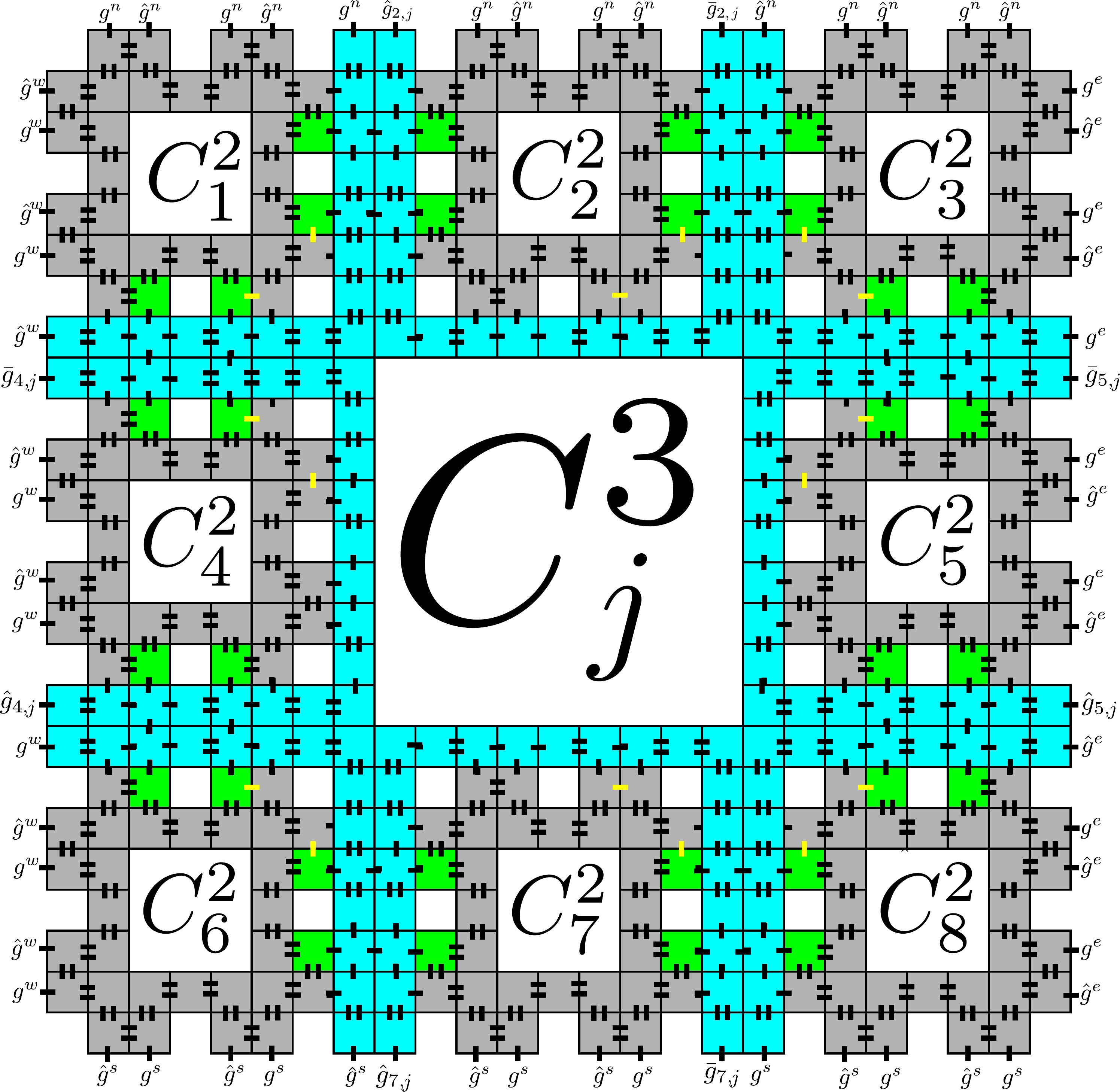}
    \caption{A depiction of \fracasm{C}{3}{j}. Note that for $p\in \{2,4,5,7\}$ the glues $\hat{g}_{p,j}$ and $\bar{g}_{p,j}$ shown here are defined in Table~\ref{tbl:indicator-glues}.} \label{fig:Cj3}
\end{figure}

For $i$ between $1$ and $8$ (inclusive), after supertiles \fracasm{C}{2}{i} self-assemble, \grout\ tiles attach to these supertiles to form supertiles which expose glues that allow them to bind to each other to self-assemble a supertile corresponding to stage $3$ of the Sierpi\'{n}ski carpet. Figure~\ref{fig:carpet2-glues-with-grout} shows each supertile \fracasm{C}{2}{i} for $1\leq i \leq 8$ along with \grout\ supertiles with \grout\ class $j$ attached. Figure~\ref{fig:Cj3} gives a depiction of the portion of $S^3$ that self-assembles; \grout\ supertiles in this figure are depicted in aqua.

Starting from some supertile \fracasm{C}{2}{i}, initial growth of \grout\ tiles begins when a \starter\ cooperatively binds to some \fracasm{C}{2}{i}\ via pairs of glues exposed by each supertile \fracasm{C}{2}{i}. Figure~\ref{fig:carpet-grout1-init} depicts such a supertile that binds to a \fracasm{C}{2}{1}\ supertile when the glues $g_1$ and $g^s$ cooperatively bind to the matching glues of \fracasm{C}{2}{1}. One can observe that the glues of \grout\ supertiles have been defined so that binding of \grout\ supertiles to \fracasm{C}{2}{i}\ for $1\leq i \leq 8$ always begins with the attachment of a \starter\ supertile.

Glues of \grout\ tiles have also been defined so that after a \starter\ binds to \fracasm{C}{2}{i} for some $i$,  \grout\ supertiles cooperatively bind one at a time and partially surround the supertile \fracasm{C}{2}{i} as in Figure~\ref{fig:carpet2-glues-with-grout}. We refer to the \grout\ supertiles other than \starter\ supertiles that cooperatively bind to \fracasm{C}{2}{i} as \crawler\ supertiles. Figures~\ref{fig:carpet-grout1-west} and~\ref{fig:carpet-grout1-south}. depict \crawler\ supertiles that bind to \fracasm{C}{2}{1}, and Figures~\ref{fig:carpet-grout2-south},~\ref{fig:carpet-grout2-east}, and~\ref{fig:carpet-grout2-west} depict \crawler\ supertiles that bind to \fracasm{C}{2}{2}. 

A \grout\ tile that binds to an \indicating\ glue (for $1\le k \le 8$, glues with label $g_k$ or $\hat{g}_k$ in Figure~\ref{fig:carpet2-glues}) of a south edge of a tile belonging to \fracasm{C}{2}{i} (respectively north, east, or west) will have a glue on its south (respectively north, east, or west) edge. The strength of such a glue is either $0$ or $1$ as given in Table~\ref{tbl:binding-glues}. The type of glue and whether or not a \grout\ tile exposes such a glue depends on the class of the \grout\ supertiles that attach to some \fracasm{C}{2}{i}. We call these glues exposed on an edge of a \grout\ tile \stagebinding\ glues. In Figures~\ref{fig:inits126} through~\ref{fig:carpet-grout2-west} and~\ref{fig:carpet2-glues-with-grout}, \stagebinding\ glues are $h_{1,j}^*$, $\hat{h}_{1,j}^*$, or $h_{k,j}$, $\hat{h}_{k,j}$ for $1\leq k \leq 8$. Strength-$1$ \stagebinding\ glues exposed by \grout\ supertiles bound to \fracasm{C}{2}{i}\ supertiles bind to allow for the self-assembly of a supertile that corresponds to the third stage of the carpet. 

Now let \fracasm{C}{2}{(i,j)}\ denote any supertile consisting only of tiles of \fracasm{C}{2}{i} and \grout\ tiles of class $j$. Figure~\ref{fig:carpet2-glues-with-grout} depicts such supertiles. The supertiles depicted in Figure~\ref{fig:carpet2-glues-with-grout} are such that no other \grout\ supertiles can bind to a given \fracasm{C}{2}{i} and have been depicted this way to show all of the glues exposed after \grout\ supertiles bind to each \fracasm{C}{2}{i}. We note that \grout\ tile types have been defined such that for $i,j,i'$ and $j'$ between $1$ and $8$ (inclusive), supertiles \fracasm{C}{2}{(i,j)}\ and \fracasm{C}{2}{(i',j')}\ can bind only after exposing sufficient \stagebinding\ glues. Moreover, such supertiles can bind iff $j=j'$. That is the \grout\ tiles of \fracasm{C}{2}{(i,j)}\ and \fracasm{C}{2}{(i',j')}\ belong to the same class.

For a fixed \grout\ class $j$ between $1$ and $8$, the $8$ supertiles \fracasm{C}{2}{(i,j)} (where $i$ ranges from $1$ to $8$) with sufficient \grout\ supertiles attached bind to self-assemble a supertile, which we denote by \fracasm{C}{3}{j}, corresponding to stage $3$ of the carpet. Figure~\ref{fig:Cj3} depicts such a supertile \fracasm{C}{3}{j}. Just as $i$ corresponds to the position that \fracasm{C}{2}{i}\ is located in \fracasm{C}{3}{j}, the \grout\ class $j$ determines the position that \fracasm{C}{3}{j}\ will be located as a substage of a supertile corresponding to stage $4$ of the carpet. Moreover, with glues strengths given Table~\ref{tbl:binding-glues}, we note that \grout\ tiles have been defined so that such $C^2_{(i,j)}$ supertiles bind before the ``next iteration'' of \grout\ tiles can attach. In other words, $C^2_{(i,j)}$ supertiles bind for all $i$ between $1$ and $8$ before a \starter\ can bind to the resulting supertile $C^3_{j}$. For example, when $j=1$, \stagebinding\ glues are defined such that $h_{5,j}$ and $\hat{h}_{7,j}$ have strength $0$. Therefore, any assembly sequence of \fracasm{C}{3}{1} ends with \fracasm{C}{2}{(8,1)} binding to a supertile consisting of \fracasm{C}{2}{(k,1)} for $1\leq k \leq 7$. Hence, only after \fracasm{C}{2}{(8,1)} binds can a \starter\ bind to the resulting supertile. The cases where $j$ is such that $2\leq j \leq 8$ are similar.

Then, for $i'$ such that $1\leq i' \leq 8$, the glues that might allow (depending on $i$ and $i'$) some supertile \fracasm{C}{2}{(i,j)}\ to bind to another supertile \fracasm{C}{2}{(i',j)}\ are \stagebinding\ glues separated by a distance of $2 = 3^{2-1} - 1$.\footnote{We are including glues with strength $0$ here.} This distance is ensured by the locations of the \indicating\ glues. As we will see, \stagebinding\ glues will be reused as each consecutive stage of the carpet self-assembles. The distance between \stagebinding\ glues will prevent supertiles corresponding to different fractal stages from binding. 

Finally, the class of \grout\ tiles that bind to some \fracasm{C}{2}{i} determines the presence and locations of \indicating\ glues exposed by edges belonging to tiles of some \fracasm{C}{3}{j}. These \indicating\ glues belonging to \grout\ tiles are defined according to Table~\ref{tbl:indicator-glues}. The locations of \indicating\ glues exposed by \fracasm{C}{3}{j} are analogous to the locations of these glues exposed by \fracasm{C}{2}{j} as shown in Figure~\ref{fig:carpet2-glues}, only the \indicating\ glues of \fracasm{C}{3}{j} are at distance $8 = 3^{3-1} - 1$ apart. For example, referring to Figure~\ref{fig:Cj3}, when $j=1$, we note the presence of four \indicating\ glues (two belonging to easternmost tiles and two belonging to southernmost tiles according to Table~\ref{tbl:indicator-glues}) exposed by \fracasm{C}{3}{1} that are distance $8$ apart. Note the similarity between the locations of \indicating\ glues in \fracasm{C}{3}{1} and in \fracasm{C}{2}{1}. \grout\ tile types have been defined so that the same similarity is drawn between \fracasm{C}{3}{j} and \fracasm{C}{2}{j} for $j$ between $1$ and $8$ (inclusive). 

\begin{figure}[htp]	

  \begin{center}
        \includegraphics[width=3in]{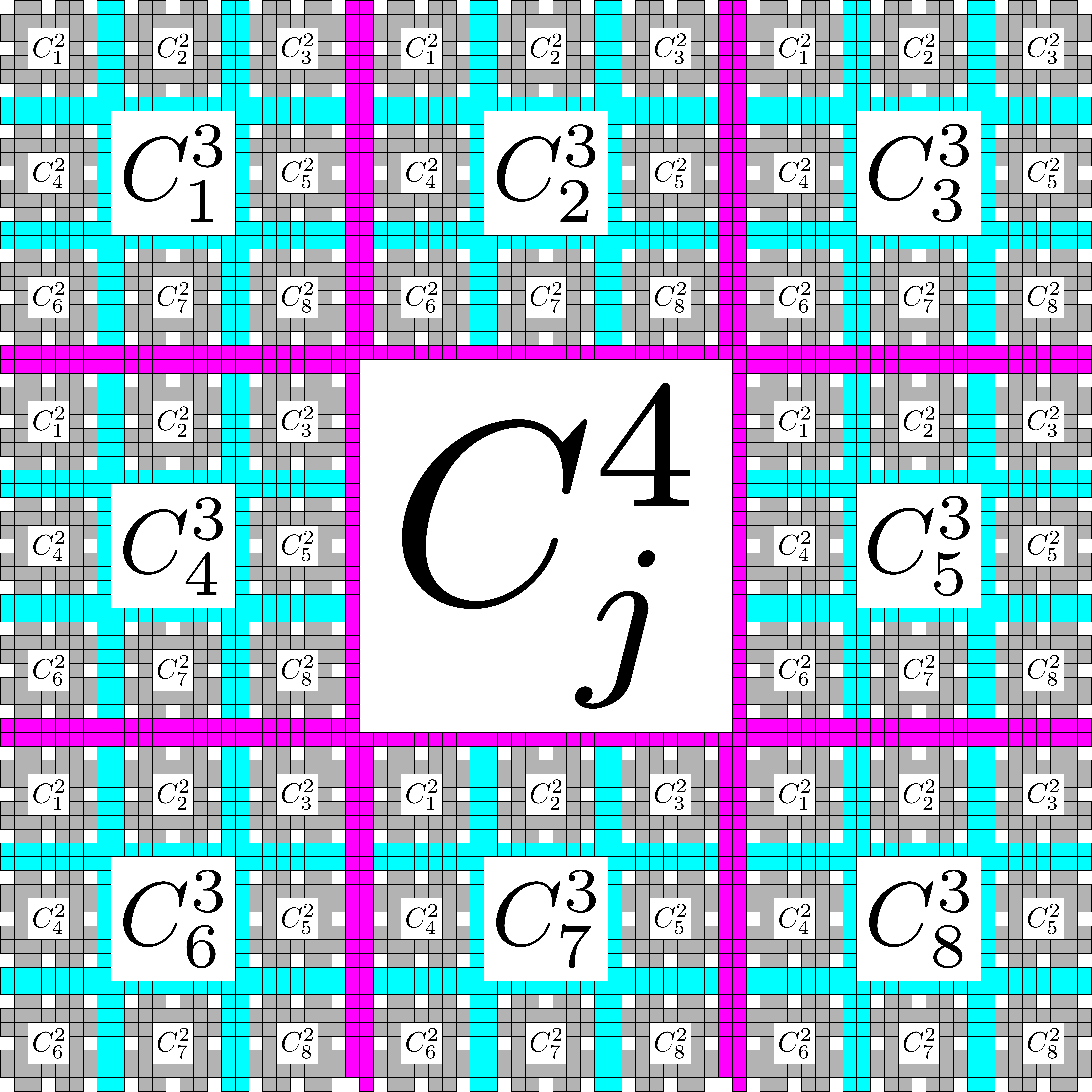}
  \end{center}

    \caption{A depiction of the portion of \fractal{S}{4}\ that is self-assembled by supertiles denoted by \fracasm{C}{3}{i}\ for $i$ between $1$ and $8$ (inclusive) and some class $j$ for $j$ between $1$ and $8$ of \grout\ tiles.}\label{fig:carpet4-overview}
 
\end{figure}

\subsubsubsection{Self-assembly of stage $s$ carpet for $s\geq 2$}

Repurposing $i$, we now let \fracasm{C}{3}{j}\ be denoted by \fracasm{C}{3}{i}.  Now, for each $i$ and $j$ with $1\leq i,j \leq 8$, the $8$ different classes of \grout\ tile types can attach to each \fracasm{C}{3}{i}\ supertile to give supertiles \fracasm{C}{3}{(i,j)}. The class \grout\ class determines where the supertiles  \fracasm{C}{3}{(i,j)}\ attach to self-assemble a supertile, \fracasm{C}{4}{j}, corresponding to a portion of \fractal{S}{4}. Such a \fracasm{C}{4}{j} is depicted in Figure~\ref{fig:carpet4-overview}. 
Moreover, the glues that allow some supertile \fracasm{C}{3}{(i,j)} to bind to another supertile \fracasm{C}{3}{(i',j)}, for some $i'$ say, are strength $1$ or $0$ glues, according to Table~\ref{tbl:binding-glues}, separated by a distance of $8$ apart. Note that the definitions of glues in Table~\ref{tbl:binding-glues} ensure that a \fracasm{C}{4}{j} supertile contains a supertile \fracasm{C}{3}{(i,j)} for each $1\leq i \leq 8$ before a \starter\ supertile can attach to such a \fracasm{C}{4}{j}. 

It is important to note that two \stagebinding\ glues may be exposed on some strict subassembly of \fracasm{C}{3}{(i,j)}, and therefore for some $i$ and $i'$, two subassemblies of \fracasm{C}{3}{(i,j)} and \fracasm{C}{3}{(i',j)} may bind to form a subassembly of \fracasm{C}{4}{j} where some \fracasm{C}{3}{(i,j)} has only partially assembled. This can lead to cases of nondeterminism like the case depicted in Figure~\ref{fig:carpet-corner-example}. We define glues belonging to \grout\ tiles so that this does not prevent tiles from binding in locations corresponding to points of stage $2$ at positions $i$ and $i'$ from completing assembly as a subassembly of \fracasm{C}{4}{j}. One such glue is shown in Figure~\ref{fig:carpet-corner-example} with label $g_{2,j}$. We also note that these glues do not permit tiles to bind in locations outside of locations in of tiles in positioned supertiles of \fracasm{C}{4}{j}. It is important to note that before such cases of nondeterminism can occur, all \stagebinding\ glues of \fracasm{C}{4}{j} must be bound. Glues such as $g_{2,j}$ also ensure correct assembly of higher stage analogs of \fracasm{C}{4}{j} where analogous nondeterminism can also occur in the self-assembly of $S^s$ for any higher stage $s>3$. 
\begin{figure*}[!h]
    \centering
        \includegraphics[width=4.5in]{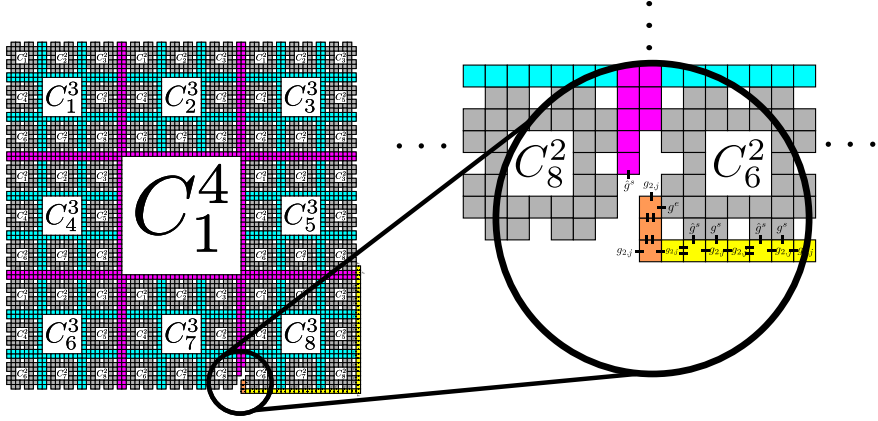}
    \caption{An example where \grout\ tiles have ``turned a corner too early''. The \grout\ tiles are shown in aqua, fuchsia, yellow, and orange. Note that \fracasm{C}{3}{8} and \fracasm{C}{3}{7} only have partial grout, though both have \grout\ supertiles with \stagebinding\ glues as is required for \fracasm{C}{4}{1} to be stable. In this case, when a \grout\ supertile shown in orange binds, a $g_{2,j}$ is exposed that will eventually allow for \grout\ tiles to continue to bind to the southernmost tiles of \fracasm{C}{4}{1}, but only after a sufficient number of \grout\ supertiles bind to \fracasm{C}{3}{7}. } \label{fig:carpet-corner-example}
\end{figure*}

Recursively repeating this process, we see that for any $i,j,s\in \N$ such that $1 \leq i,j \leq 8$ and $s > 2$, supertiles \fracasm{C}{s-1}{i}\ corresponding to a portion of \fractal{S}{s-1}\ (again, we are leaving room for \grout\ tiles) self-assemble, and supertiles \fracasm{C}{s-1}{(i,j)}\ corresponding to \fracasm{C}{s-1}{i}\ with the attachment of \grout\ tiles all belonging to the $j^{th}$ class of \grout\ tile types self-assemble. Moreover, the supertiles \fracasm{C}{s-1}{(i,j)} with sufficient \grout\ supertiles attached expose \stagebinding\ glues that are at a distance of $3^{s-2} - 1$ apart (including glues with strength $0$) that allow for the stable binding of these supertiles to form a supertile \fracasm{C}{s}{j} corresponding to \fractal{S}{s}. For $i' \in \N$ such that $1\leq i' \leq 8$, since the distance between the $2$ glues that allow for two supertiles \fracasm{C}{s-1}{(i,j)}\ and \fracasm{C}{s-1}{(i',j)}\ to bind is $3^{s-2} - 1$, one can observe that for $p,q \in \N$ such that $p,q \geq2$, \fracasm{C}{p}{(i,j)}\ can bind to some \fracasm{C}{q}{(i',j')}\ for some $i'$ and $j'$ iff $p=q$ and $j=j'$. Moreover, by  definition of the \grout\ tile types, specific edges of tiles of \fracasm{C}{s}{j}\ will expose \indicating\ glues which are analogous to the indicating glues of \fracasm{C}{s-1}{j}, only at distance $3^{s-1} - 1$ apart. 

\subsubsubsection{Correctness for the Sierpi\'{n}ski carpet construction}

To prove that the tile set, $T$, gives a 2HAM TAS $\mathcal{T} = (T,2)$ that finitely self-assembles $\carpet$, we note that by construction, for any finite producible supertile $\alpha$ of $\mathcal{T}$ and for any $s\in \N$, there exists positive integers $k$, and $j$, and an assembly sequence $\vec{\alpha} = \langle \alpha_i \rangle_{i=0}^{k}$ such that $\alpha_0 = \alpha$ and $\alpha_k$ is a $\fracasm{C}{s}{j}$ supertile. Therefore, any finite producible supertile $\alpha$ of $\mathcal{T}$ has the shape of a subset of points in $\carpet$. Moreover, for any finite producible supertile $\alpha$ of $\mathcal{T}$, there exists an assembly sequence which starts with $\alpha$ and results in a supertile that has shape of $\carpet$. Therefore, we see that $\calT$ finitely self-assembles $\carpet$.

\subsection{Self-assembly of $4$-sided fractals}\label{sec:4sided}

The construction that shows that any $4$-sided fractal finitely self-assembles in the 2HAM at scale factor $1$ (Theorem~\ref{thm:four-sided}) is a generalization of the construction given in Section~\ref{sec:carpet}. Let $G$ be the generator for a $4$-sided fractal and recall the notation of $L_G$, $R_G$, $B_G$, and $T_G$ defined in Section~\ref{sec:fractal-defs}. We will describe a tile set $T$ such that $\bX$ finitely self-assembles in the 2HAM system $\mathcal{T} = (T,2)$. As an example, consider the generator in Figure~\ref{fig:4sided1}. Stage $2$ of this fractal is depicted in Figure~\ref{fig:4sided2}. 

\begin{figure}[htp]
    \centering
    \begin{subfigure}[b]{0.4\textwidth}
    		\centering
        \includegraphics[width=1.7in]{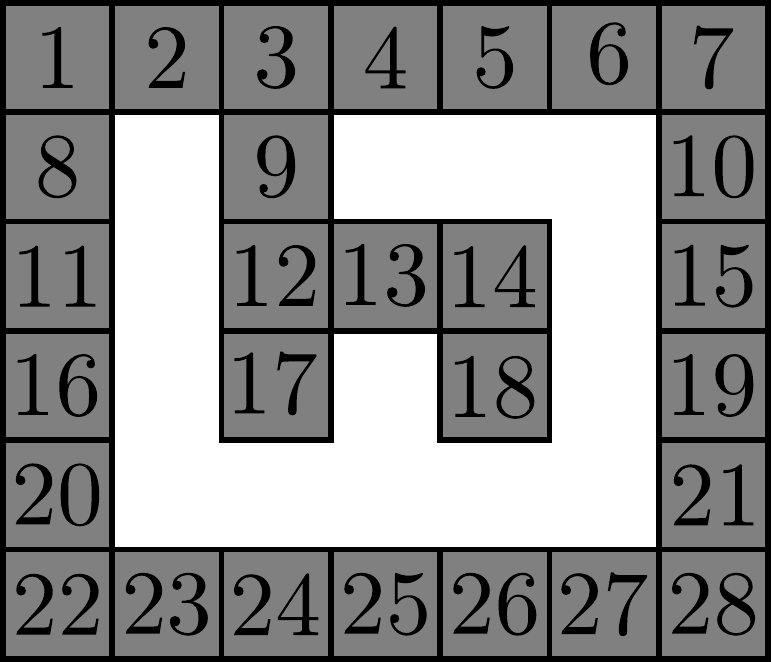}
        \caption{Stage $1$}
        \label{fig:4sided1}
    \end{subfigure}
    \begin{subfigure}[b]{0.4\textwidth}
		\centering        
        \includegraphics[width=2.2in]{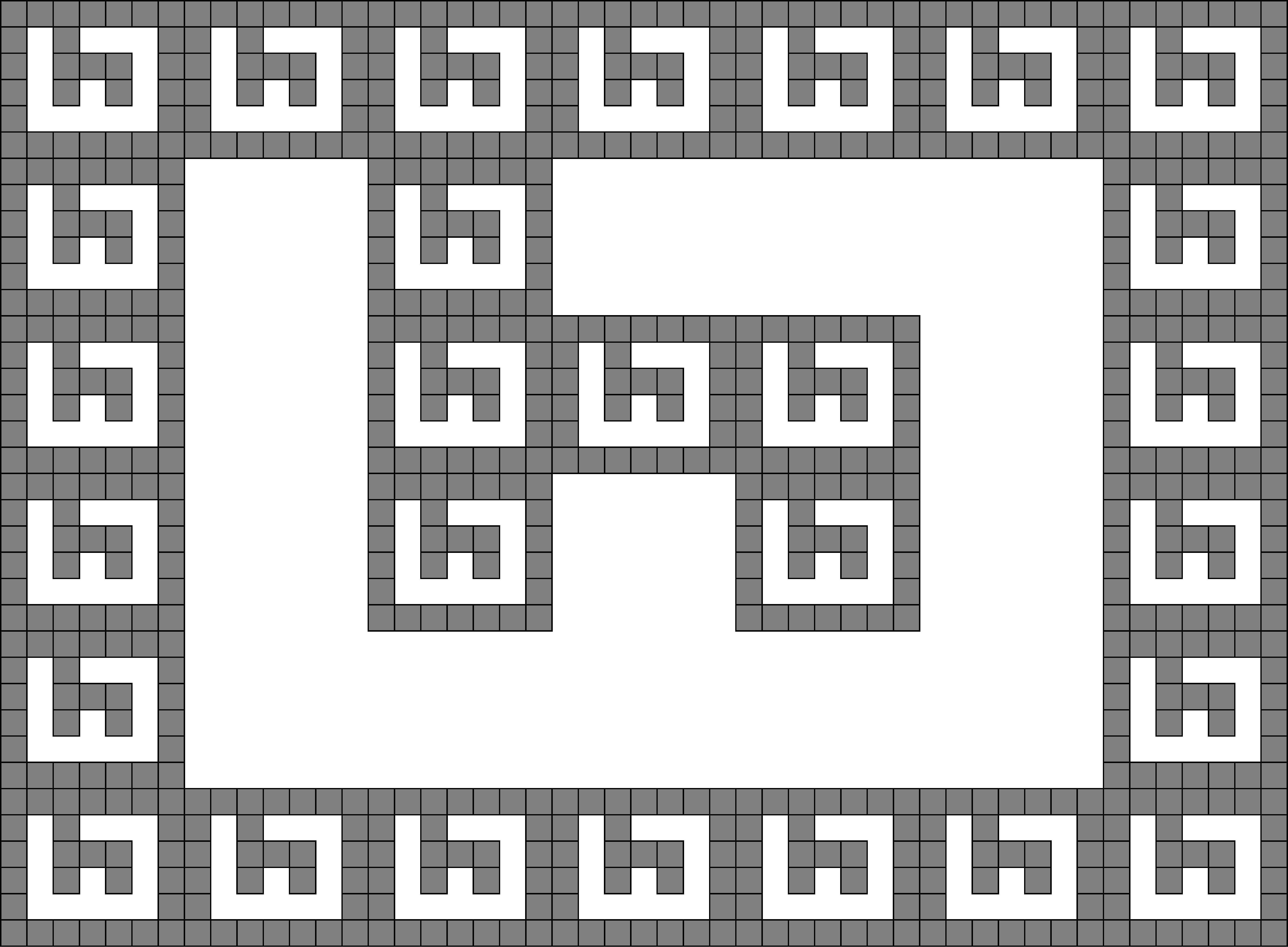}
        \caption{Stage $2$}
        \label{fig:4sided2}
    \end{subfigure}
    
    \caption{Two stages of a $4$-side fractal.}\label{fig:4sided-stages}
\end{figure}

Lemma~\ref{lem:4sided} will be helpful for showing Theorem~\ref{thm:four-sided}. This lemma states that if $\bX$ is a fractal with a generator $G$ such that $G$ only contains points along its perimeter, then $\bX$ finitely self-assembles in the 2HAM at temperature $2$.

\begin{lemma}\label{lem:4sided}
Let $\bX$ be a $4$-sided fractal with generator $G$ such that $G\setminus (L_G\cup L_G\cup T_G\cup B_G) = \emptyset$. Then, there exists a 2HAM TAS $\calT_{\bX} = (T, 2)$ that finitely self-assembles $\bX$.
\end{lemma}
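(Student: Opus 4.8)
The plan is to generalize the \carpet\ construction of Section~\ref{sec:carpet} essentially verbatim, with the $3\times 3$ ring generator of $\carpet$ replaced by the generator $G$ of $\bX$. Since $\bX$ is a $4$-sided fractal with $G\setminus(L_G\cup R_G\cup T_G\cup B_G)=\emptyset$, the generator $G$ is exactly the ``hollow rectangle'' $L_G\cup R_G\cup T_G\cup B_G$; because condition~3 of Definition~\ref{def:dssf} forces a missing interior point, we have $w_G\geq 3$ and $h_G\geq 3$, so the full grid graph of $G$ is $2$-regular and connected, i.e.\ a single cycle running around the boundary of a $w_G\times h_G$ box. Fix a cyclic enumeration $G=\{\vec v_i\}_{i=1}^{g}$, $g=|G|=2w_G+2h_G-4$. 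The point of restricting to perimeter-only generators is precisely that every point of $G$ lies on this one boundary cycle, so the \grout\ supertiles have an unambiguous path to ``crawl'' around each stage supertile; when $G$ has interior points this fails and one needs the extra machinery (\presenting, \corner, and \crawler\ gadgets) that is deferred to the proof of Theorem~\ref{thm:four-sided}.

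First I would build the \initializer\ tile types. For each $i$ with $1\leq i\leq g$, hard-code a block of tile types (using the ``yellow glue'' ordering trick behind Figure~\ref{fig:carpet2-glues} so that the block fully assembles in isolation before anything attaches to it) whose result is a supertile $\fracasm{C}{2}{i}$ realizing the portion of $\fractal{X}{2}$ destined for position $i$ within a stage-$3$ supertile, leaving room for \grout. On the edges of $\fracasm{C}{2}{i}$ facing positions $i'$ with $\vec v_{i'}$ adjacent to $\vec v_i$ in $G$, place \indicating\ glues. Because $G$ is a hollow rectangle, such adjacencies occur along the \emph{full} columns $L_G,R_G$ (for horizontal neighbours) and the \emph{full} rows $T_G,B_G$ (for vertical neighbours), so adjacent copies of $\fractal{X}{2}$ meet along a full side and the \indicating\ glues always have somewhere to go.

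Next I would build the \grout\ tile types in $g$ classes, one per position of $G$, exactly as in Section~\ref{sec:carpet}: a \starter\ supertile cooperatively attaches to a pair of glues of some $\fracasm{C}{s-1}{i}$, then \crawler\ supertiles cooperatively attach one at a time, walk around the boundary cycles of $\fracasm{C}{s-1}{i}$, and convert each \indicating\ glue they pass into a \stagebinding\ glue of strength $1$ or $0$ per a table analogous to Table~\ref{tbl:binding-glues}. These \stagebinding\ glues are placed so that the $g$ supertiles $\fracasm{C}{s-1}{(1,j)},\dots,\fracasm{C}{s-1}{(g,j)}$ bind into a stage-$s$ supertile $\fracasm{C}{s}{j}$, with horizontally-mating \stagebinding\ glues separated by $w_G^{\,s-2}-1$ and vertically-mating ones by $h_G^{\,s-2}-1$, and the class $j$ records the position of $\fracasm{C}{s}{j}$ inside the next stage. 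As in the \carpet\ case, strength-$0$ \stagebinding\ glues force all $g$ sub-supertiles of a $\fracasm{C}{s}{j}$ to be present before a \starter\ can attach, and $g_{2,j}$-style ``corner'' glues handle the nondeterminism of Figure~\ref{fig:carpet-corner-example} (a \crawler\ turning a corner beside a partially-assembled neighbour) without creating tiles outside $\bX$ or stalling. Since $w_G,h_G\geq 2$, the \stagebinding\ and \indicating\ glue separations strictly increase with the stage, which is what prevents supertiles from different stages, or \grout\ of different classes, from ever binding; the tile set $T$ is then polynomial in $g$ (one can arrange $O(g^3)$, as in Theorem~\ref{thm:four-sided}).

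Finally I would prove correctness, which is the main obstacle. The easy direction: every finite producible supertile $\alpha$ sits inside some $\fracasm{C}{s}{j}$, and repeatedly attaching \grout\ and the remaining $\fracasm{C}{s'}{i}$ blocks extends $\alpha$ along an assembly sequence whose result has shape $\bigcup_{s\geq1}\fractal{X}{s}=\bX$. The delicate direction is that no ``junk'' is producible, i.e.\ every finite producible supertile is a subassembly of some $\fracasm{C}{s}{j}$ and hence has the shape of a subset of $\bX$. Here I would run the Section~\ref{sec:carpet} case analysis for a general hollow-rectangle $G$: (i) the hard-coding forces each $\fracasm{C}{2}{i}$ to assemble in isolation; (ii) at temperature $2$, every later cooperative attachment is either a \starter/\crawler\ step along the boundary cycle of the current block or a \stagebinding\-glue mating of two correctly-crawled blocks of the same class and stage; and (iii) the growing glue separations exclude all cross-stage and cross-class interactions. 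Step (ii) is where perimeter-only is essential: the boundary of each $\fractal{X}{s}$ decomposes into cycles along which the crawl is locally forced, so there is no ``floating'' sub-copy of $G$ that a \crawler\ could miss or reach incorrectly. Granting this, $\calT_{\bX}=(T,2)$ finitely self-assembles $\bX$, which proves the lemma.
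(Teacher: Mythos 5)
Your proposal is correct and follows essentially the same route as the paper's own proof sketch: generalize the Sierpi\'{n}ski carpet construction by hard-coding $|G|$ \initializer\ blocks for stage $2$ (assembled in a forced order before anything can attach), then using $|G|$ classes of elongated \grout\ supertiles (\starter\ plus \crawler) whose \indicating\ and \stagebinding\ glues, with strength-$0$ entries and stage-dependent separations $w_G^{s-2}-1$, $h_G^{s-2}-1$, force same-class, same-stage binding only. The paper phrases the hard-coded block as the non-perimeter points of $\fractal{X}{2}$ and justifies its forced assembly via a Hamiltonian path in that region's grid graph, but this matches your ``yellow-glue ordering'' argument, so the two sketches are the same in substance and level of detail.
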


\begin{proof}[Sketch]
For $s\in \N$, let \fractal{X}{s} denote the $s^{th}$ stage of $\bX$, and let $r = |G|$.
We note that the construction given in Section~\ref{sec:carpet} generalizes in a straightforward way to give a tile set $T$ satisfying Lemma~\ref{lem:4sided}. For example, given the generator in Figure~\ref{fig:4sidedLemma1}, the modifications to the construction given in Section~\ref{sec:carpet} are as follows. Once again, we consider two types of tiles in $T$ which we call \initializer\ tiles and \grout\ tiles.

\subsubsection{The \initializer\ tile types for Lemma~\ref{lem:4sided}.}
Let \fractal{X'}{2} denote the set of points in \fractal{X}{2} that are not on the perimeter of \fractal{X}{2}. Figure~\ref{fig:4sidedLemma2} depicts the points of an example \fractal{X'}{2}. 
\initializer\ tiles of $T$ now hard-code $r$ different versions of \fractal{X'}{2}. For $i$ between $1$ and $r$ (inclusive), we call these hard-coded supertiles \fracasm{\Gamma}{2}{i}. We note that as there is a Hamiltonian path in the full grid-graph of \fractal{X'}{2}, the glues of the \initializer\ tiles can be specified so that \fracasm{\Gamma}{2}{i} completely assembles prior to being a subassembly of any other producible supertile. 

In addition to hard-coding the shape of \fractal{X'}{2}, \initializer\ tiles are specified so that once \fracasm{\Gamma}{2}{i} has completely self-assembled: 
\begin{enumerate}
\item the north edges of northernmost tiles expose a $g^n$ or $\hat{g}^n$ such that the westernmost tile and every other tile from west to east exposes $g^n$ and the remaining northernmost tiles expose a $\hat{g}^n$, 
\item the east edges of easternmost tiles expose a $g^e$ or $\hat{g}^e$ such that the northernmost tile and every other tile from north to south exposes $g^e$ and the remaining easternmost tiles expose a $\hat{g}^e$, 
\item the south edges of southernmost tiles expose a $g^s$ or $\hat{g}^s$ such that the easternmost tile and every other tile from east to west exposes $g^s$ and the remaining southernmost tiles expose a $\hat{g}^s$, and finally,
\item the west edges of westernmost tiles expose a $g^w$ or $\hat{g}^w$ such that the southernmost tile and every other tile from south to north exposes $g^w$ and the remaining westernmost tiles expose a $\hat{g}^w$.  
\end{enumerate}
Edges of tiles in \fracasm{\Gamma}{2}{i} in ``key locations'' expose special glues $\hat{g}_i$ and $g_i$ which we call \indicating\ glues. At these key locations, $g_i$ is exposed instead of a $g^n$, $g^s$, $g^e$, or $g^w$ and $\hat{g}_i$ is exposed instead of a $\hat{g}^n$, $\hat{g}^s$, $\hat{g}^e$, or $\hat{g}^w$. These key locations of the tiles in \fracasm{\Gamma}{2}{i} that expose these glues are shown as red squares in Figure~\ref{fig:4sidedLemma2}. In general, these key locations will be the second to westernmost (resp. northernmost) and second to easternmost (resp. southernmost) tile locations of the northernmost (resp. easternmost) and southernmost (resp. westernmost) tile locations.  Whether or not \fracasm{\Gamma}{2}{i} exposes \indicating\ glues at these key locations depend on $i$. In particular, if the $i^{th}$ location in $G$ is adjacent to some other point that is north (resp. south, east, or west) of it, then, \fracasm{\Gamma}{2}{i} will expose \indicating\ glues on the north (resp. south, east, or west) edges of tiles in northernmost (resp. southernmost, easternmost, or westernmost) key locations.  \indicating\ glues in these key locations serve the same purpose to the \indicating\ glues described in Section~\ref{sec:carpet-details}.

\begin{figure}[htp]
    \centering
    \begin{subfigure}[b]{0.4\textwidth}
    		\centering
        \includegraphics[width=1.7in]{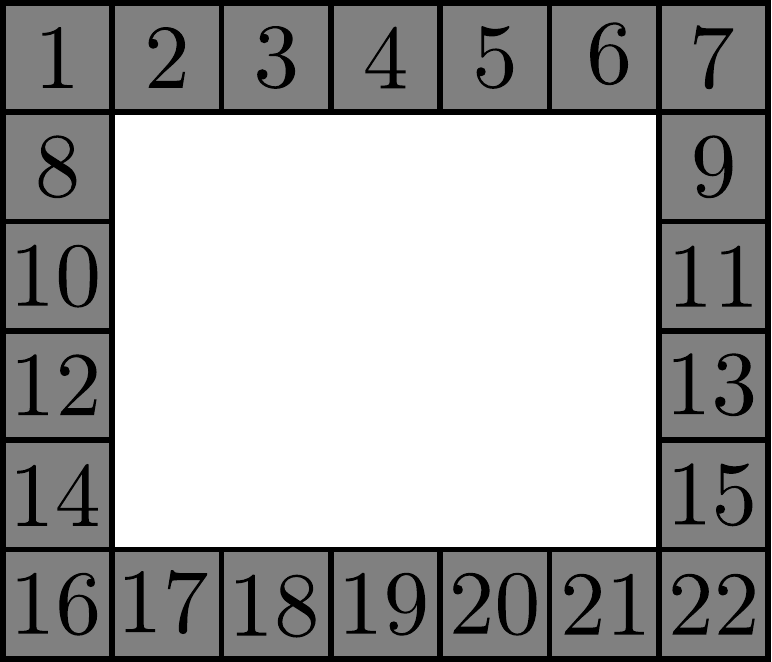}
        \caption{}
        \label{fig:4sidedLemma1}
    \end{subfigure}
    \begin{subfigure}[b]{0.4\textwidth}
		\centering        
        \includegraphics[width=2.2in]{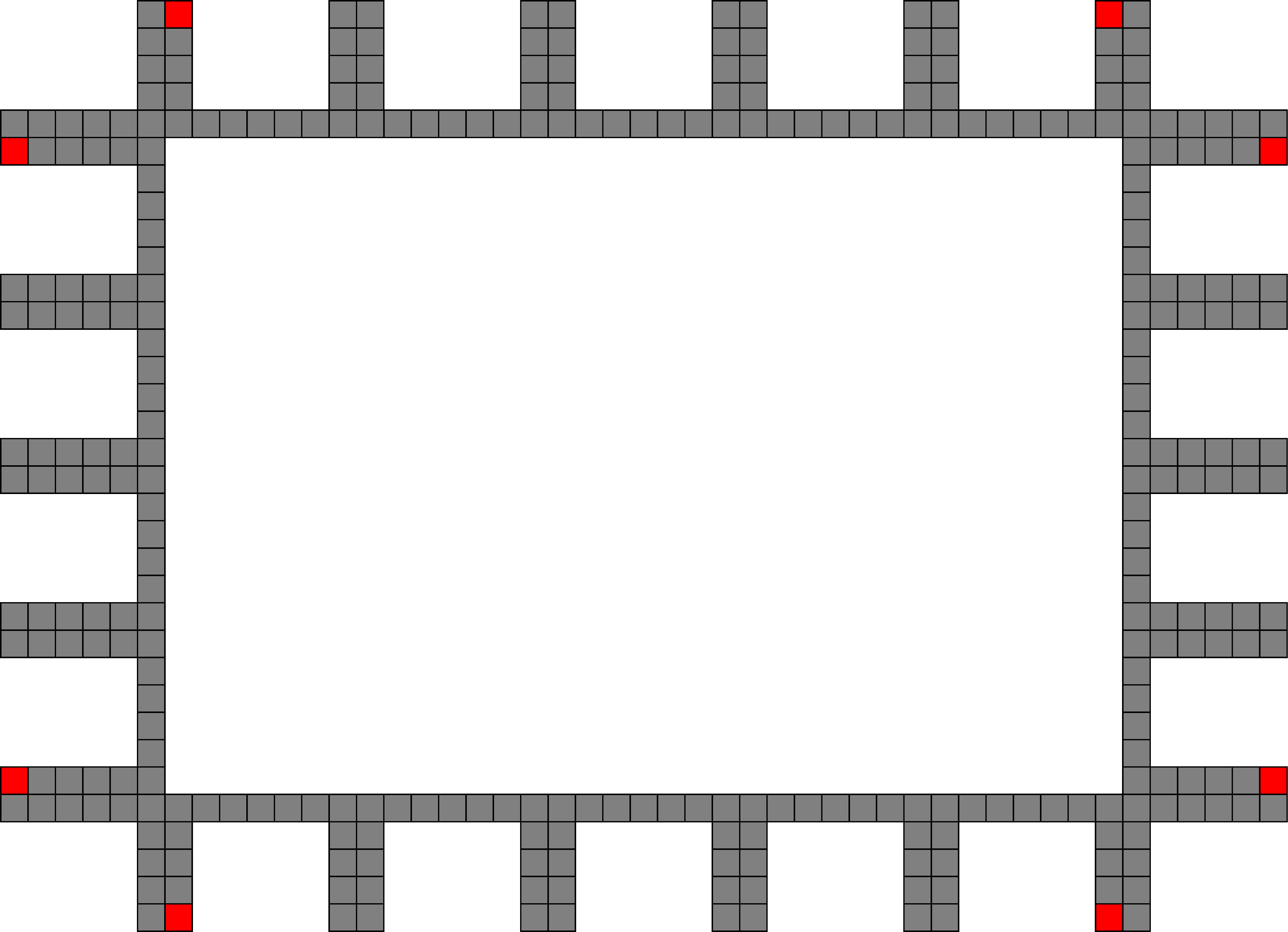}
        \caption{}
        \label{fig:4sidedLemma2}
    \end{subfigure}
    
    \caption{(a) An example generator for the $4$-sided fractals considered in Lemma~\ref{lem:4sided}. (b) A depiction of \fractal{X'}{2}. Red squares indicate possible locations of \indicating\ glues $\hat{g}_i$ and $g_i$.}\label{fig:4sidedLemma}
\end{figure}

\subsubsection{The \grout\ tile types for Lemma~\ref{lem:4sided}.}

With the ``base case'' hard-coded to give \fracasm{\Gamma}{2}{i}, we are now ready to describe \grout\ tiles. \grout\ tiles will be almost identical to the \grout\ tiles described in Section~\ref{sec:carpet} with the exception that now the \grout\ tiles must hard-code analogous though elongated versions of \grout\ supertiles from Section~\ref{sec:carpet}. For example, elongated version of \starter\ supertiles that initiate the binding of \grout\ tiles to \fracasm{\Gamma}{2}{1} is shown on the left in Figure~\ref{fig:elongated-init}. \grout\ tiles of $T$ are hard-coded to form similar ``elongated'' versions of \grout\ supertiles to those described in Section~\ref{sec:carpet}. The only difference being that now these supertiles must span a distance of $w_G$ between easternmost or westernmost tiles of \fracasm{\Gamma}{2}{i} and must span a distance of $h_G$ between northernmost or southernmost tiles of \fracasm{\Gamma}{2}{i} in order to cooperatively bind. 

\begin{figure}[htp]
    \centering
        \includegraphics[width=1.7in]{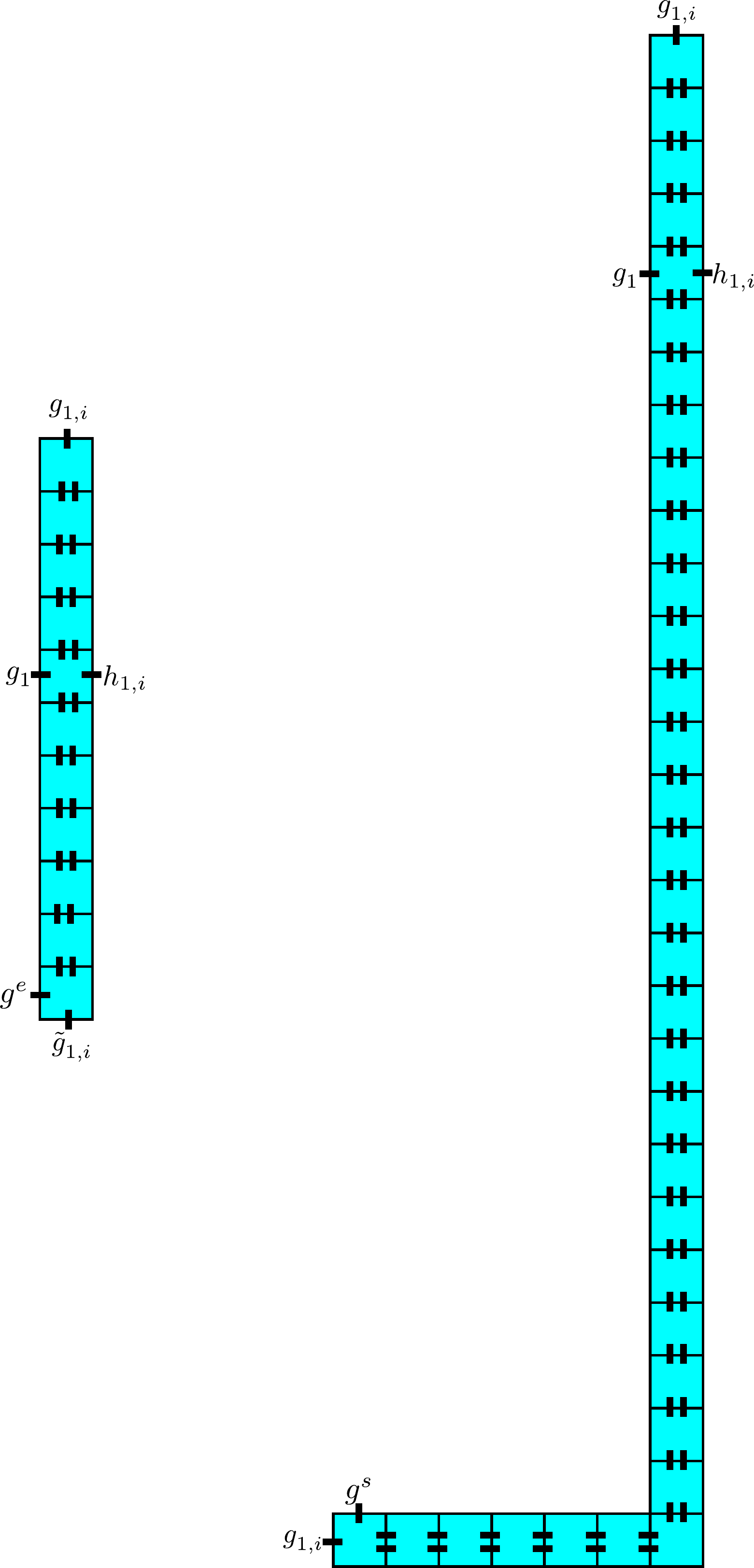}
    \caption{``Elongated'' versions of the supertiles that initiates the attachment of \grout\ tiles to a supertile \fracasm{\Gamma}{s}{1} (left), where $s \geq 3$, or \fracasm{\Gamma}{2}{1} (right). These are elongated versions of the \starter\ supertiles shown in Figure~\ref{fig:carpet-grout1-init}.}\label{fig:elongated-init}
\end{figure}

Now, \grout\ tiles fall into $r$ different classes where each class corresponds to a position in $G$. For some class $j$ between $1$ and $r$ (inclusive), \grout\ tiles of class $j$ bind to \fracasm{\Gamma}{2}{i} for each $i$ such that $1 \leq i \leq r$. Then, \grout\ tiles bind to the \indicating\ glues of edges of tiles of \fracasm{\Gamma}{2}{i} in the key locations described above, the resulting supertiles, which we call \fracasm{\Gamma}{2}{(i,j)}, further expose \stagebinding\ glues on edges of tiles adjacent to tiles in key locations such that the presence of these glues enables the supertiles \fracasm{\Gamma}{2}{(i,j)} to bind and form a supertile that corresponds to the subsequent stage \fractal{X}{3}. Moreover, once all \fracasm{\Gamma}{2}{(i,j)} supertiles bind, a \starter\ supertile (like the one depicted on the left in Figure~\ref{fig:elongated-init}) can then initiate the binding of more \grout\ tiles. Furthermore, by defining certain \stagebinding\ glues to have strength $0$, analogous to Table~\ref{tbl:binding-glues}, we can enforce that such a supertile that initiates the binding of \grout\ tiles (\starter\ supertiles) can bind only after all \fracasm{\Gamma}{2}{(i,j)} supertiles are subassemblies of the same supertile. We call this latter supertile, that corresponds to \fractal{X}{3}, \fracasm{\Gamma}{3}{j}. For a stage $s> 3$, the self-assembly of supertiles, \fracasm{\Gamma}{s}{j}, which correspond to \fractal{X}{s} is similar to the self-assembly of supertiles \fracasm{C}{s}{j} for the Sierpi\'{n}ski carpet given in Section~\ref{sec:carpet-details}. Finally, glue definition similar to Table~\ref{tbl:indicator-glues} can be given for \grout\ tiles so that appropriate \indicating\ glues are exposed by tiles belonging to \fracasm{\Gamma}{3}{j} to ensure that \fracasm{\Gamma}{3}{j} exposes \indicating\ glues so that the next iteration of \grout\ supertiles to bind expose \stagebinding\ glues in specific locations. These specific locations are chosen so that for $s\geq 2$, the distance between the \indicating\ glues of some \fracasm{\Gamma}{s}{j} is a strictly increasing function of $s$, which ensures that two such supertiles can bind iff they correspond to the same stage of the fractal $\bX$.

Similar to the Sierpi\'{n}ski carpet construction, we can see that the \initializer\ tiles self-assemble supertiles that correspond to \fractal{X}{2} and that \grout\ tiles can attach to supertiles that correspond to \fractal{X}{s} for some stage $s\geq 2$ to form supertiles that bind to yield a supertile corresponding to $X^{s+1}$.  Therefore, with tiles $T$, the 2HAM system $\mathcal{T} = (T, 2)$ finitely self-assembles $\bX$. Therefore, Lemma~\ref{lem:4sided} holds. Now we are ready to prove Theorem~\ref{thm:four-sided}. 
\end{proof}

\subsection{Proof of Theorem~\ref{thm:four-sided} (Sketch)}

Let $\bX$ be a $4$-sided dssf with generator $G$ and let $r = |G|$. In this section, we give a sketch of the proof of Theorem~\ref{thm:four-sided} by describing how to modify the tile set give in the proof of Lemma~\ref{lem:4sided} to obtain a tile set $T$ such that the 2HAM TAS $\mathcal{T} = (T,2)$ finitely self-assembles $\bX$. Figure~\ref{fig:4sided1} gives an example of a generator $G$ where we enumerate the points of $G$ from left to right, from top to bottom. Now let $G_{int} = G \setminus (L_G\cup R_G\cup T_G \cup B_G)$ (i.e. the points of $G$ that are not on the perimeter of $G$), and let $G_{bdry}$ be $G\setminus G_{int}$.

\begin{figure}[htp]
 \centering
    \begin{subfigure}[b]{0.4\textwidth}
    		\centering
        \includegraphics[width=0.5in]{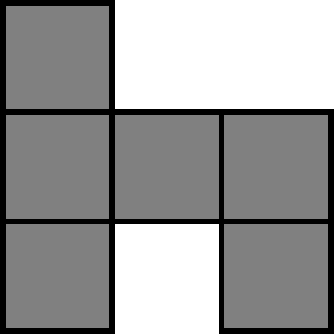}
	\caption{}
	\label{fig:4sided-junk}
    \end{subfigure}
    \qquad
    \begin{subfigure}[b]{0.4\textwidth}
		\centering        
        	\includegraphics[width=2.4in]{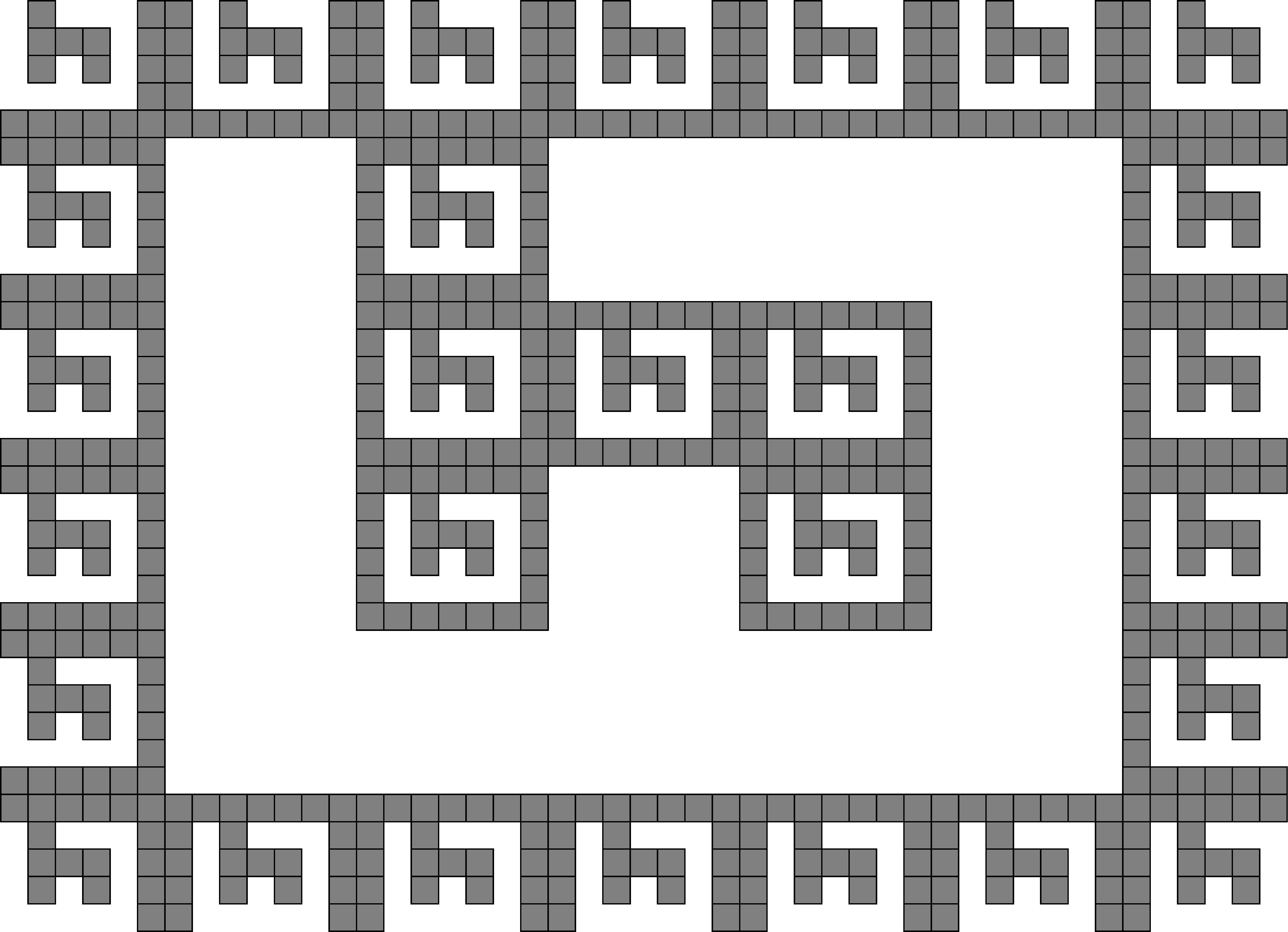}
	\caption{}
        \label{fig:4sided-junk2}
    \end{subfigure}
    
    \caption{(a) A depiction of $\mathcal{G}^-_1$ for the generator in Figure~\ref{fig:4sided1}. (b) A depiction of $\mathcal{G}^-$ for the generator in Figure~\ref{fig:4sided1}.}\label{fig:4sided-stages}

\end{figure}

\begin{figure}[htp]
 \centering
    	\includegraphics[width=2.4in]{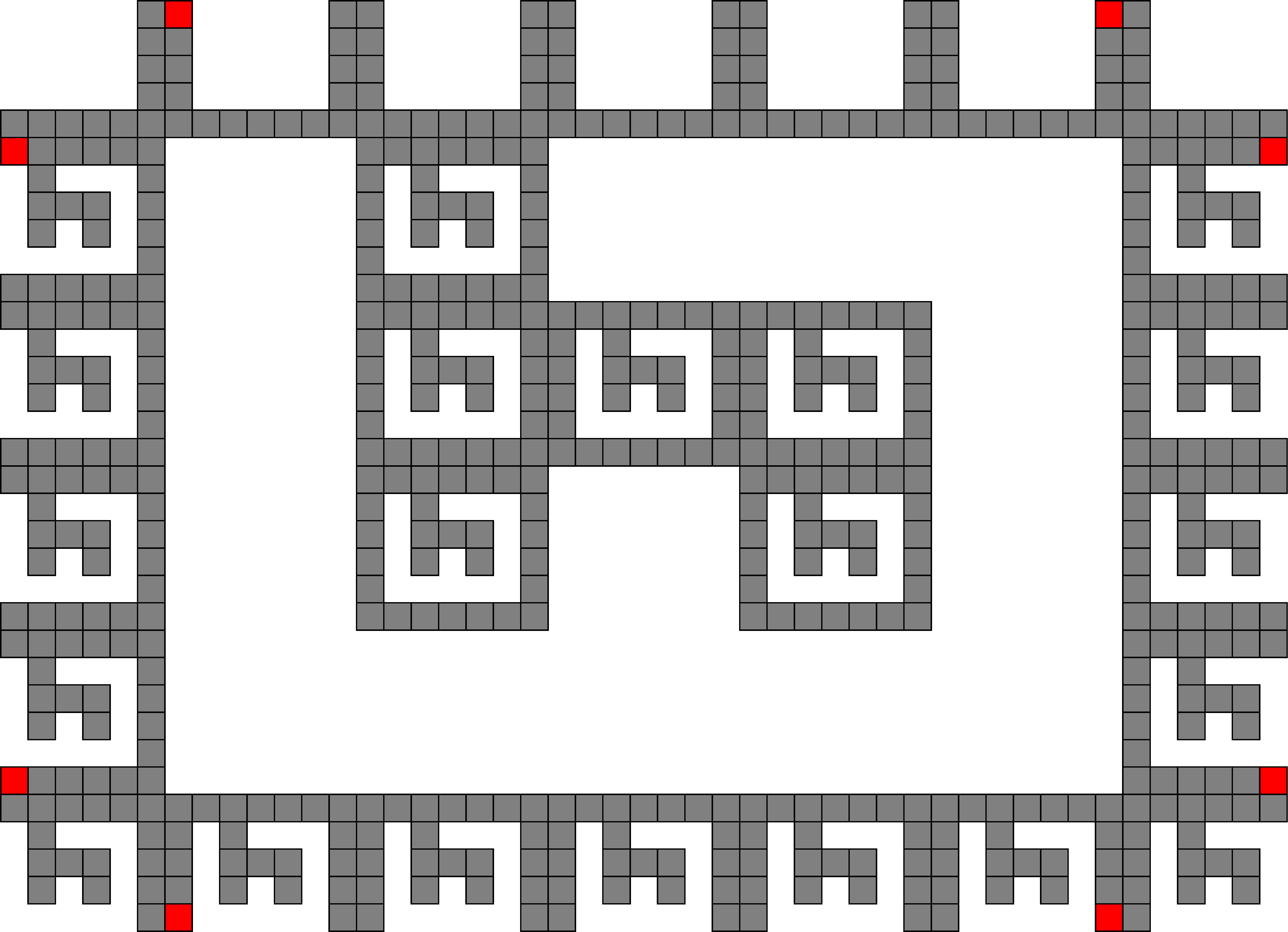}
    \caption{A depiction of \fracasm{\Gamma}{2}{i}. This is the portion of the second stage of the fractal with generator in Figure~\ref{fig:4sided1} that is hard-coded to self-assemble. It is analogous to the second stages that assemble shown in Figure~\ref{fig:4sidedLemma2} for the construction for Lemma~\ref{lem:4sided}. }\label{fig:4sided2-special}

\end{figure}

\begin{figure}[htp]
	\centering        
	\includegraphics[width=3in]{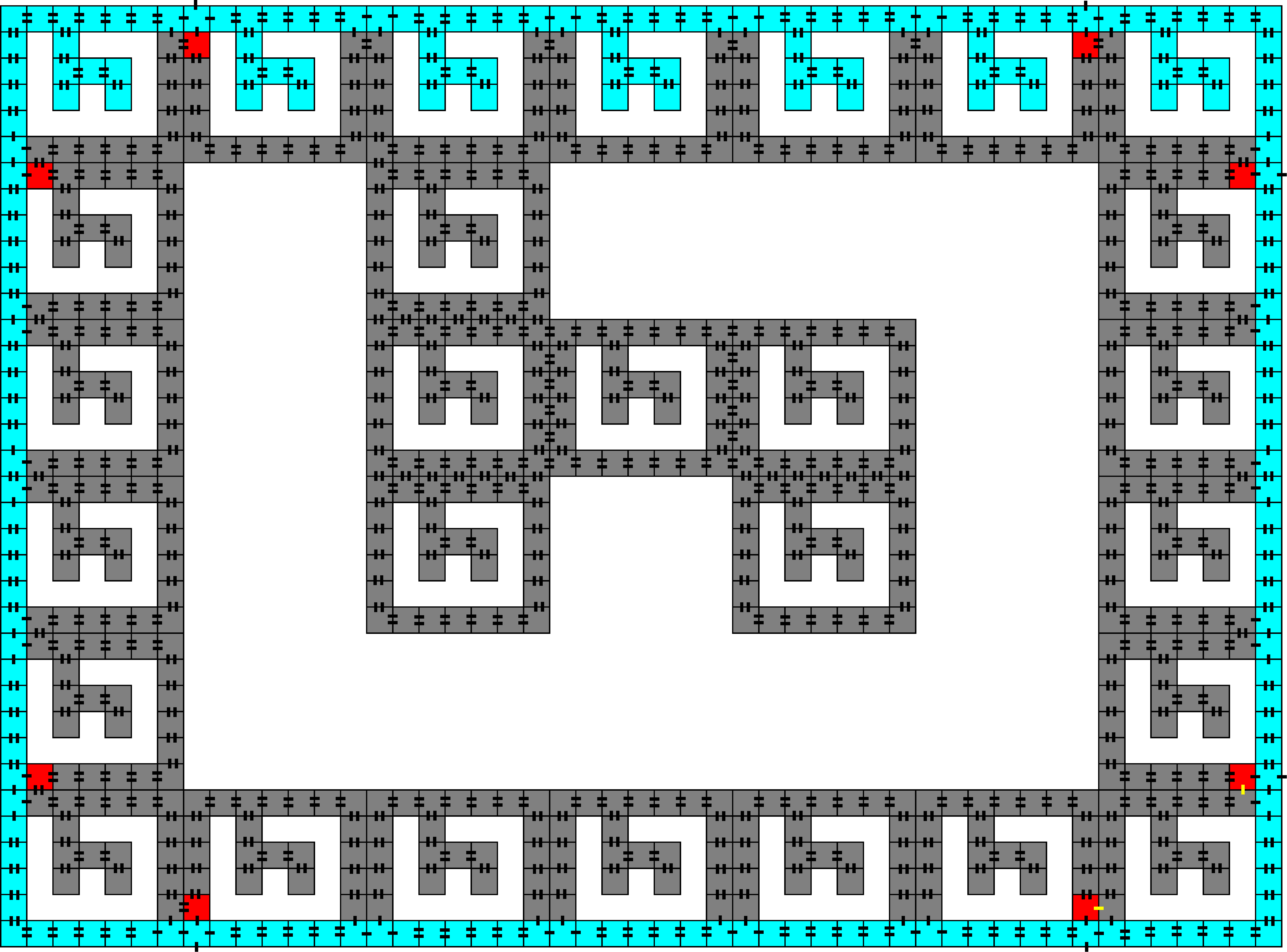}
    \caption{ A depiction of \fracasm{\Gamma}{2}{(12,j)} for some $j\in \N$ corresponding to the $j$ class of \grout. Note the glues that are exposed on tiles adjacent to tiles with \indicating\ glues (red tiles). In this case, as position $12$ in the generator $G$ is in $G_{int}$, \grout\ supertiles bind to on all four sides of \fracasm{\Gamma}{2}{12}. \grout\ supertiles that bind to \indicating\ glues expose \stagebinding\ glues which allow \fracasm{\Gamma}{2}{(12,j)} to bind in position $12$ during the self-assembly of a \fracasm{\Gamma}{3}{j} supertile.}
	\label{fig:4sided2-12-grout}
\end{figure}

\begin{figure*}[htp]
	\centering        
	\includegraphics[width=5in]{}
    \caption{A schematic picture of \fracasm{\Gamma}{3}{j}. Note the red tile locations where tiles with \indicating\ glues (red tiles) will be present. Also note that \grout\ supertiles that bind to the northernmost tiles of the supertile depicted here hard-code the placement of tiles in locations corresponding to \fractal{X}{1}.}
	\label{fig:4sided3-details}
\end{figure*}

By Lemma~\ref{lem:4sided} there is a 2HAM system $\mathcal{T}'$ which finitely self-assembles the dssf with generator $G_{bdry}$. Let $T'$ be the tile set for $\mathcal{T}'$ as described in the construction for Lemma~\ref{lem:4sided}. We will show how to modify the tile set $T'$ to obtain $T$. 

\subsubsection{Self-assembly of stage $2$ for $4$-sided fractals}\label{sec:4sides-stage2}

Let $\mathcal{G}_1$ denote the full grid-graph of $G$ and let $\mathcal{G}^-_1$ denote the full grid-graph of $G_{int}$. Note that it is not necessary for $\mathcal{G}_1^-$ to be connected. Also note that $\mathcal{G}^-_1$ may be empty if $G = L_G\cup R_G\cup T_G\cup B_G$ as in the case for the Sierpi\'{n}ski carpet dssf. An example of $\mathcal{G}^-_1$ for the generator shown in Figure~\ref{fig:4sided1} is shown in Figure~\ref{fig:4sided-junk} where vertices correspond to squares and there is assumed to be an edge between two vertices iff these squares abut. Now let $\mathcal{G}$ denote the full grid-graph of \fractal{X}{2}. Let $\mathcal{G}^-$ be the (not necessarily connected) graph obtained by removing the northernmost, southernmost, easternmost, and westernmost points from $\mathcal{G}$. For the generator given in Figure~\ref{fig:4sided1}, $\mathcal{G}^-$ is shown in Figure~\ref{fig:4sided-junk2}. Finally, let $\mathcal{G}_c$ be the connected component of $\mathcal{G}^-$ that is not equal to a connected component of $\mathcal{G}^-_1$ up to translation. See Figure~\ref{fig:4sided2-special} for an example of $\mathcal{G}_c$  for the generator shown in Figure~\ref{fig:4sided1}.

Then, the \initializer\ tiles of $T$ are hard-coded to self-assemble $r$ different versions of $\mathcal{G}_c$ which we call \fracasm{\Gamma}{2}{i} for $1\leq i \leq r$. Similar to the \initializer\ tiles described in the proof of Lemma~\ref{lem:4sided}, each \fracasm{\Gamma}{2}{i} contains tiles in key locations (defined as in Lemma~\ref{lem:4sided}) that expose \indicating\ glues that depend on the value of $i$. These \initializer\ tiles can be thought of as being equivalent to the \initializer\ tiles of $T'$, appropriately modified with additional glues and additional tiles that hard-code the stage $1$ subassemblies of \initializer\ supertiles whose positions in the \fracasm{\Gamma}{2}{i} correspond to the points of $G_{int}$. In the example in Figure~\ref{fig:4sided2-special}, these additional tiles self-assemble at locations $9$, $12$, $13$, $14$, $17$, and $18$ within stage-$1$ subassemblies at locations $8$ through $28$, as well as self-assemble entire stage-$1$ subassemblies at locations $9$, $12$, $13$, $14$, $17$, and $18$. Figure~\ref{fig:4sided2-special} depicts the locations of tiles of \fracasm{\Gamma}{2}{i} for the generator in Figure~\ref{fig:4sided1}, where red tiles may contain edges with \indicating\ glues. 

\subsubsection{Tile types for \grout\ tiles.}

The \grout\ tile types of $T$ consist of tile types that are equivalent to the \grout\ tile types of $T'$ with additional glues along with additional tile types that hard-code the appropriate stage $1$ growth that complete any subassembles that represent \fractal{X}{1}. Figure~\ref{fig:4sided2-12-grout} gives an example of \fracasm{\Gamma}{2}{12} with complete grout. In this particular example, \grout\ tiles have been hard-coded to place tiles in locations corresponding to \fractal{X}{1} as the \grout\ tiles bind to the northernmost tiles of \fracasm{\Gamma}{2}{12}. \grout\ tiles are added for each $i$ between $1$ and $r$ (inclusive) and as in Figure~\ref{fig:4sided2-12-grout}, \grout\ tiles may bind to some \fracasm{\Gamma}{2}{i} where $i$ corresponds to a point in $G_{int}$. In this case, \grout\ tiles can be defined to completely surround \fracasm{\Gamma}{2}{i} (or \fracasm{\Gamma}{s}{i} for $s > 2$) and expose appropriate \stagebinding\ glues at key locations. \stagebinding\ glues ensure that for all $i$ and $j$ both between $1$ and $8$ (inclusive), once a sufficient number of \grout\ tiles bind to each \fracasm{\Gamma}{2}{i}, the resulting supertiles, which we again call \fracasm{\Gamma}{2}{(i,j)} (or \fracasm{\Gamma}{s}{(i,j)} for $s > 2$) can bind to yield a supertile corresponding to \fractal{X}{3} (or \fractal{X}{s} for $s>2$). We call this latter supertile \fracasm{\Gamma}{3}{j} (or \fracasm{\Gamma}{s+1}{j} for $s>2$). Figure~\ref{fig:4sided3-details} depicts \fracasm{\Gamma}{3}{j}. 

As $\mathcal{T}$ is based on $\mathcal{T}'$, the assembly sequences of each system share similarities that are important to note.  For a stage $s\in \N$, and $j$ such that $1\leq j \leq 8$, let \fracasm{\Gamma'}{s}{j} be the supertile producible in $\mathcal{T}'$ corresponding to $X'^s$. Note that as the tile types in $T$ are based on tile types in $T'$, in an assembly sequence for \fracasm{\Gamma}{s}{j}, the tiles in \fracasm{\Gamma}{s}{j} with locations (up to some fixed positioning of the supertile) corresponding to points of $G_{bdry}$ (at any stage) must bind in an order corresponding to some assembly sequence of \fracasm{\Gamma'}{s}{j}. In other words, the portion of the fractal $\bX$ equal to $\bX'$ must self-assemble following an assembly sequence in $\mathcal{T}$ analogous to an assembly sequence in $\mathcal{T}'$. The analogous assembly sequence can be obtained by ignoring any supertile combinations that involve a supertile corresponding to points of $G_{int}$ at any stage. Therefore, $\bX'$ finitely self-assembles in $\mathcal{T}$. The additional \initializer\ and \grout\ tiles are defined to ``fill in'' tile locations in $\bX$ that are not in $\bX'$ by nondeterministically binding, following one of many possible assembly sequences. 

Finally, the \initializer\ tiles assemble a supertile that corresponds to \fractal{X}{2}, and \grout\ supertiles tiles can attach to supertiles that correspond to \fractal{X}{s} for some stage $s\geq 2$ to form supertiles that bind to yield a supertile corresponding to \fractal{X}{s+1}. Therefore, with tiles $T$, the 2HAM system $\mathcal{T} = (T, 2)$ finitely self-assembles $\bX$. Therefore, Theorem~\ref{thm:four-sided} holds.

\section{A $3$-sided Fractal That Does Not Finitely Self-assemble}\label{sec:impossibility-overview}

In this section we prove that there exist $3$-sided fractals that do not finitely self-assemble in the 2HAM. 

\begin{theorem}\label{thm:three-sided}
  There exists a $3$-sided fractal $\bX$ for which there is no 2HAM TAS $\calT_{\bX} = (T, \tau)$ that finitely self-assembles $\bX$ for any temperature $\tau \in \N$.
\end{theorem}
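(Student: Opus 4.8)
The plan is a proof by contradiction. Suppose $\calT_{\bX}=(T,\tau)$ finitely self-assembles a $3$-sided fractal $\bX$, and set $n=|T|$. I take $\bX$ to be a concrete $3$-sided (but not $4$-sided) fractal whose generator $G$ contains a cycle --- so $\bX$ is not a tree fractal --- and whose single incomplete side is, say, the bottom. The structural feature to exploit is that along the bottom side of every sub-copy the present bottom points occur in several clusters separated by genuinely empty ``gaps'', and that the total gap width inside a stage-$s$ copy is $\Theta(w_G^{\,s})\to\infty$; in particular, when a bottom-corner sub-copy binds to the sub-copy directly above it they meet only along present bottom points, with nothing of $\bX$ lying below the gaps at the moment of binding.

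The first step is a ``separations must grow'' lemma, proved with no assumption on the tile set. Since $\bX$ contains sub-copies at infinitely many scales while $T$ has only finitely many glue types, and since in the 2HAM two supertiles combine as soon as their union is merely $\tau$-stable --- so only $\tau$ units of matching strength across the shared interface are needed --- for all but finitely many $s$ the stage-$s$ sub-copies must be distinguished from stage-$s'$ sub-copies ($s'\ne s$) by the \emph{positions}, not merely the types, of the matching glues used when two adjacent stage-$s$ sub-copies bind. If not, a bounded window of matching glues would recur for some $s\ne s'$, and combining the corresponding producible pieces would give a $\tau$-stable producible supertile whose shape places a stage-$s$ and a stage-$s'$ piece at a relative position occurring nowhere in $\bX$, contradicting finite self-assembly. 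Applied to the binding of a bottom-corner sub-copy, this forces matching glue pairs at horizontal separation $d(s)\to\infty$ to appear on present bottom points of stage-$s$ sub-copies, at locations with no $\bX$-tile beneath them when those glues are exposed.

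The heart of the argument is a pumping / window-movie step over one of these unbounded gaps. Grow a finite producible supertile $\beta$ that is a chunk of $\bX$ containing a complete stage-$s$ sub-copy $C$ with $d(s)>f(n,\tau)$ for an appropriate bound $f$, with $C$ straddling one of its own wide bottom gaps and with no $\bX$-tile below that gap. Pass a separating window across $\beta$ through this region. Because the part of $\beta$ near the gap that encodes the separation $d(s)$ is ``free-standing'' on the open side (no $\bX$-tile below it) and is longer than $f(n,\tau)$, the finite window movie must repeat, so I can splice $\beta$ into a producible supertile $\beta'$ identical to $\beta$ except that the horizontal offset between the two flanking portions of $C$ has changed by a nonzero $\delta$ with $|\delta|\le f(n,\tau)$. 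Finally, $\beta'$ has a shape not contained in $\bX$: the horizontal distances at which $\bX$ actually places such flanking configurations form the sparse set $\{c\,w_G^{\,t}\}_{t\ge 2}$, whose consecutive gaps exceed $f(n,\tau)$ once $s$ is large, so the perturbed distance is not among them. This contradicts finite self-assembly and proves the theorem.

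The step I expect to be hardest is the pumping/splicing: in the 2HAM a subassembly of a producible supertile need not itself be producible, combination is by rigid translation only, and $\tau$-stability is a global property of the binding graph, so the aTAM-style ``cut and paste'' is not directly available. I anticipate needing a window-movie lemma tailored to a window that cuts the plane across an \emph{empty} strip of $\bX$ adjacent to the gap, plus a careful check that the spliced assembly sequence stays within the set of producible supertiles --- every intermediate supertile and the result remaining $\tau$-stable --- and that the resized shape truly fails to embed in $\bX$ under every translation. Making the ``separations must grow'' lemma fully rigorous, i.e. ruling out bounded-window coordination at infinitely many stages for \emph{every} candidate tile set, is the other delicate point.
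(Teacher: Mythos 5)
Your plan has a genuine gap at its core, and it sits exactly where you flag it: both the ``separations must grow'' lemma and the pumping/window-movie splice are left as hopes rather than proofs, and neither is routine in the 2HAM. First, your argument repeatedly speaks of ``the glues used when two adjacent stage-$s$ sub-copies bind'' and of glues ``exposed'' on the bottom of a stage-$s$ sub-copy with nothing of $\bX$ beneath them. An adversarial tile set need never produce a stage-$s$ sub-copy as a standalone supertile, and need never combine sub-copies along their geometric boundaries; finite self-assembly only constrains the shapes of finite producibles, not the order or granularity of combination events. An impossibility proof must therefore reason about cuts of arbitrary producible assemblies, not about a presumed sub-copy-by-sub-copy assembly protocol. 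Second, the splice itself is the step known to fail in naive form in the 2HAM: a combination event can attach a large pre-formed supertile that places tiles on both sides of your window simultaneously, so the aTAM-style window-movie replay/splice does not go through, and you would further have to verify $\tau$-stability of every intermediate after shifting by $\delta$, since stability is a global cut condition. As written, the proposal is a program with its two load-bearing lemmas unproven.

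The paper avoids all of this by choosing the fractal so that the hard combinatorics collapses to a single bond. Its generator omits the right side in such a way that the bottom row of the generator is attached to the rest only through the edge between positions $9$ and $10$; consequently, in \emph{any} producible supertile whose shape contains $\fractal{X}{s}$, the binding graph has $s$ single-edge cuts (one per scale) each of which must carry a glue of strength at least $\tau$ by stability alone---no assumption about how the tile set chooses to assemble. With $g=|T|$, pigeonhole over these $g+1$ single-glue bridges in a stage-$(g+2)$ assembly yields two different scales whose bridge uses the same glue; removing the hanging pieces and re-attaching the smaller one across the larger bridge (the missing side guarantees the room, so nothing blocks the combination) gives a $\tau$-combinable pair whose union has a shape not contained in $\bX$, and the two pieces inherit $\tau$-stability from $\alpha$ because any deficient cut in a piece would be a deficient cut of $\alpha$. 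In short, the paper trades your unbounded-gap pumping argument for a geometric feature (a one-glue articulation point at every scale) that makes both the ``repetition'' step (pigeonhole on glue types) and the ``recombination'' step (a single matching glue suffices at temperature $\tau$) immediate; if you want to salvage your route, you would need to first prove a 2HAM-specific window-movie lemma for windows crossing empty strips, which is substantially harder than the theorem itself.
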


To prove Theorem~\ref{thm:three-sided}, we consider the fractal with generator
$G=\{(0,4),(1,4),$ $(2,4),$ $(3,4),$ $(0,3),$ $(2,3),$ $(0,2),$ $(2,2),$ $(0,1),$ $(0,0),$ $(1,0),$ $(2,0),$ $(3,0)\}$. Stages 1 and 2 of this fractal are shown in Figure~\ref{fig:3sided-impossible-stages}.  We refer to this fractal as $\bX$. For a stage $s\in \N$, we refer to the $i^{th}$ position of \fractal{X}{s} as \fracasm{X}{s}{i} where $1\leq i\leq 13$ (Figure~\ref{fig:3sided-impossible1}).  We call a supertile with shape \fractal{X}{s} \fractal{\gamma}{s}, and when such a supertile is a subassembly of some \fractal{\gamma}{s+1} and corresponds to points location $i$, we denote such assemblies by \fracasm{\gamma}{s}{i}.

\begin{figure}[htp]
  \centering
  \begin{subfigure}[b]{0.14\textwidth}
    \centering
    \includegraphics[width=6pc]{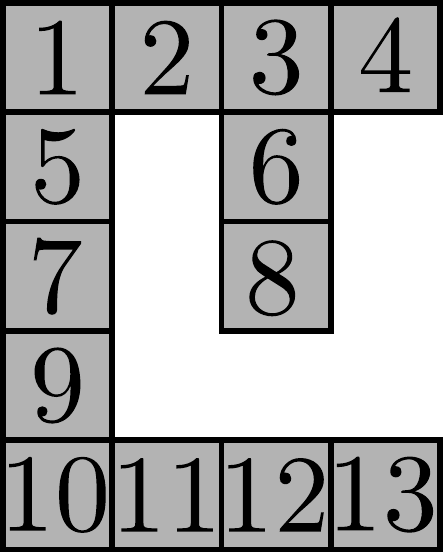}
    \caption{Stage $1$}\label{fig:3sided-impossible1}
  \end{subfigure}
  \quad\quad\quad
  \begin{subfigure}[b]{0.14\textwidth}
    \centering
    \includegraphics[width=6pc]{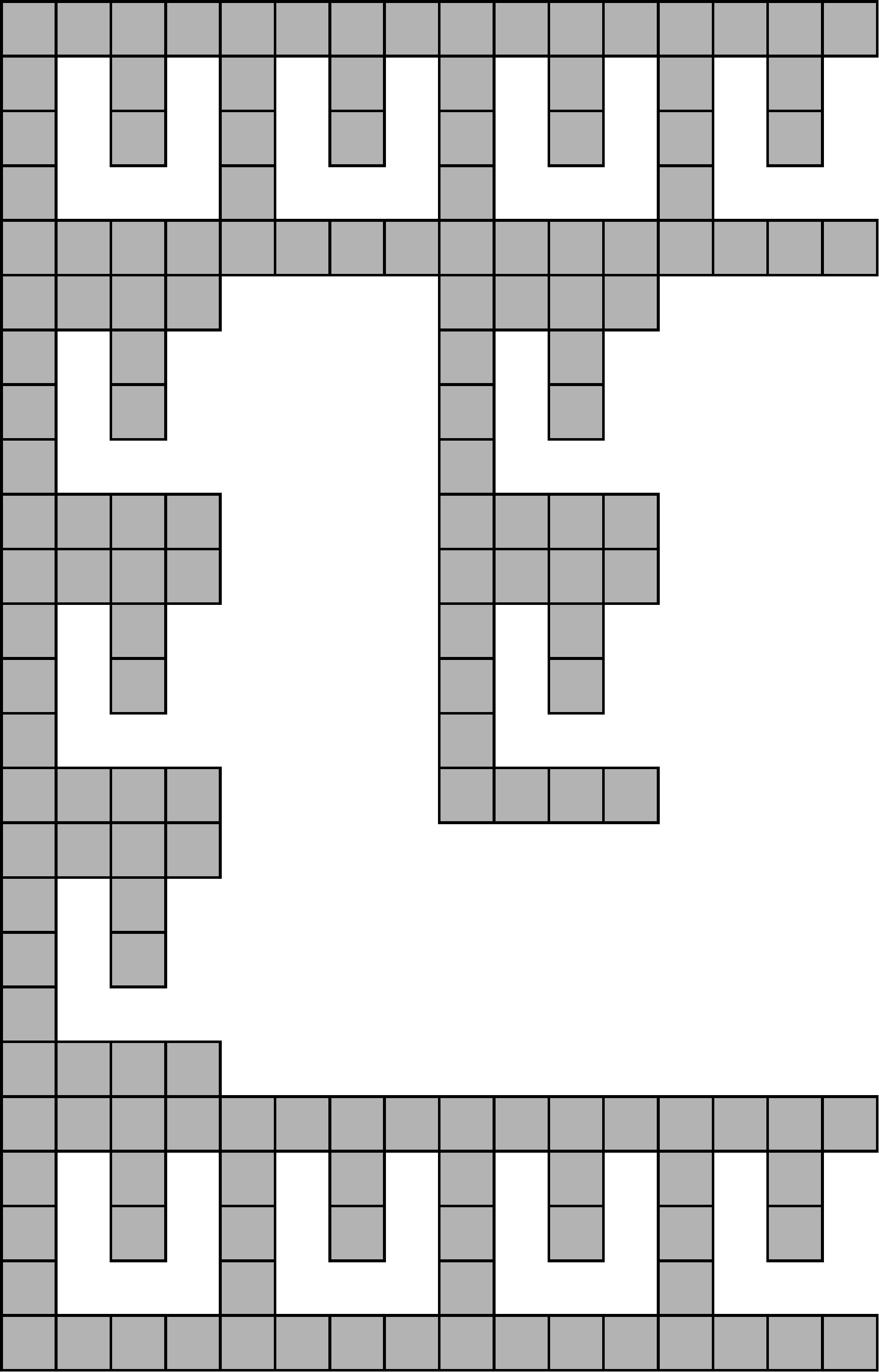}
    \caption{Stage $2$}\label{fig:3sided-impossible2}
  \end{subfigure}
  \caption{\fractal{X}{1} and \fractal{X}{2}}\label{fig:3sided-impossible-stages}
\end{figure}

\begin{figure}[htp]
  \centering
  \begin{subfigure}[b]{0.14\textwidth}
    \centering
    \includegraphics[width=6pc]{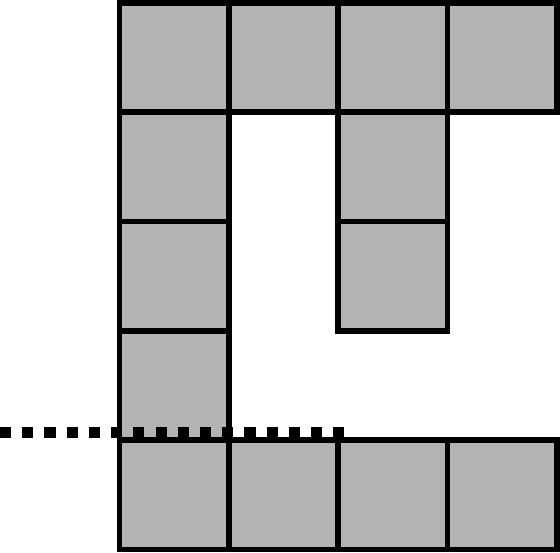}
    \caption{}\label{fig:3sided-impossible1-cut}
  \end{subfigure}
  \quad\quad\quad
  \begin{subfigure}[b]{0.14\textwidth}
    \centering
    \includegraphics[width=6pc]{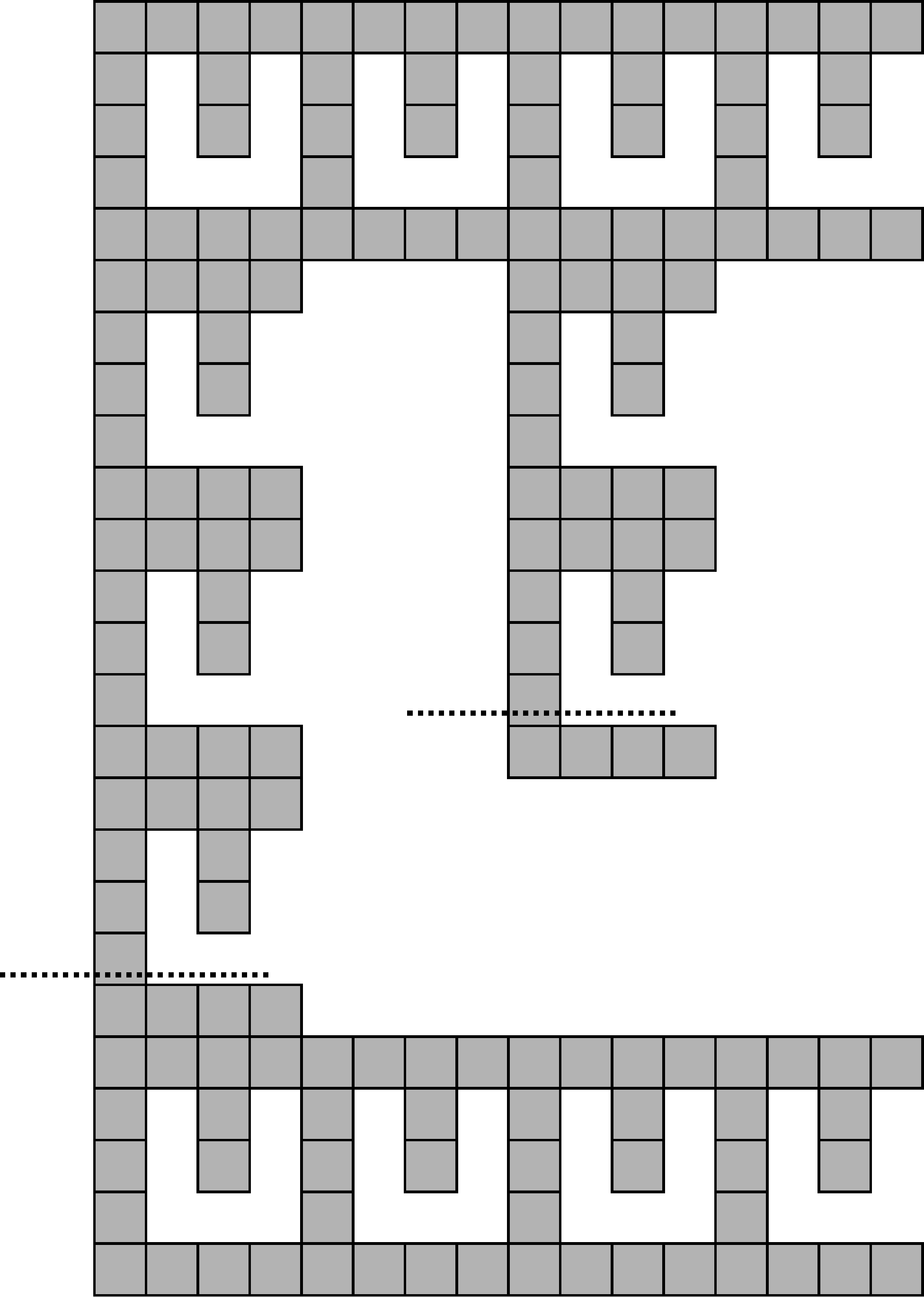}
    \caption{}\label{fig:3sided-impossible2-cut}
  \end{subfigure}
  \caption{Strength $\tau$ cuts in \fractal{\gamma}{1} and \fractal{\gamma}{2}}\label{fig:3sided-impossible-cuts}
\end{figure}

\begin{figure}[htp]
  \centering
  \includegraphics[scale=.07]{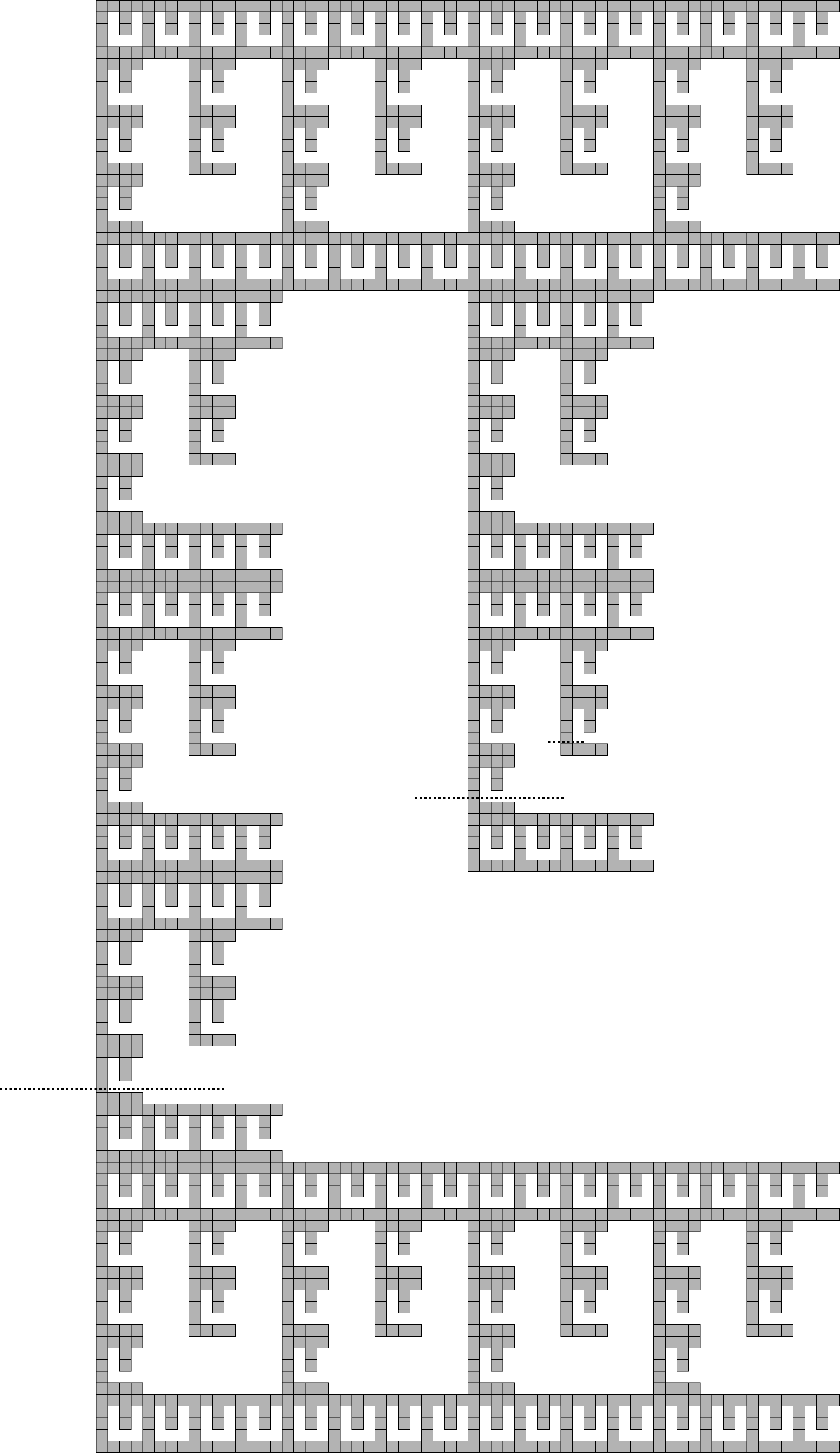}
  \caption{There are at least $s$ strength $\tau$ cuts within each supertile $\gamma^s$.  Here $\gamma^3$ with 3 strength $\tau$ cuts is shown.  The subassembly to the south of the rightmost cut is referred to as $\beta_1$, the subassembly to the south of the next rightmost cut as $\beta_2$, and the subassembly to the south of the leftmost cut as $\beta_3$.}\label{fig:3sided-impossible-cuts-2}
\end{figure}

\begin{figure}[htp]
  \centering
  \includegraphics[scale=.025]{}
  \caption{An example of erroneous binding within \fractal{\gamma}{5}. Because of the large number of tiles some of the \fractal{\gamma}{3} subassemblies are shown as rectangles. In this example, a $\tau$ strength cut is shown in the bottom right circle. The subassembly of \fractal{\gamma}{5} containing the tile to the north of this cut is $\alpha_2$ and the subassembly containing the tile to the south of this cut is $\beta_{s'}$.}\label{fig:3sided-impossible5-2}
\end{figure}

For the sake of contradiction, assume that $\calT_{\bX} = (T, \tau)$ is a 2HAM TAS such that $\bX$ finitely self-assembles in $\calT_{\bX}$. Consider any 2HAM TAS $\calT_{\bX} = (T, \tau)$. We show that $\calT_{\bX}$ does not finitely self-assemble $\bX$ by showing that there is a producible supertile $\alpha\in\mathcal{A}_{\Box}[\calT_{\bX}]$ that does not have the shape of of any subset of $\bX$. 

Then, for any $s\in \N$, and for every supertile $\alpha$ such that $\alpha$ contains a $\gamma^s$ subassembly, there is a stage 1 subassembly $\fractal{\gamma}{1}$ of  $\fracasm{\gamma}{s}{9}$ such that this stage 1 subassembly contains a strength $\tau$ cut between \fracasm{\gamma}{1}{9} and \fracasm{\gamma}{1}{10} that separates some \fracasm{\gamma}{s}{10}, \fracasm{\gamma}{s}{11}, \fracasm{\gamma}{s}{12}, and \fracasm{\gamma}{s}{13} subassemblies, along with a sequence of subassemblies \fracasm{\gamma}{i}{10},  \fracasm{\gamma}{i}{11}, \fracasm{\gamma}{i}{12}, and \fracasm{\gamma}{i}{13}, $i<s$, from the rest of \fractal{\gamma}{s}.  For an example of such a cut, see the bottom left cuts shown in Figure~\ref{fig:3sided-impossible2-cut} for $s=2$ and in Figure~\ref{fig:3sided-impossible-cuts-2} for $s=3$.  Then note that for any $s>2$, \fracasm{\gamma}{s}{8} has a \fractal{\gamma}{s-1} subassembly which contains a similar strength $\tau$ cut between two tiles \fracasm{\gamma}{1}{9} and \fracasm{\gamma}{1}{10} in the \fractal{\gamma}{1} subassembly directly above \fracasm{\gamma}{s-1}{10}. 

Then \fracasm{\gamma}{s}{8} has a subassembly \fractal{\gamma}{1} which contains a single strength $\tau$ cut between \fracasm{\gamma}{1}{9} and \fracasm{\gamma}{1}{10} (shown as the cut on the right in Figure~\ref{fig:3sided-impossible2-cut}).  We also note that when $s=1$ there is one strength $\tau$ cut between \fracasm{\gamma}{s}{9} and \fracasm{\gamma}{s}{10}. Therefore every supertile $\alpha$ such that there exists $A\in \alpha$ with $\fractal{X}{s}\subseteq A$ contains a sequence of $s$ strength $\tau$ cuts between positions 9 and 10 of $s$ distinct stage 1 subassemblies.  An example of this for $s=3$ is shown in Figure~\ref{fig:3sided-impossible-cuts-2}.

Let $g$ be the number of tiles in $T$.  Consider a producible supertile $\alpha$ such that there exists $A\in \alpha$ with $\fractal{X}{g+2}\subseteq A$.  Within $\alpha$ there is a \fractal{\gamma}{g+2} subassembly with some \fracasm{\gamma}{g+1}{6} as a subassembly.  As we have shown, this \fracasm{\gamma}{g+1}{6} contains a sequence of $g+1$ strength $\tau$ cuts, each consisting of a single glue. By the pigeonhole principle, there are at least two such cuts that consist of the same single $\tau$ strength glue.  Let the subassembly to the south of the cut within \fractal{\gamma}{1} be called $\beta_1$, the subassembly to the south of the cut within \fracasm{\gamma}{2}{9} be called $\beta_2$, etc., with the subassembly to the south of the cut within \fracasm{\gamma}{g+1}{9} called $\beta_{g+1}$ (see Figure~\ref{fig:3sided-impossible-cuts-2} for an example of $\beta_1$, $\beta_2$, and $\beta_3$).  
Consider two cuts directly above $\beta_s$ and $\beta_{s'}$ with $s'>s$ that contain the same glue.  Let $\alpha_2$ be $\alpha$ with subassemlies of $\beta_s$, $\beta_{s+1}$, $\dots$, $\beta_{s'}$ removed.  We will show that $\alpha_2$ and $\beta_{s'}$ are producible assemblies.  Additionally, we notice that between \fracasm{X}{g+2}{8} and \fracasm{X}{g+2}{12} there is enough room to fit an entire stage \fractal{X}{g+1}, and since $s'\leq g+1$, erroneous binding of $\alpha_2$ and $\beta_{s'}$ cannot be prevented. Figure~\ref{fig:3sided-impossible5-2} depicts an example of such erroneous binding within a $\gamma^5$ supertile.  Hence $\alpha_2$ and $\beta_{s'}$ are $\tau$-combinable into some supertile $\chi\in\mathcal{A}[\calT_{\bX}]$. Then, note that for all $A\in \chi$, the set of all tile locations of tiles in $A$ is not contained in $\subseteq\bX$. Therefore, $\calT_{\bX}$ does not finitely self-assembly $\bX$.

To complete the proof, we now show that the subassemblies $\alpha_2$ and $\beta_{s'}$ are producible. If one of $\alpha_2$ or $\beta_{s'}$ is not producible, then the binding graph of that one must contain a cut with strength less than $\tau$. However, since every $\beta_i$, $1\leq i\leq g+1$, is connected to $\alpha$ by a singe strength-$\tau$ glue between two single tiles, if the the binding graph of $\alpha_2$ or $\beta_{s'}$ contains a cut with strength less than $\tau$, then $\alpha$ would contain the same cut with strength less than $\tau$. This contradicts the assumption that $\alpha$ is producible.  Hence $\alpha_2,\beta_{s'}\in\mathcal{A}[\calT_{\bX}]$. Thus, Theorem~\ref{thm:three-sided} holds.

\section{Conclusion}\label{sec:conclusion}

Theorem~\ref{thm:four-sided} shows that any $4$-sided dssf finitely self-assembles in the 2HAM at temperature $2$ and with scale factor $1$. Theorem~\ref{thm:three-sided} shows that there exists a $3$-sided fractal that does not finitely self-assemble in any 2HAM system at any temperature. 
For a $4$-sided fractal generator $G$, the presence of a Hamiltonian cycle in the full grid graph of the points on the perimeter of $G$ proved helpful in our construction. Similar techniques to those described in Section~\ref{sec:four-sided-fractals} might be used to show that a much more general class of fractals finitely self-assembles in the 2HAM at temperature $2$ with scale factor $1$. In particular, a fractal belonging to this class can be described as having a 
generator such that 1) the full grid-graph of the generator contains a  Hamiltonian cycle through each point in the generator and 2) the northernmost, southernmost, easternmost, and westernmost points of the generator each contain $2$ distinct points. An example of such a fractal is shown in Figure~\ref{fig:conclusion-example}
\begin{figure}[htp]
  \begin{center}
        \includegraphics[width=.7in]{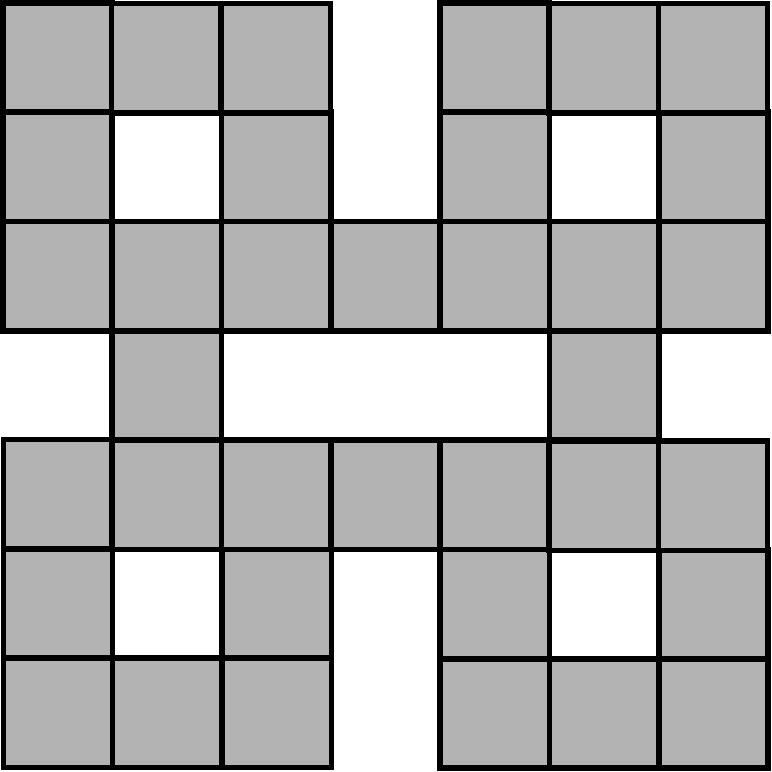}
  \end{center}
    \caption{Do fractals with generators like the one depicted in this figure finitely self-assemble in the 2HAM?}\label{fig:conclusion-example}
\end{figure}

\vspace{-20pt}

\section{Acknowledgements}\label{sec:acknowledgements}

The authors would like to thank the anonymous reviewers for their time and effort in helping to improve this paper.

\bibliographystyle{abbrv} 
\bibliography{tam,experimental_refs}

\begin{thebibliography}{10}

\bibitem{RNaseSODA2010}
Z.~Abel, N.~Benbernou, M.~Damian, E.~Demaine, M.~Demaine, R.~Flatland,
  S.~Kominers, and R.~Schweller.
\newblock Shape replication through self-assembly and {RN}ase enzymes.
\newblock In {\em SODA 2010: Proceedings of the Twenty-first Annual ACM-SIAM
  Symposium on Discrete Algorithms}, Austin, Texas, 2010. Society for
  Industrial and Applied Mathematics.

\bibitem{TreeFractals}
K.~Barth, D.~Furcy, S.~M. Summers, and P.~Totzke.
\newblock Scaled tree fractals do not strictly self-assemble.
\newblock In {\em Unconventional Computation \& Natural Computation (UCNC)
  2014, University of Western Ontario, London, Ontario, Canada {\rm July 14-18,
  2014}}, pages 27--39, 2014.

\bibitem{Versus}
S.~Cannon, E.~D. Demaine, M.~L. Demaine, S.~Eisenstat, M.~J. Patitz, R.~T.
  Schweller, S.~M. Summers, and A.~Winslow.
\newblock Two hands are better than one (up to constant factors): Self-assembly
  in the 2ham vs. atam.
\newblock In N.~Portier and T.~Wilke, editors, {\em STACS}, volume~20 of {\em
  LIPIcs}, pages 172--184. Schloss Dagstuhl - Leibniz-Zentrum fuer Informatik,
  2013.

\bibitem{UniversalShapeReplicators}
C.~Chalk, E.~D. Demaine, M.~L. Demaine, E.~Martinez, R.~Schweller, L.~Vega, and
  T.~Wylie.
\newblock Universal shape replicators via self-assembly with attractive and
  repulsive forces.
\newblock In {\em Proceedings of the Twenty-Eighth Annual ACM-SIAM Symposium on
  Discrete Algorithms}, SODA '17, pages 225--238, Philadelphia, PA, USA, 2017.
  Society for Industrial and Applied Mathematics.

\bibitem{MHAM}
C.~T. Chalk, D.~A. Fernandez, A.~Huerta, M.~A. Maldonado, R.~T. Schweller, and
  L.~Sweet.
\newblock Strict self-assembly of fractals using multiple hands.
\newblock {\em Algorithmica}, pages 1--30, 2015.

\bibitem{AGKS05g}
Q.~Cheng, G.~Aggarwal, M.~H. Goldwasser, M.-Y. Kao, R.~T. Schweller, and P.~M.
  de~Espan\'{e}s.
\newblock Complexities for generalized models of self-assembly.
\newblock {\em SIAM Journal on Computing}, 34:1493--1515, 2005.

\bibitem{CookFuSch11}
M.~Cook, Y.~Fu, and R.~T. Schweller.
\newblock Temperature 1 self-assembly: Deterministic assembly in 3{D} and
  probabilistic assembly in 2{D}.
\newblock In {\em SODA 2011: Proceedings of the 22nd Annual ACM-SIAM Symposium
  on Discrete Algorithms}. SIAM, 2011.

\bibitem{DDFIRSS07}
E.~D. Demaine, M.~L. Demaine, S.~P. Fekete, M.~Ishaque, E.~Rafalin, R.~T.
  Schweller, and D.~L. Souvaine.
\newblock Staged self-assembly: nanomanufacture of arbitrary shapes with
  ${O}(1)$ glues.
\newblock {\em Natural Computing}, 7(3):347--370, 2008.

\bibitem{OneTile}
E.~D. Demaine, M.~L. Demaine, S.~P. Fekete, M.~J. Patitz, R.~T. Schweller,
  A.~Winslow, and D.~Woods.
\newblock One tile to rule them all: Simulating any tile assembly system with a
  single universal tile.
\newblock In {\em Proceedings of the 41st International Colloquium on Automata,
  Languages, and Programming (ICALP 2014), {\rm IT University of Copenhagen,
  Denmark, July 8-11, 2014}}, volume 8572 of {\em LNCS}, pages 368--379, 2014.

\bibitem{j2HAMIU}
E.~D. Demaine, M.~J. Patitz, T.~A. Rogers, R.~T. Schweller, S.~M. Summers, and
  D.~Woods.
\newblock The two-handed tile assembly model is not intrinsically universal.
\newblock {\em Algorithmica}, 74(2):812--850, Feb 2016.

\bibitem{RNAPods}
E.~D. Demaine, M.~J. Patitz, R.~T. Schweller, and S.~M. Summers.
\newblock {Self-Assembly of Arbitrary Shapes Using RNAse Enzymes: Meeting the
  Kolmogorov Bound with Small Scale Factor (extended abstract)}.
\newblock In T.~Schwentick and C.~D{\"u}rr, editors, {\em 28th International
  Symposium on Theoretical Aspects of Computer Science (STACS 2011)}, volume~9
  of {\em Leibniz International Proceedings in Informatics (LIPIcs)}, pages
  201--212, Dagstuhl, Germany, 2011. Schloss Dagstuhl--Leibniz-Zentrum fuer
  Informatik.

\bibitem{DotKarMas10}
D.~Doty, L.~Kari, and B.~Masson.
\newblock Negative interactions in irreversible self-assembly.
\newblock In {\em DNA 16: Proceedings of The Sixteenth International Meeting on
  DNA Computing and Molecular Programming}, Lecture Notes in Computer Science,
  pages 37--48. Springer, 2010.

\bibitem{IUSA}
D.~Doty, J.~H. Lutz, M.~J. Patitz, R.~T. Schweller, S.~M. Summers, and
  D.~Woods.
\newblock The tile assembly model is intrinsically universal.
\newblock In {\em Proceedings of the 53rd Annual IEEE Symposium on Foundations
  of Computer Science}, FOCS 2012, pages 302--310, 2012.

\bibitem{SFTSAFT}
D.~Doty, M.~J. Patitz, D.~Reishus, R.~T. Schweller, and S.~M. Summers.
\newblock Strong fault-tolerance for self-assembly with fuzzy temperature.
\newblock In {\em Proceedings of the 51st Annual IEEE Symposium on Foundations
  of Computer Science (FOCS 2010)}, pages 417--426, 2010.

\bibitem{LSAT1}
D.~Doty, M.~J. Patitz, and S.~M. Summers.
\newblock Limitations of self-assembly at temperature 1.
\newblock In {\em Proceedings of The Fifteenth International Meeting on
  {D}{N}{A} Computing and Molecular Programming (Fayetteville, Arkansas, USA,
  June 8-11, 2009)}, pages 283--294, 2009.

\bibitem{Polyominoes}
S.~P. Fekete, J.~Hendricks, M.~J. Patitz, T.~A. Rogers, and R.~T. Schweller.
\newblock Universal computation with arbitrary polyomino tiles in
  non-cooperative self-assembly.
\newblock In {\em Proceedings of the Twenty-Sixth Annual ACM-SIAM Symposium on
  Discrete Algorithms (SODA 2015), San Diego, CA, USA {\rm January 4-6, 2015}},
  pages 148--167, 2015.

\bibitem{FujHarParWinMur07}
K.~Fujibayashi, R.~Hariadi, S.~H. Park, E.~Winfree, and S.~Murata.
\newblock Toward reliable algorithmic self-assembly of {DNA} tiles: A
  fixed-width cellular automaton pattern.
\newblock {\em Nano Letters}, 8(7):1791--1797, 2007.

\bibitem{Polygons}
O.~Gilber, J.~Hendricks, M.~J. Patitz, and T.~A. Rogers.
\newblock Computing in continuous space with self-assembling polygonal tiles.
\newblock In {\em Proceedings of the Twenty-Seventh Annual ACM-SIAM Symposium
  on Discrete Algorithms (SODA 2016), Arlington, VA, USA {\rm January 10-12,
  2016}}, pages 937--956, 2016.

\bibitem{HFractals}
J.~Hendricks, J.~Obseth, M.~J. Patitz, and S.~M. Summers.
\newblock Hierarchical growth is necessary and (sometimes) sufficient to
  self-assemble discrete self-similar fractals.
\newblock In preparation.

\bibitem{STAM-fractals}
J.~Hendricks, M.~Olsen, M.~J. Patitz, T.~A. Rogers, and H.~Thomas.
\newblock Hierarchical self-assembly of fractals with signal-passing tiles
  (extended abstract).
\newblock In {\em Proceedings of the 22nd International Conference on DNA
  Computing and Molecular Programming (DNA 22),
  Ludwig-Maximilians-Universität, Munich, Germany {\rm September 4-8, 2016}},
  pages 82--97.

\bibitem{RTAM}
J.~Hendricks, M.~J. Patitz, and T.~Rogers.
\newblock Reflections on tiles (in self-assembly).
\newblock In {\em Proceedings of The 21st International Conference on DNA
  Computing and Molecular Programming (Harvard University, Aug 17-21, 2015).},
  2015.
\newblock to appear.

\bibitem{DirectedNotIU}
J.~Hendricks, M.~J. Patitz, and T.~A. Rogers.
\newblock Universal simulation of directed systems in the abstract tile
  assembly model requires undirectedness.
\newblock In {\em Proceedings of the 57th Annual IEEE Symposium on Foundations
  of Computer Science (FOCS 2016), New Brunswick, New Jersey, USA {\rm October
  9-11, 2016}}, pages 800--809.

\bibitem{j2HAMSim}
J.~Hendricks, M.~J. Patitz, and T.~A. Rogers.
\newblock The simulation powers and limitations of higher temperature
  hierarchical self-assembly systems.
\newblock {\em Fundam. Inform.}, 155(1-2):131--162, 2017.

\bibitem{JonoskaKarpenkoSignals}
N.~Jonoska and D.~Karpenko.
\newblock Active tile self-assembly, self-similar structures and recursion.
\newblock Technical Report 1211.3085, Computing Research Repository, 2012.

\bibitem{JonoskaSignals1}
N.~Jonoska and D.~Karpenko.
\newblock Active tile self-assembly, part 1: Universality at temperature 1.
\newblock {\em International Journal of Foundations of Computer Science},
  25(02):141--163, 2014.

\bibitem{JonoskaSignals2}
N.~Jonoska and D.~Karpenko.
\newblock Active tile self-assembly, part 2: Self-similar structures and
  structural recursion.
\newblock {\em International Journal of Foundations of Computer Science},
  25(02):165--194, 2014.

\bibitem{KS07}
M.-Y. Kao and R.~T. Schweller.
\newblock Reducing tile complexity for self-assembly through temperature
  programming.
\newblock In {\em Proceedings of the 17th Annual ACM-SIAM Symposium on Discrete
  Algorithms (SODA 2006), Miami, Florida, Jan. 2006, pp. 571-580}, 2007.

\bibitem{jKautzShutters}
S.~Kautz and B.~Shutters.
\newblock Self-assembling rulers for approximating generalized sierpinski
  carpets.
\newblock {\em Algorithmica}, 67(2):207--233, 2013.

\bibitem{KL09}
S.~M. Kautz and J.~I. Lathrop.
\newblock Self-assembly of the {S}ierpinski carpet and related fractals.
\newblock In {\em Proceedings of The Fifteenth International Meeting on
  {D}{N}{A} Computing and Molecular Programming (Fayetteville, Arkansas, USA,
  June 8-11, 2009)}, pages 78--87, 2009.

\bibitem{jSSADST}
J.~I. Lathrop, J.~H. Lutz, and S.~M. Summers.
\newblock Strict self-assembly of discrete {S}ierpinski triangles.
\newblock {\em Theoretical Computer Science}, 410:384--405, 2009.

\bibitem{NegativeGluesShapes}
A.~Luchsinger, R.~Schweller, and T.~Wylie.
\newblock Self-assembly of shapes at constant scale using repulsive forces.
\newblock {\em Natural Computing}, Aug 2018.

\bibitem{LutzShutters12}
J.~H. Lutz and B.~Shutters.
\newblock Approximate self-assembly of the sierpinski triangle.
\newblock {\em Theory Comput. Syst.}, 51(3):372--400, 2012.

\bibitem{WoodsMeunierSTOC}
P.~Meunier and D.~Woods.
\newblock The non-cooperative tile assembly model is not intrinsically
  universal or capable of bounded turing machine simulation.
\newblock In {\em Proceedings of the 49th Annual {ACM} {SIGACT} Symposium on
  Theory of Computing, {STOC} 2017, Montreal, QC, Canada, June 19-23, 2017},
  pages 328--341, 2017.

\bibitem{IUNeedsCoop}
P.-E. Meunier, M.~J. Patitz, S.~M. Summers, G.~Theyssier, A.~Winslow, and
  D.~Woods.
\newblock Intrinsic universality in tile self-assembly requires cooperation.
\newblock In {\em Proceedings of the ACM-SIAM Symposium on Discrete Algorithms
  (SODA 2014), (Portland, OR, USA, January 5-7, 2014)}, pages 752--771, 2014.

\bibitem{jSignals}
J.~E. Padilla, M.~J. Patitz, R.~T. Schweller, N.~C. Seeman, S.~M. Summers, and
  X.~Zhong.
\newblock Asynchronous signal passing for tile self-assembly: Fuel efficient
  computation and efficient assembly of shapes.
\newblock {\em International Journal of Foundations of Computer Science},
  25(4):459--488, 2014.

\bibitem{PatitzSurvey}
M.~J. Patitz.
\newblock An introduction to tile-based self-assembly and a survey of recent
  results.
\newblock {\em Natural Computing}, 13(2):195--224, 2014.

\bibitem{SingleNegative}
M.~J. Patitz, R.~T. Schweller, and S.~M. Summers.
\newblock Exact shapes and turing universality at temperature 1 with a single
  negative glue.
\newblock In {\em Proceedings of the 17th international conference on DNA
  computing and molecular programming}, DNA'11, pages 175--189, 2011.

\bibitem{jSADSSF}
M.~J. Patitz and S.~M. Summers.
\newblock Self-assembly of discrete self-similar fractals.
\newblock {\em Natural Computing}, 1:135--172, 2010.

\bibitem{RoPaWi04}
P.~W. Rothemund, N.~Papadakis, and E.~Winfree.
\newblock Algorithmic self-assembly of {DNA} {S}ierpinski triangles.
\newblock {\em PLoS Biology}, 2(12):2041--2053, 2004.

\bibitem{RothTriangles}
P.~W.~K. Rothemund, N.~Papadakis, and E.~Winfree.
\newblock Algorithmic self-assembly of dna sierpinski triangles.
\newblock {\em PLoS Biol}, 2(12):e424, 12 2004.

\bibitem{RotWin00}
P.~W.~K. Rothemund and E.~Winfree.
\newblock The program-size complexity of self-assembled squares (extended
  abstract).
\newblock In {\em STOC '00: Proceedings of the thirty-second annual ACM
  Symposium on Theory of Computing}, pages 459--468, Portland, Oregon, United
  States, 2000. ACM.

\bibitem{SummersTemp}
S.~M. Summers.
\newblock Reducing tile complexity for the self-assembly of scaled shapes
  through temperature programming.
\newblock {\em Algorithmica}, 63(1-2):117--136, June 2012.

\bibitem{Winf98}
E.~Winfree.
\newblock {\em Algorithmic Self-Assembly of {D}{N}{A}}.
\newblock PhD thesis, California Institute of Technology, June 1998.

\end{thebibliography}

\ifx\arxiv\undefined

\else
\begin{appendices}
	\newpage
\onecolumn
\section{Tiles for Sierpinski's Carpet Construction}\label{sec:appendix-carpet-tiles}

We describe the supertiles that consist of \grout\ tiles for the Sierpinski's carpet construction. Tile types are defined so that eight different versions of each of the supertiles in each figure self-assemble, corresponding to the eight \grout\ classes. In each figure, $j\in \N$ is such that $1\leq j \leq 8$, and tiles of supertiles belong to \grout\ class $j$. Depending on the value of $j$, for $k \in \N$ such that $1\leq k \leq 8$, the glues $h_{k,j}$, $\hat{h}_{k,j}$, $h_{1,j}^*$, and $\hat{h}_{1,j}^*$ are defined to either have strength $1$ or $0$. Table~\ref{tbl:binding-glues} describes glue strengths for these glues for each $j$.  
In addition, depending on the value of $j$, for $p \in \{2, 4, 5, 7\}$, glues with labels $\hat{g}_{p,j}$ and $\bar{g}_{p,j}$ are defined in Table~\ref{tbl:indicator-glues}.

\ifx\arxiv\undefined

\definecolor{lightlightgray}{rgb}{.9,.9,.9}
\newcolumntype{g}{>{\columncolor{lightgray}}c}

\begin{table}[!htp]
\centering
  \begin{tabular}{ | g | c | }
    \hline\rowcolor{lightgray}
    $j$ & glues with strength $0$ \\ \hline               
    $1$ & $h_{5,j}$, $\hat{h}_{7,j}$  \\ \hline 
    $2$ & $h_{5,j}$, $\hat{h}_{7,j}$ \\ \hline
    $3$ & $\hat{h}_{4,j}$, $\hat{h}_{6,j}$ \\ \hline
    $4$ & $h_{2,j}$, $\hat{h}_{3,j}$ \\ \hline
    $5$ & $h_{1,j}$, $\hat{h}_{1,j}^*$ \\ \hline
    $6$ & $h_{2,j}$, $\hat{h}_{3,j}$ \\ \hline
    $7$ & $h_{2,j}$, $\hat{h}_{3,j}$ \\ \hline
    $8$ & $h_{1,j}$, $\hat{h}_{1,j}^*$ \\ \hline 
  \end{tabular}
\caption{For $j\in \N$ such that $1\leq j \leq 8$, this table lists those glues defined to have strength $0$. For all $k \in \N$ such that $1\leq k \leq 8$, $h_{k,j}$, $\hat{h}_{k,j}$, $h_{1,j}^*$, and $\hat{h}_{1,j}^*$ not listed in a row for a fixed value $j$ are defined to have strength $1$.}\label{tbl:binding-glues}
\end{table}

\begin{table}[!htp]
\centering
  \begin{tabular}{ | g | c | c | c | c | c | c | c | c | }
    \hline\rowcolor{lightgray}
    $j$ & $\hat{g}_{2,j}$ & $\bar{g}_{2,j}$ & $\hat{g}_{4,j}$ & $\bar{g}_{4,j}$ & $\hat{g}_{5,j}$ & $\bar{g}_{5,j}$ & $\hat{g}_{7,j}$ & $\bar{g}_{7,j}$ \\ \hline
    $1$ & $\hat{g}^n$ & $g^n$ & $\hat{g}^w$ & $g^w$ & $\hat{g}_1$ & $g_1$ & $g_1$ & $\hat{g}_1$ \\\hline
    $2$ & $\hat{g}^n$ & $g^n$ & $\hat{g}_2$ & $g_2$ & $g_2$ & $\hat{g}_2$ & $g^s$ & $\hat{g}^s$ \\\hline 
    $3$ & $\hat{g}^n$ & $g^n$ & $\hat{g}_3$ & $g_3$ & $g^e$ & $\hat{g}^e$ & $g_3$ & $\hat{g}_3$ \\\hline 
    $4$ & $\hat{g}_4$ & $g_4$ & $\hat{g}^w$ & $g^w$ & $g^e$ & $\hat{g}^e$ & $g_4$ & $\hat{g}_4$ \\\hline 
    $5$ & $\hat{g}_5$ & $g_5$ & $\hat{g}^w$ & $g^w$ & $g^e$ & $\hat{g}^e$ & $g_5$ & $\hat{g}_5$ \\\hline 
    $6$ & $\hat{g}_6$ & $g_6$ & $\hat{g}^w$ & $g^w$ & $g_6$ & $\hat{g}_6$ & $g^s$ & $\hat{g}^s$ \\\hline 
    $7$ & $\hat{g}^n$ & $g^n$ & $\hat{g}_7$ & $g_7$ & $g_7$ & $\hat{g}_7$ & $g^s$ & $\hat{g}^s$ \\\hline 
    $8$ & $\hat{g}_8$ & $g_8$ & $\hat{g}_8$ & $g_8$ & $g^e$ & $\hat{g}^e$ & $g^s$ & $\hat{g}^s$ \\\hline               
  \end{tabular}
\caption{For $j\in \N$ such that $1\leq j \leq 8$, this table gives glue definitions. For example, when $j=1$, $\hat{g}_{2,j} = \hat{g}^n$. All glues in this table are also defined to have strength $1$.}\label{tbl:indicator-glues}
\end{table}
\else

\fi

\subsection{\starter\ tile types}

Figures~\ref{fig:inits345} and~\ref{fig:inits678} depict \starter\ tile types.

\begin{figure}[htp]
    \centering
    
\begin{subfigure}{0.25\textwidth}
            \centering
            \includegraphics[width=\textwidth]{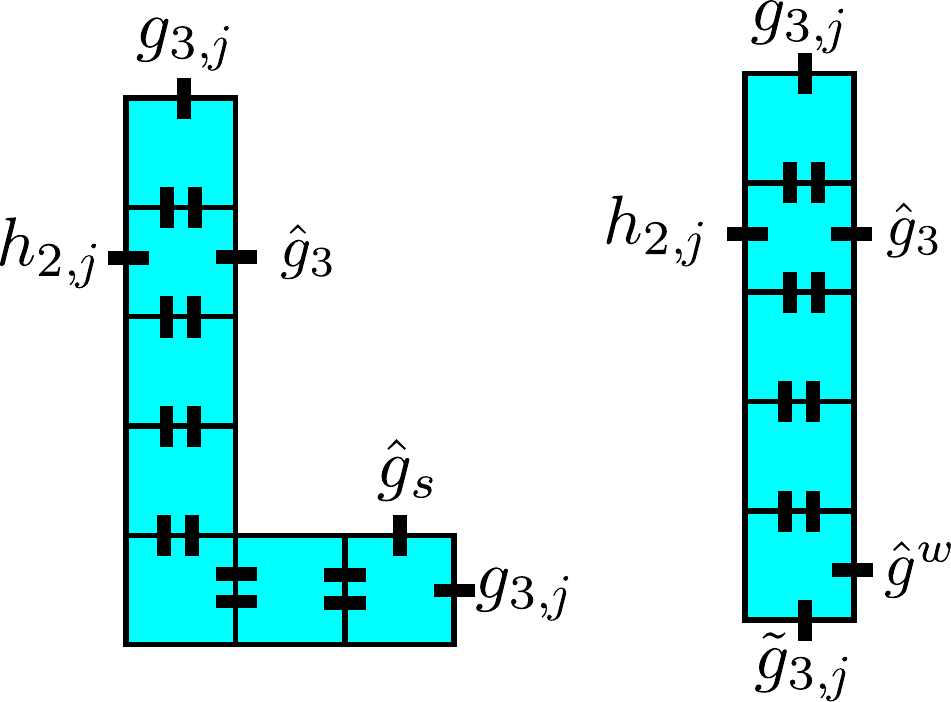}
            \caption{The supertiles that start the growth of \grout\ for $C^s_3$ for $s\geq 2$.}\label{fig:3-init}
          \end{subfigure}
                \qquad\qquad
\begin{subfigure}{0.25\textwidth}
            \centering
            \includegraphics[width=\textwidth]{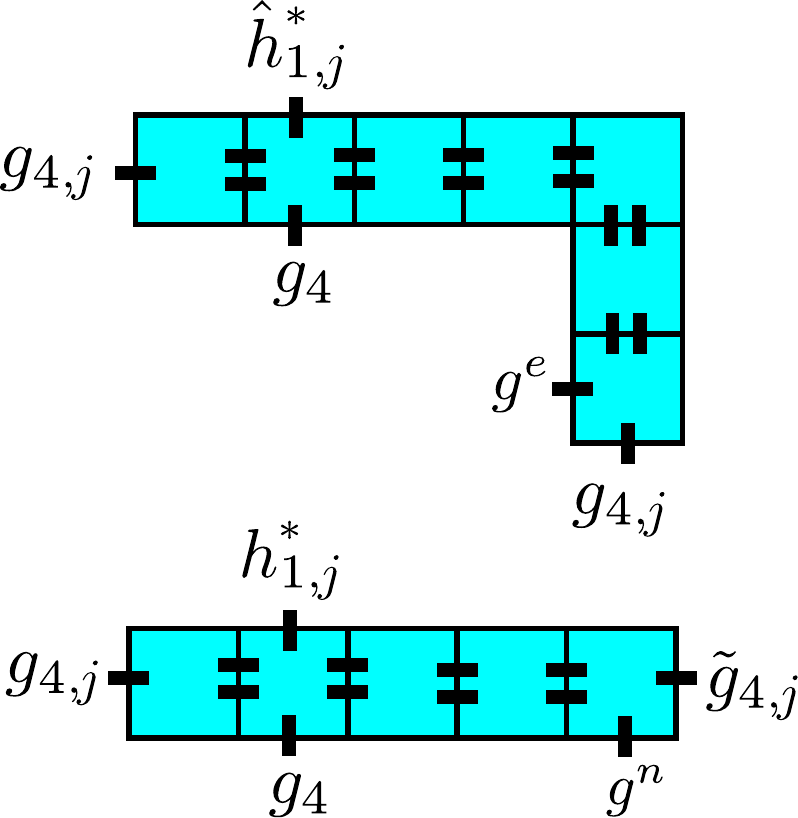}
            \caption{The supertiles that start the growth of \grout\ for $C^s_4$ for $s\geq 2$.}\label{fig:4-init}
          \end{subfigure}    
          \qquad\qquad
\begin{subfigure}{0.25\textwidth}
            \centering
            \includegraphics[width=\textwidth]{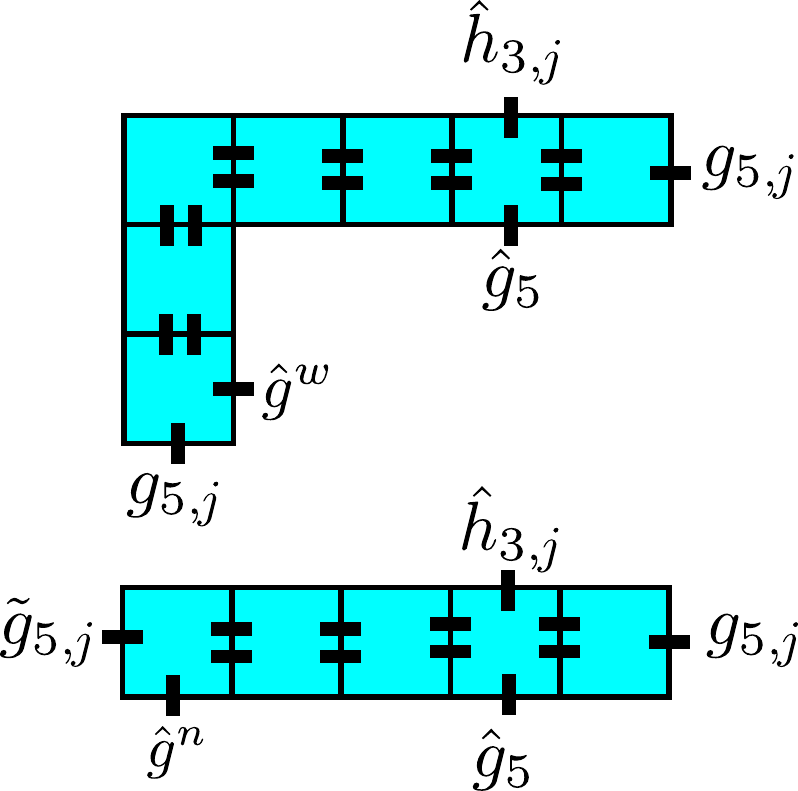}
            \caption{The supertiles that start the growth of \grout\ for $C^s_5$ for $s\geq 2$.}\label{fig:5-init}
          \end{subfigure}
       \caption{\starter\ tile types}\label{fig:inits345}
\end{figure}

\begin{figure}[htp]
    \centering
\begin{subfigure}{0.25\textwidth}
            \centering
            \includegraphics[width=\textwidth]{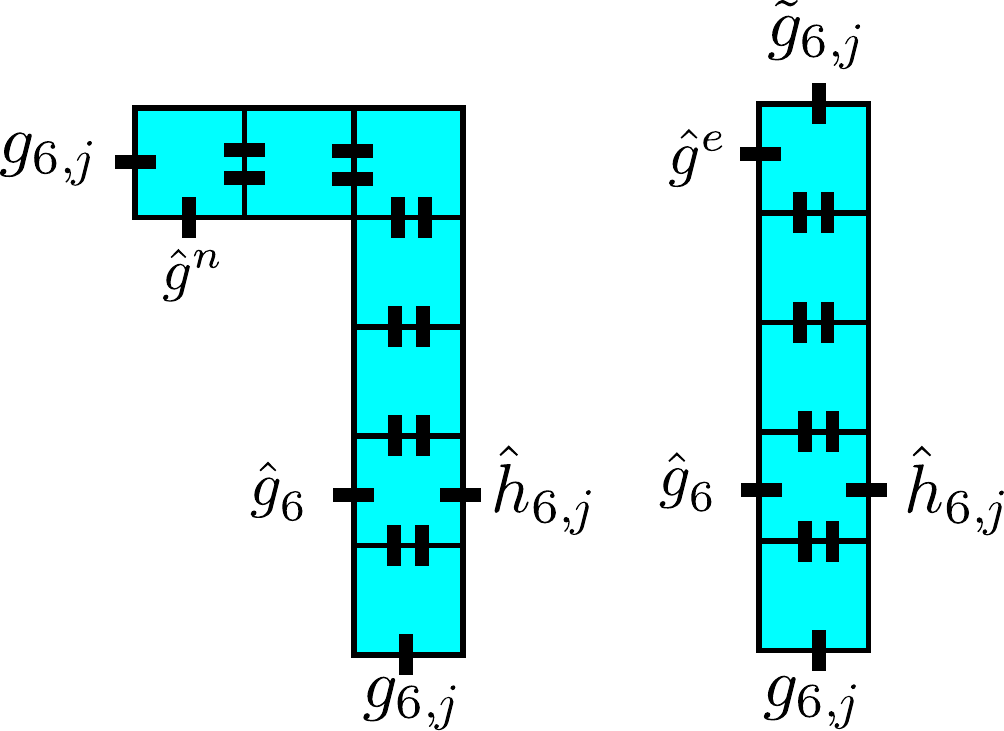}
            \caption{The supertiles that start the growth of \grout\ for $C^s_6$ for $s\geq 2$.}\label{fig:6-init}
          \end{subfigure}
                \qquad\qquad
\begin{subfigure}{0.25\textwidth}
            \centering
            \includegraphics[width=\textwidth]{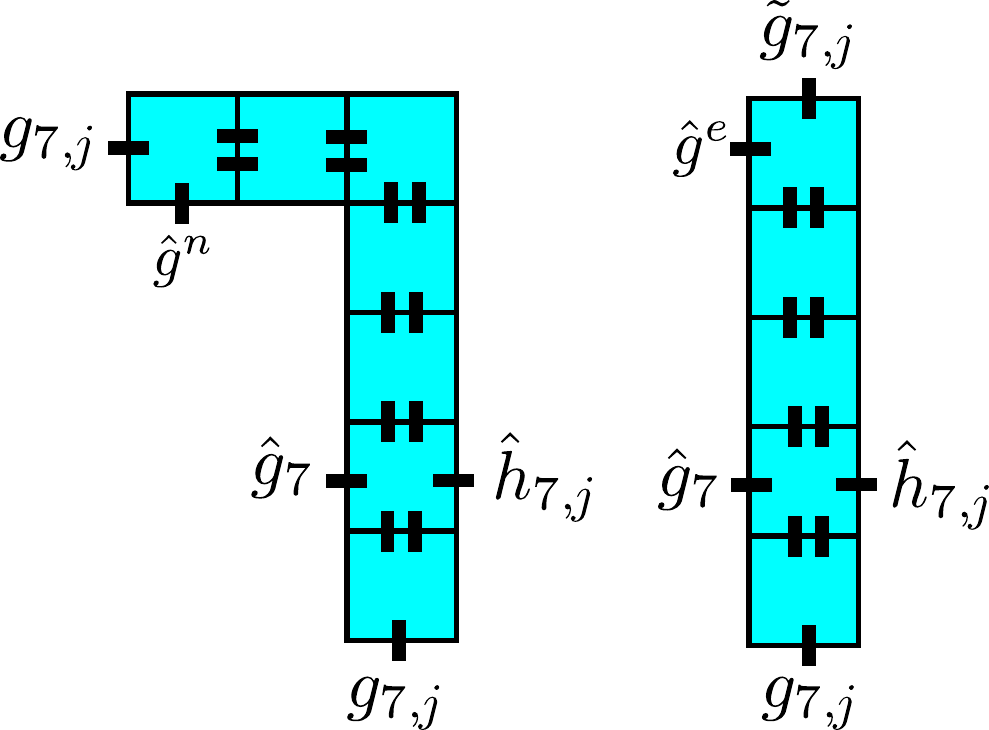}
            \caption{The supertiles that start the growth of \grout\ for $C^s_7$ for $s\geq 2$.}\label{fig:7-init}
          \end{subfigure}    
          \qquad\qquad
\begin{subfigure}{0.25\textwidth}
            \centering
            \includegraphics[width=\textwidth]{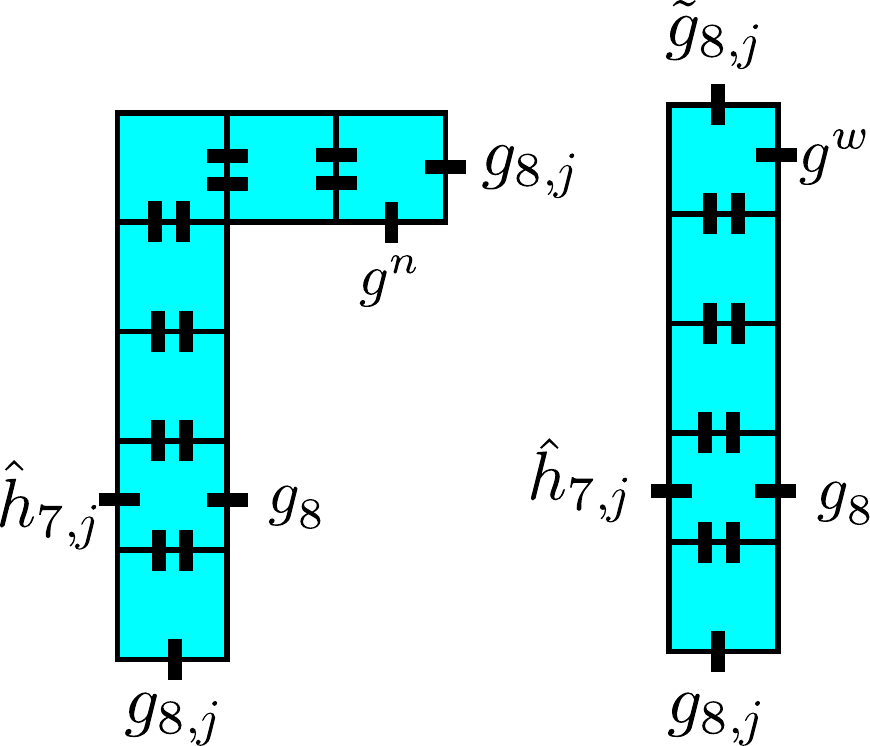}
            \caption{The supertiles that start the growth of \grout\ for $C^s_8$ for $s\geq 2$.}\label{fig:8-init}
          \end{subfigure}
      
    \caption{More \starter\ tile types}\label{fig:inits678}
\end{figure}

\subsection{\crawler\ tile types}

\begin{figure}[htp]
              \centering
              \includegraphics[width=0.6\textwidth]{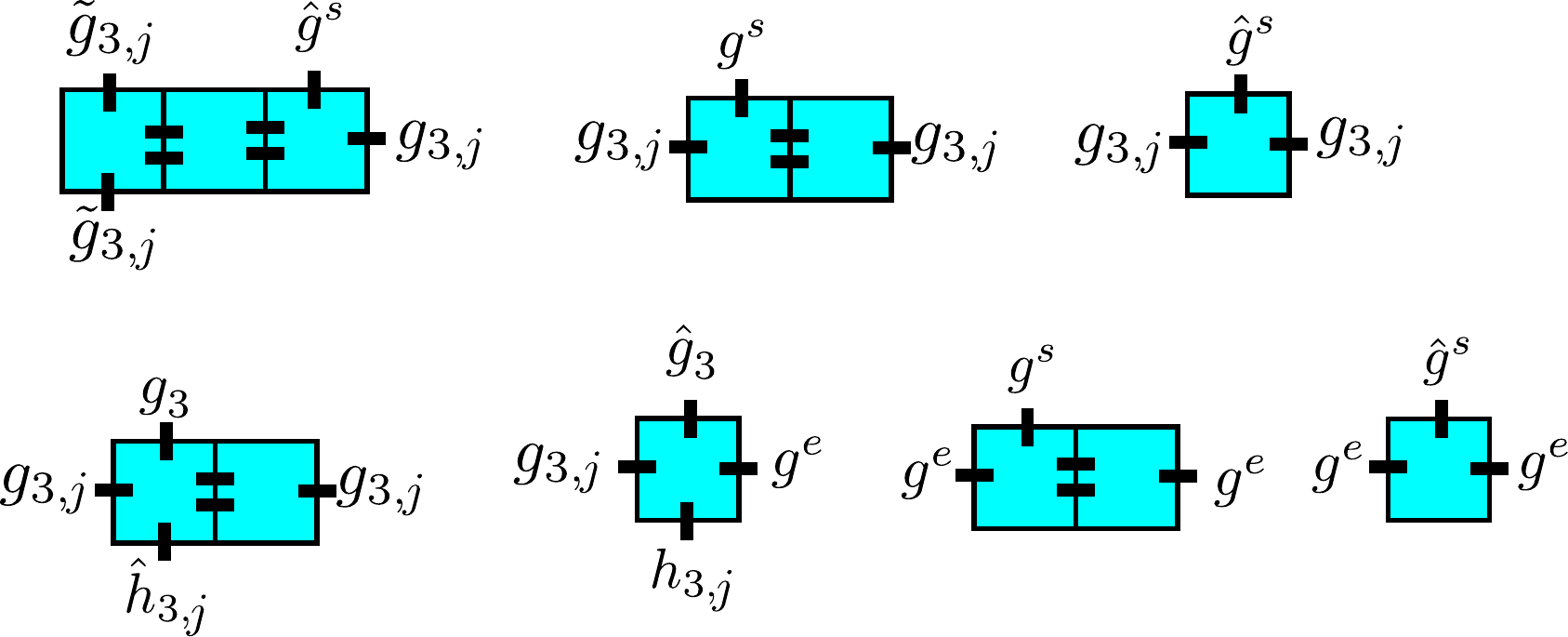}
              \caption{The tiles and supertiles that bind to the south side of $C^s_3$ for $s\geq 2$.}\label{fig:3-south}
            \end{figure}
\begin{figure}[htp]
              \centering
              \includegraphics[width=0.5\textwidth]{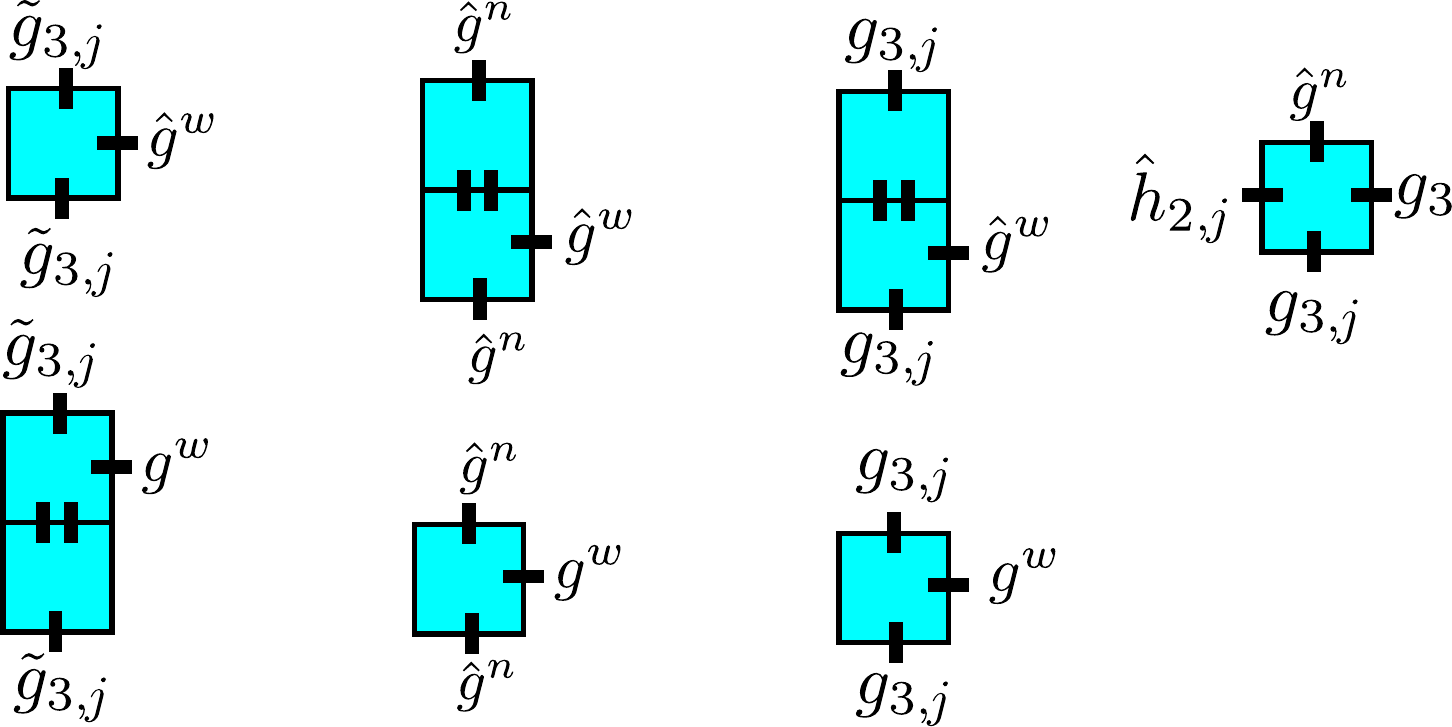}
              \caption{The tiles and supertiles that bind to the west side of $C^s_3$ for $s\geq 2$.}\label{fig:3-west}
            \end{figure}

\begin{figure}[htp]
              \centering
              \includegraphics[width=0.6\textwidth]{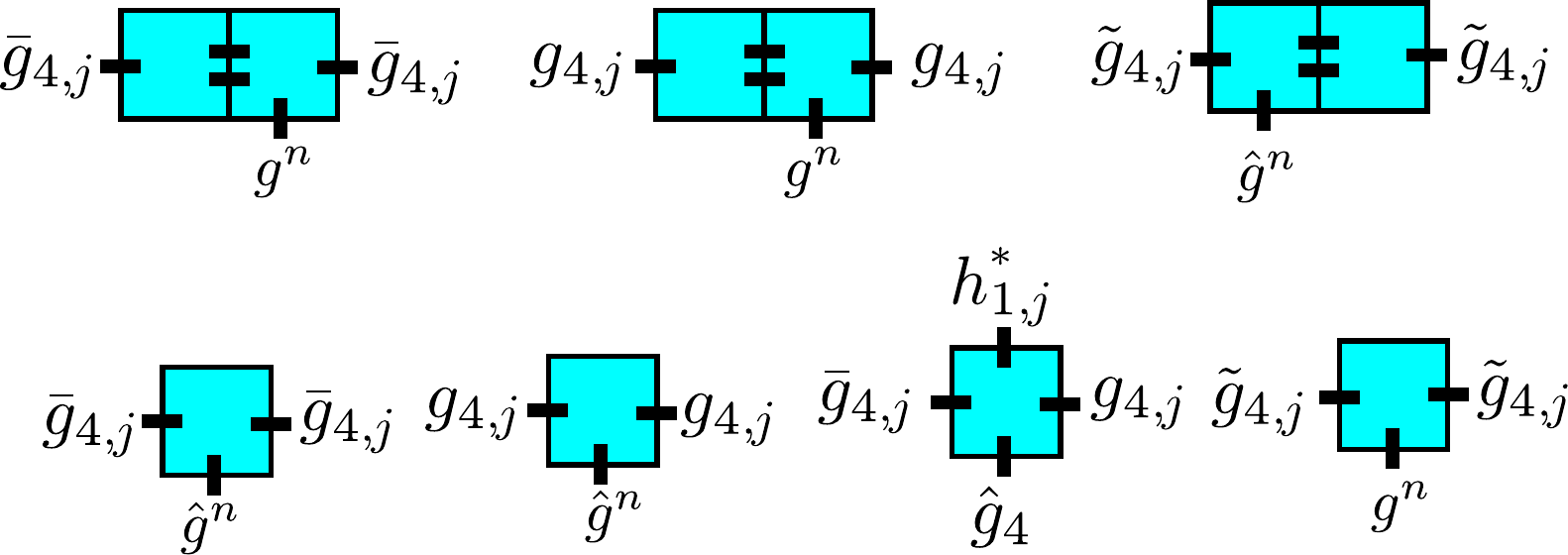}
              \caption{The tiles and supertiles that bind to the north side of $C^s_4$ for $s\geq 2$.}\label{fig:4-north}
            \end{figure}
\begin{figure}[htp]
              \centering
              \includegraphics[width=0.3\textwidth]{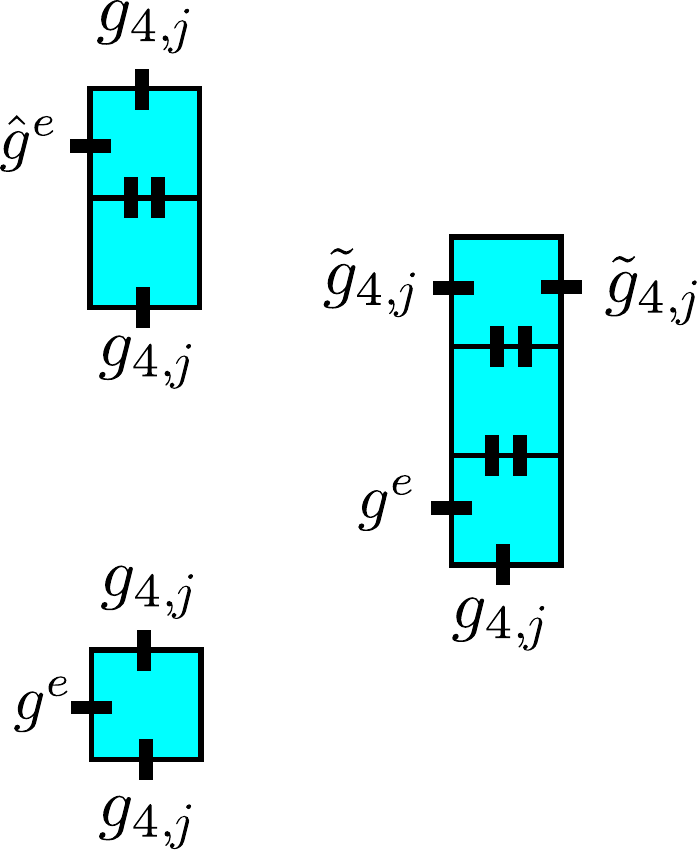}
              \caption{The tiles and supertiles that bind to the east side of $C^s_4$ for $s\geq 2$.}\label{fig:4-east}
            \end{figure}
\begin{figure}[htp]
              \centering
              \includegraphics[width=0.6\textwidth]{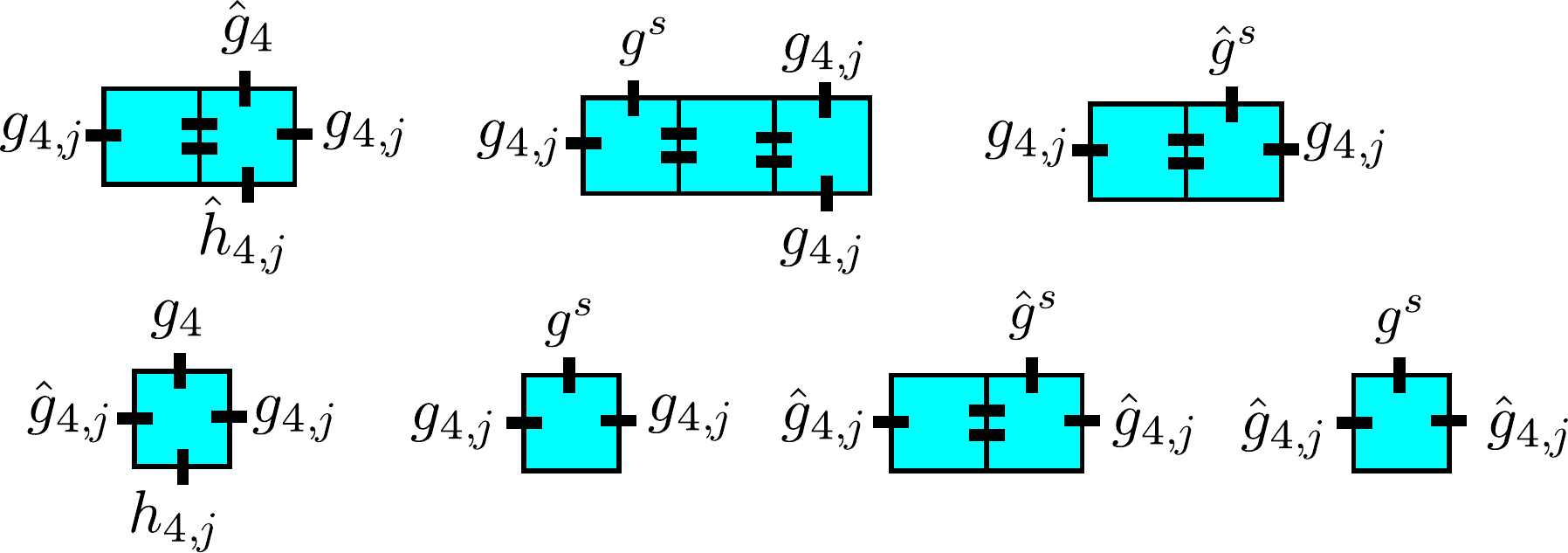}
              \caption{The tiles and supertiles that bind to the south side of $C^s_4$ for $s\geq 2$.}\label{fig:4-south}
            \end{figure}

\begin{figure}[htp]
              \centering
              \includegraphics[width=0.6\textwidth]{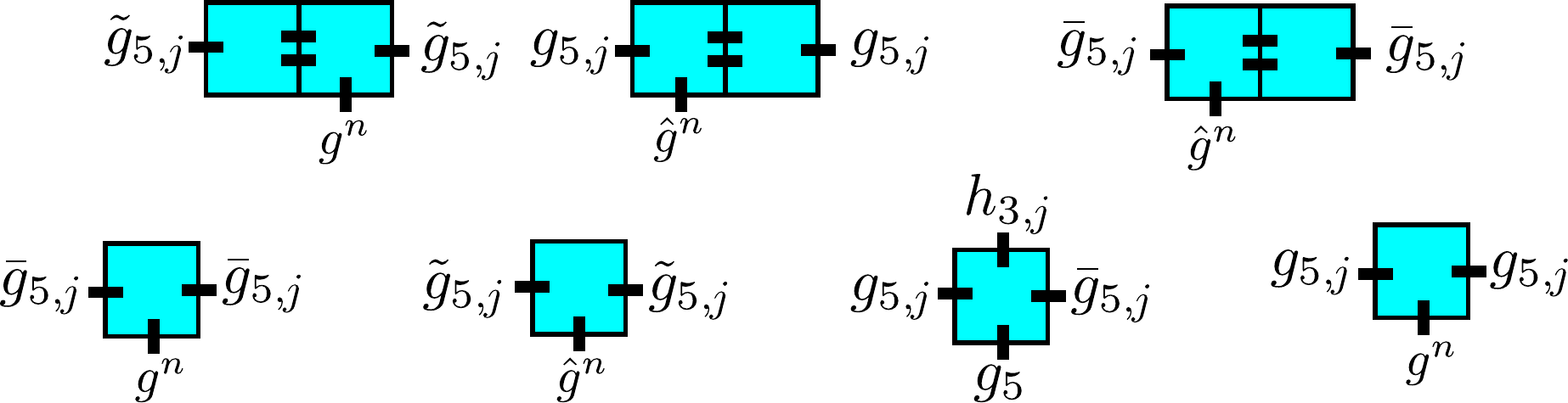}
              \caption{The tiles and supertiles that bind to the north side of $C^s_5$ for $s\geq 2$.}\label{fig:5-north}
            \end{figure}
\begin{figure}[htp]
              \centering
              \includegraphics[width=0.6\textwidth]{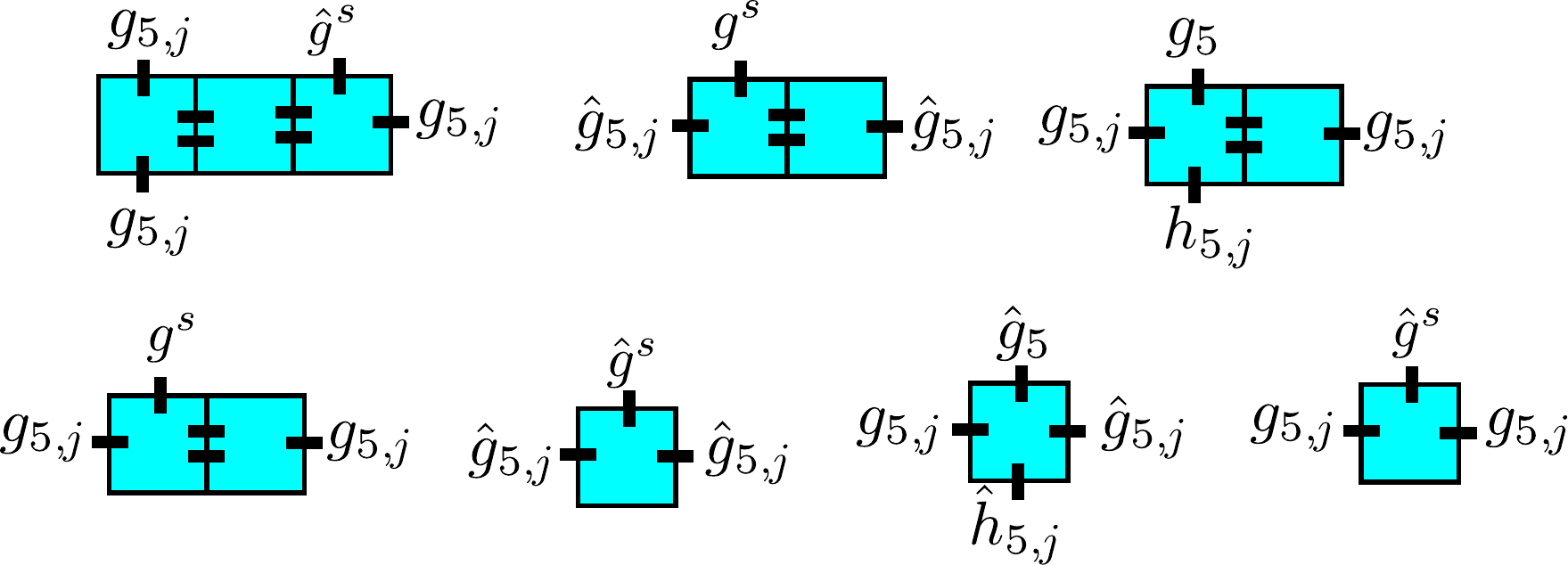}
              \caption{The tiles and supertiles that bind to the south side of $C^s_5$ for $s\geq 2$.}\label{fig:5-south}
            \end{figure}
\begin{figure}[htp]
              \centering
              \includegraphics[width=0.3\textwidth]{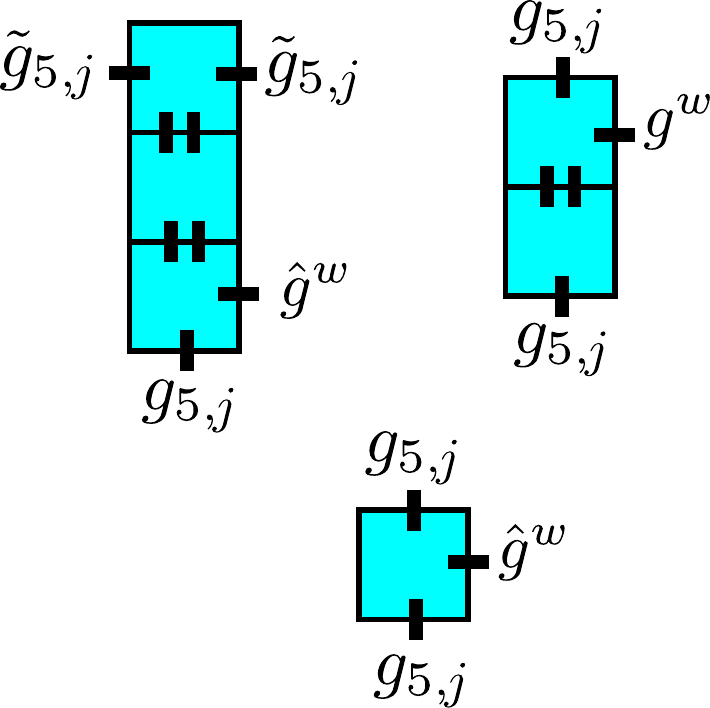}
              \caption{The tiles and supertiles that bind to the west side of $C^s_5$ for $s\geq 2$.}\label{fig:5-west}
            \end{figure}

\begin{figure}[htp]
              \centering
              \includegraphics[width=0.6\textwidth]{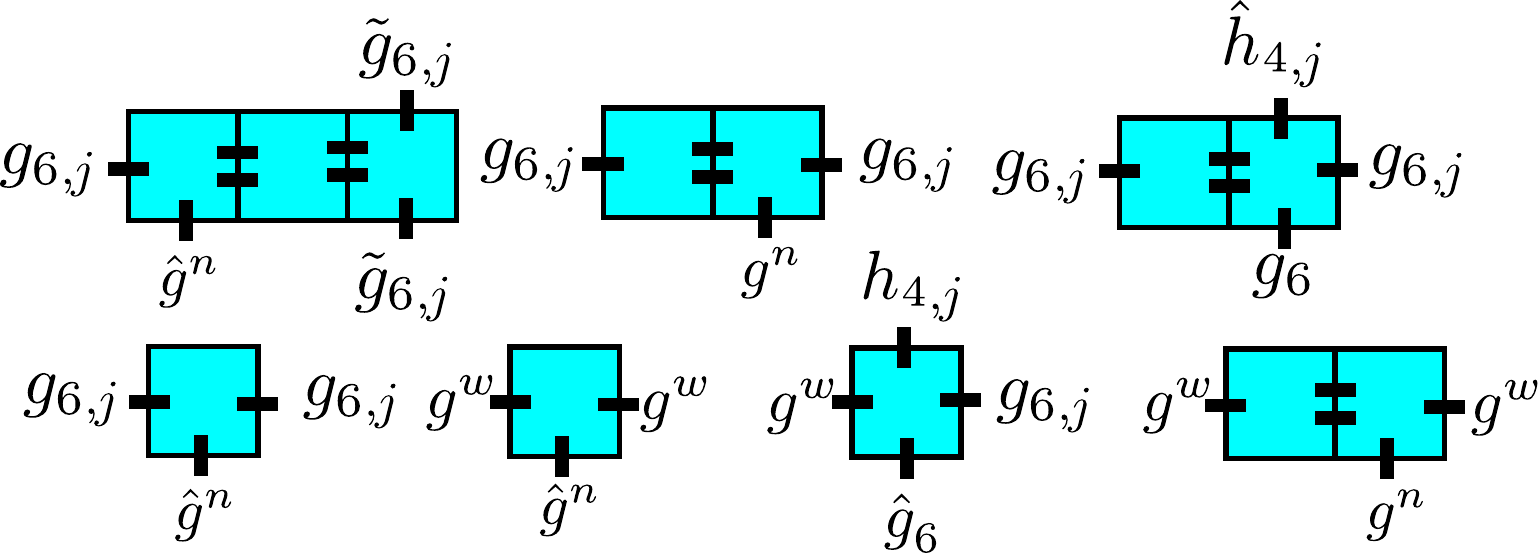}
              \caption{The tiles and supertiles that bind to the north side of $C^s_6$ for $s\geq 2$.}\label{fig:6-north}
            \end{figure}
\begin{figure}[htp]
              \centering
              \includegraphics[width=0.4\textwidth]{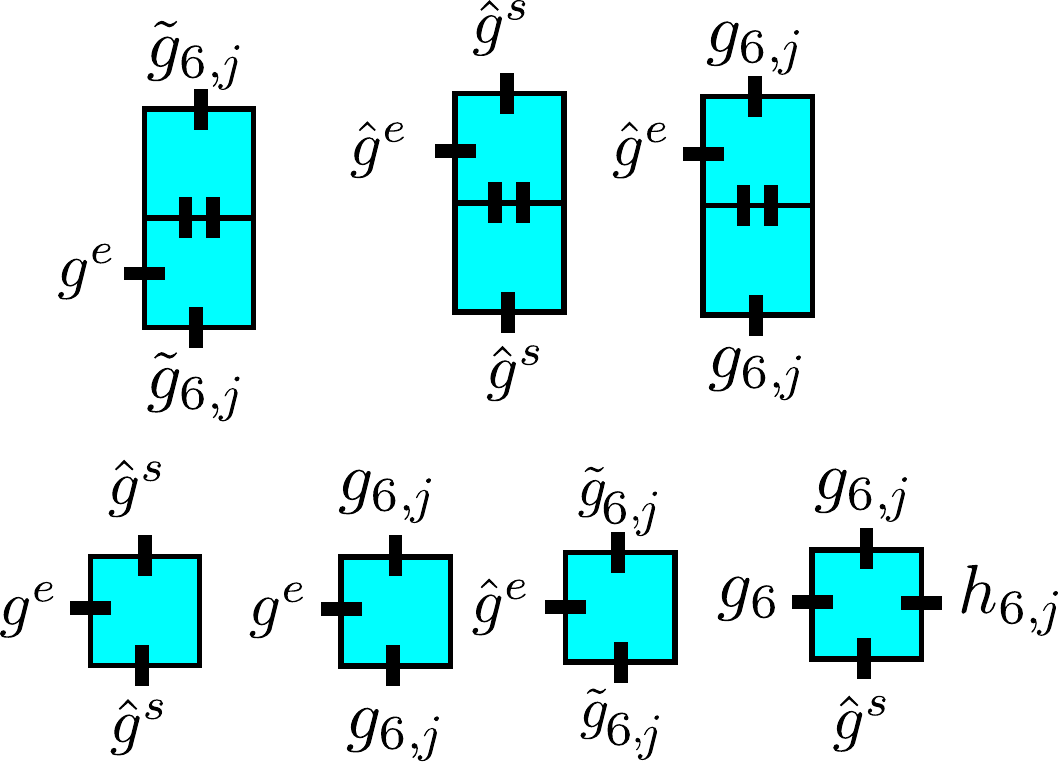}
              \caption{The tiles and supertiles that bind to the east side of $C^s_6$ for $s\geq 2$.}\label{fig:6-east}
            \end{figure}

\begin{figure}[htp]
              \centering
              \includegraphics[width=0.4\textwidth]{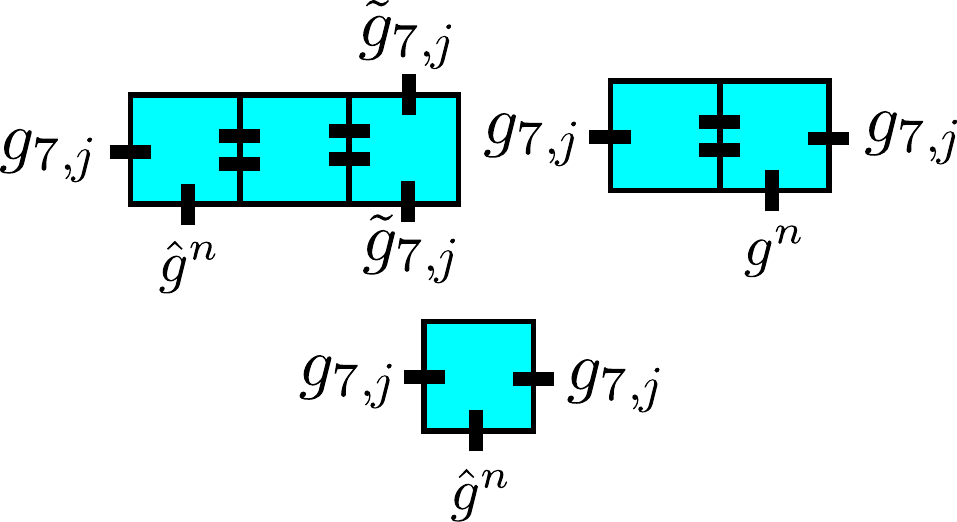}
              \caption{The tiles and supertiles that bind to the north side of $C^s_7$ for $s\geq 2$.}\label{fig:7-north}
            \end{figure}
\begin{figure}[htp]
              \centering
              \includegraphics[width=0.4\textwidth]{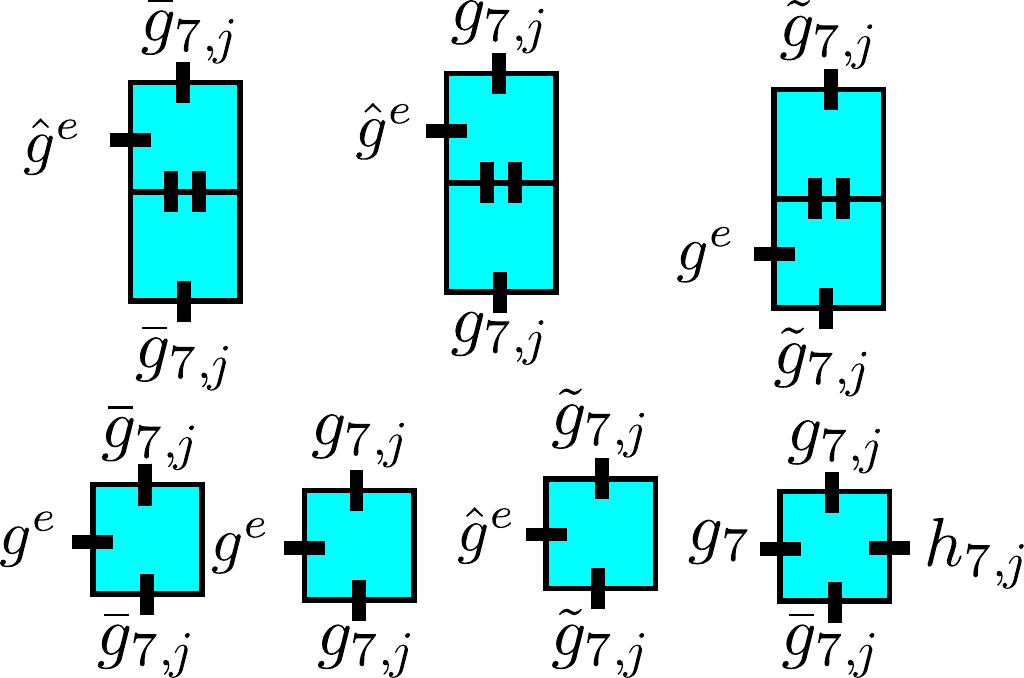}
              \caption{The tiles and supertiles that bind to the east side of $C^s_7$ for $s\geq 2$.}\label{fig:7-east}
            \end{figure}
\begin{figure}[htp]
              \centering
              \includegraphics[width=0.5\textwidth]{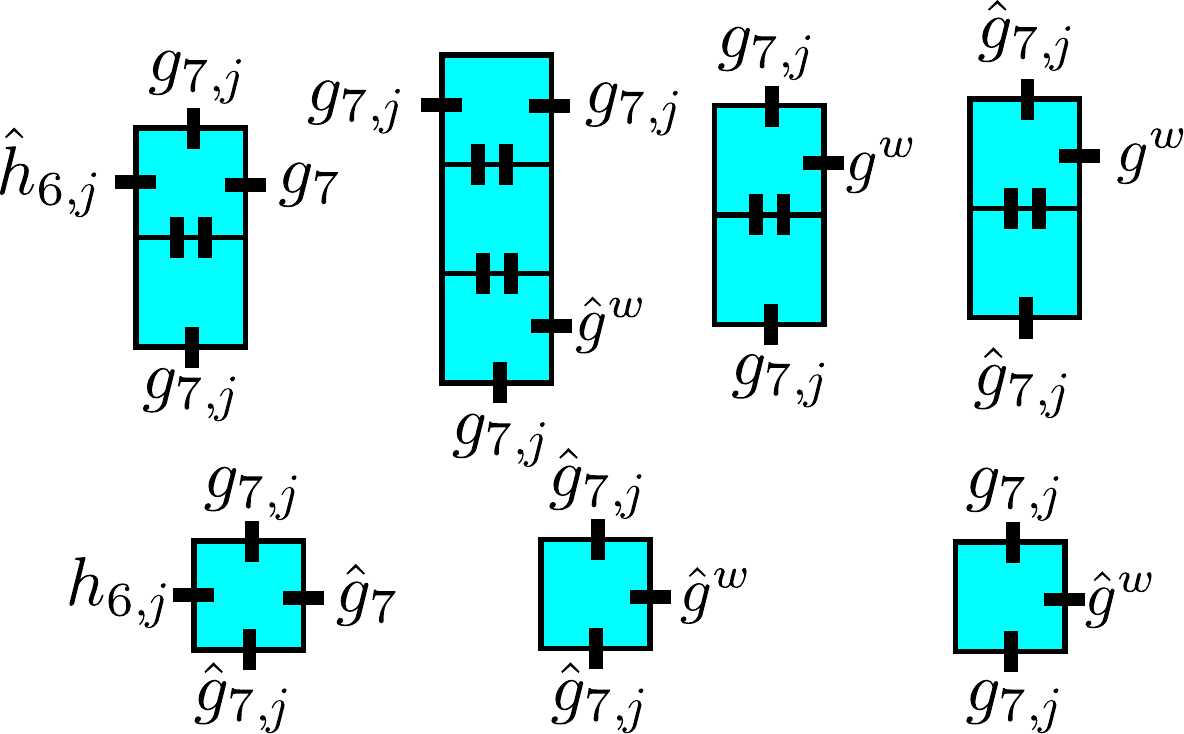}
              \caption{The tiles and supertiles that bind to the west side of $C^s_7$ for $s\geq 2$.}\label{fig:7-west}
            \end{figure}

\begin{figure}[htp]
              \centering
              \includegraphics[width=0.6\textwidth]{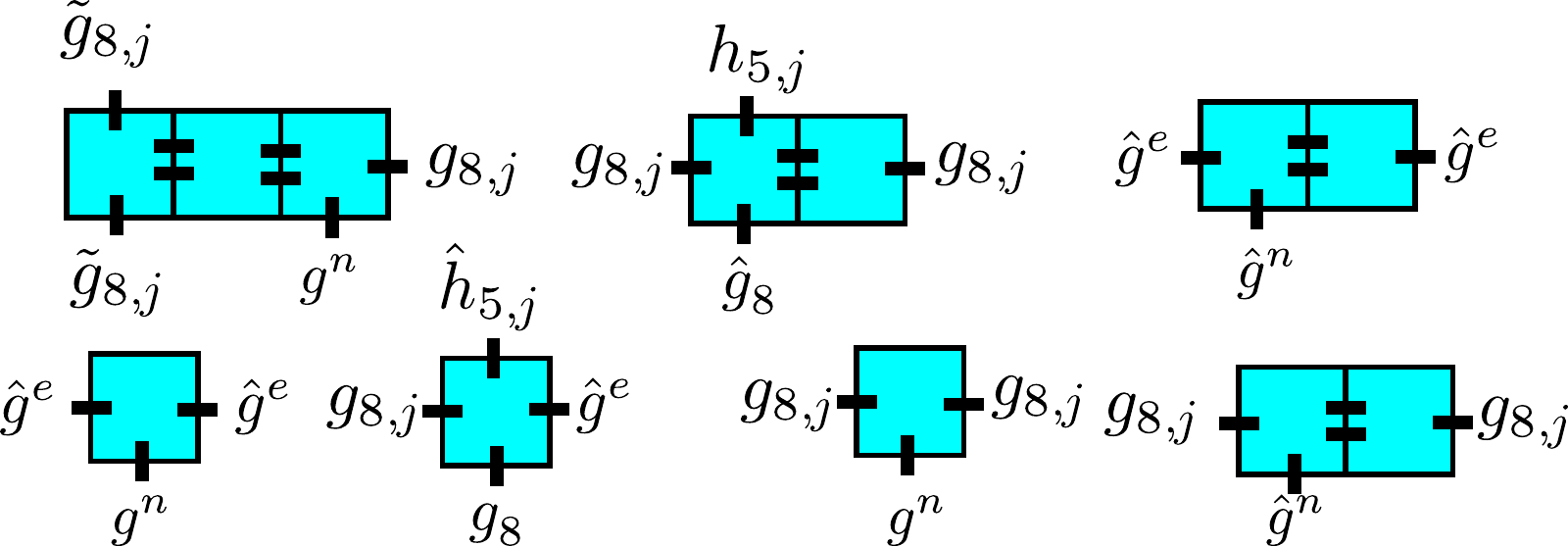}
              \caption{The tiles and supertiles that bind to the north side of $C^s_8$ for $s\geq 2$.}\label{fig:8-north}
            \end{figure}
\begin{figure}[htp]
              \centering
              \includegraphics[width=0.4\textwidth]{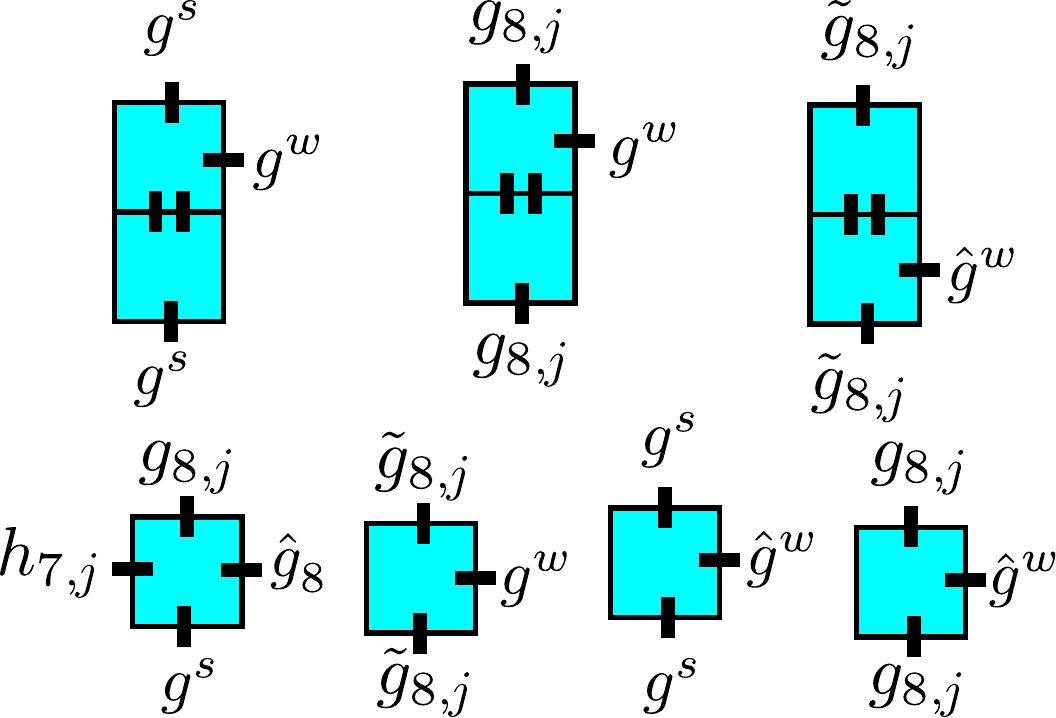}
              \caption{The tiles and supertiles that bind to the west side of $C^s_8$ for $s\geq 2$.}\label{fig:8-west}
            \end{figure}

\end{appendices}
\fi

\end{document}